\begin{document}

\newcommand{\mc}[1]{\mathcal{#1}}
\newcommand{\up}[1]{\textup{#1}}
\newcommand{\hbf}[1]{\hat{\mathbf{#1}}}
\newcommand{\obf}[1]{\overline{\mathbf{#1}}}
\newcommand{\bmc}[1]{\boldsymbol{\mathcal{#1}}}
\newcommand{\mf}[1]{\mathfrak{#1}}
\newcommand{\kbt}{\up{k}_B\up{T}}
\providecommand{\tabularnewline}{\\}

\newcommand{\Highlight}{\color{magenta}}
\newcommand{\Red}{\color{red}}
\newcommand{\Blue}{\color{blue}}
\newcommand{\Green}{\color{green}}

\theoremstyle{remark}
\newtheorem*{claim}{\textbf{Claim}}
\newtheorem*{remark}{\textbf{Remark}}

\theoremstyle{definition}
\newtheorem{defi}{Definition}
\newtheorem{definition}{Definition}
\newtheorem{theorem}{Theorem}
\newtheorem{thm}{Theorem}
\newtheorem{lemma}{Lemma}
\newtheorem{prop}{Proposition}
\newtheorem{corollary}{Corollary}

\makeatletter
\newcommand*{\rom}[1]{\expandafter\@slowromancap\romannumeral #1@}

\sloppy

\title{
Capacity of Continuous-Space Electromagnetic Channels with Lossy Transceiver
\footnote
{
W. Jeon and S.-Y. Chung are with the School of Electrical Engineering, KAIST, Daejeon, South Korea (e-mail: wonsjeon@kaist.ac.kr, sychung@ee.kaist.ac.kr). The material in this paper was presented in
part at IEEE ISIT 2013~\cite{jeon2013capacity}, ITA 2015~\cite{jeon2015noise} and  IEEE ISIT 2015~\cite{jeon2015improving}.
}
}
\author{
  \IEEEauthorblockN{Wonseok Jeon, \emph{Student Member, IEEE}, and Sae-Young Chung, \emph{Senior Member, IEEE}}
}

\maketitle

\begin{abstract}
In this paper, the capacity of continuous-space electromagnetic channels, where transceivers are confined in given lossy regions, is analyzed.
First of all, the regions confining the transceivers are assumed to be filled with dielectric, which is either lossy or lossless.
Then, for capacity analysis, we use the exact power consumption that takes into account the electromagnetic interaction between the field and the source.
In addition, the exact noise model followed from the fluctuation-dissipation theorem in thermodynamics is used at the receive side.
The contribution of our work is summarized as follows.
First, we characterize the channel capacity as a function of the size and the physical property of the regions confining the transceivers and analytically show how the radiation efficiency affects the capacity.
We also show that the outgoing channel at the transmit side and the incoming channel at the receive side are information-theoretically equivalent, and thus, the capacities of both channels are the same.
Additionally, the quality factor, which is inversely proportional to the bandwidth, is theoretically derived, and the relationship between the spatial degrees of freedom of the channel and the quality factor is analyzed.
Besides, we consider how the power consumption is affected by the backscattered waves and compare the recent experimental demonstration with our work by solving the gain-optimization problem with the constraint on the quality factor.
\end{abstract}

{
\keywords
Continuous-space electromagnetic channel, spatial correlation, fluctuation-dissipation theorem, electromagnetic channel capacity, quality factor, spatial degrees of freedom
}
%
%
\section{Introduction}

The fundamental limit on the information transmission using electromagnetic waves has long been a major interest in electromagnetic theory and information theory. 
Related works on such limit in recent years can be classified into either discrete-space analysis or continuous-space analysis.
First of all, in the discrete-space analysis, the point sources are usually assumed to form an array structure and the multiport network theory is mainly utilized to model and analyze the electromagnetic system. For example, there have been some studies on the impact of antenna mutual coupling~\cite{wallace2004mutual,ivrlac2009physical,ivrlac2010toward,ivrlac2011gaussian,muller2012channel,termos2016capacity} and antenna superdirectivity~\cite{morris2005superdirectivity,bikhazi2007relationship} on the information-theoretic capacity of electromagnetic channels. 
In contrast with the discrete-space analysis, the continuous-space analysis assumes that the source is continuously distributed inside a limited space called the source region.
For example, 
\cite{hanlen2003capacity,poon2005degrees,hanlen2006wireless,xu2006electromagnetic,migliore2006intuitive,migliore2006role,migliore2008electromagnetics,poon2011degree,poon2015does}
studied the effect of
the size of the source region on the spatial degrees of freedom (DoF)
by using the continuous-space analysis.

Meanwhile, the relationship between the physical loss of electromagnetic system and the channel capacity has been considered for both discrete-space and continuous-space approaches. 
In the discrete-space analysis, the circuit-theoretic loss resistances were assumed to be placed at each antenna port, and the resultant reduction of the channel capacity was derived~\cite{ivrlac2009physical,ivrlac2010toward}.
In the continuous-space analysis, the effect of loss on the channel capacity was considered by assuming the loss on the electromagnetic channel~\cite{jensen2008capacity}.
However, even though the actual loss of the system is deeply related to the loss of the medium at the transceivers, the impact of material loss on the channel capacity has not been analyzed in the literature. 

To address the above issue, we analyze the effect of lossy medium on the electromagnetic channel capacity by using the continuous-space approach.
We summarize some existing results on continuous-space electromagnetic channels as follows.
Poon \emph{et al.}~\cite{poon2005degrees} analyzed the relationship between the size of the source region and the spatial DoF by assuming linear, circular and spherical free-space source regions. 
Later, Poon and Tse~\cite{poon2011degree} extended the methodology of~\cite{poon2005degrees} to the vector antennas and considered the extra DoF from polarization diversity.
Hanlen and Fu~\cite{hanlen2006wireless} suggested the scatter channel model and analyzed the spatial DoF. 
Xu and Janaswamy~\cite{xu2006electromagnetic} considered the DoF of electromagnetic channels when the scattering occurs in a two-dimensional region and the current strength is restricted. 
Migliore~\cite{migliore2006role} theoretically analyzed the relationship between the DoF of electromagnetic channels and the effective DoF of multi-antenna channels. 
Jensen and Wallace~\cite{jensen2008capacity} suggested a new framework using the constraint on the radiation power and background noise and compared this new framework with the conventional framework that restricts the current strength and uses the i.i.d field noise. 
\cite{jensen2008capacity} extended the research on the superdirectivity in discrete-space approach~\cite{morris2005superdirectivity} to the continuous-space approach. 
Also, the authors of \cite{jensen2008capacity} decomposed the electromagnetic channels into multiple independent sub-channels and assumed the artificial loss on each sub-channel. Then, the authors considered how those artificial loss affect the channel capacity. 
Gruber and Marengo~\cite{gruber2008new} mathematically derived the channel capacity when the source constraint is given for both the radiation power and the current strength.
In addition, the channel capacity was analyzed in~\cite{gruber2008new} by comparing the narrowband and the broadband scenarios. 
Recently, Poon and Tse~\cite{poon2015does} used the radiation power constraint and considered the relationship between the fractional bandwidth and the channel capacity.

Compared to the existing works above, the framework of our work is described as follows.
First of all, we assume the medium of the regions confining the transceivers as dielectric whose characteristic can be described by electric permittivity.
In addition, the noise model in our work assumes the i.i.d. charge fluctuation, whereas the conventional works mainly assumed the i.i.d. field fluctuation or background noise~\cite{xu2006electromagnetic,jensen2008capacity,gruber2008new,poon2015does}.
Such noise model in our work is followed exactly from the fluctuation-dissipation theorem in thermodynamics, which relates the loss of the physical system to the statistical property of the thermal noise.
Also, we use the exact power consumption that considers the electromagnetic interaction between the field and the source, whereas others mainly restricted the current strength or the radiation power.

We summarize the contribution of this paper as follows.
First, we characterize the capacity of continuous-space electromagnetic channels by considering the physical property of the regions confining the transceivers.
As a result, the channel capacity can be represented as a function of both the size and the physical property of the confining regions.
Also, we show how the radiation efficiency affects the channel capacity.
Second, we show that the outgoing channel at the transmit side and the incoming channel at the receive side are information-theoretically equivalent, and thus, the capacities of both channels are the same.
Note that a similar equivalence can also be found in~\cite{jensen2008capacity}, where the radiation power constraint is used and the isotropic background noise is assumed.
However, the equivalence in~\cite{jensen2008capacity} differs from ours since we use the constraint on the actual power consumption and the exact thermal noise due to the material loss that occurs internally at the receiver. 
Third, we derive the quality factor that is inversely proportional to the bandwidth and numerically analyze the spatial DoF of the channel under the constraint on the maximum quality factor.
Besides, we consider the impact on the near-field backscattering on the power consumption and solved the gain-optimization problem by restricting the maximum quality factor and compared our result to the recent experimental work~\cite{krasnok2014experimental}, which uses dielectric resonators and achieves high directivity and high efficiency with practically usable bandwidth.

The remainder of this paper is organized as follows.
In Section \rom{2}, some preliminaries on the electromagnetic channel, i.e., the dyadic Green function, the exact power consumption of the source, the noise model followed from the fluctuation-dissipation theorem and the physical definition of the reactive near-field region are introduced. 
In Section \rom{3}, we analyze the capacity of two different channels, i.e., the forward channel, which is the outgoing channel at the transmit side, and the reverse channel, which is the incoming channel at the receive side.
In Section \rom{4}, the quality factor is derived and the effect of backscattering on the power consumption is considered.
In Section \rom{5}, numerical results and the comparison with the existing works are given.
Finally, we conclude our paper in Section \rom{6}.

\textbf{Notation: }
In this paper, boldface letters are used for vectors or field quantities ($\mathbf{J}, \mathbf{E}, \mathbf{X},...$), and overlined boldface letters are used for matrices or operators ($\obf{G}$, $\obf{H}$, ...).
The superscript `$*$', `$H$', `$T$' denote element-wise complex conjugate (or complex conjugate for scalar quantity), conjugate-transpose and transpose, respectively.
$\Re\{A\}\triangleq(A+A^*)/2$ and $\Im\{A\}\triangleq(A-A^*)/2i$ are the real part and the imaginary part of the scalar quantity $A$, respectively. 
Similarly, $\Re\{\obf{K}\}\triangleq(\obf{K}+\obf{K}^*)/2$ and $\Im\{\obf{K}\}\triangleq(\obf{K}-\obf{K}^*)/2i$ are the real part and the imaginary part of the matrix $\obf{K}$, respectively. 
Throughout the paper, we assume the narrowband communication with steady-state variation $\exp(-i\omega t)$, where $\omega$ is the radial frequency. Also, let $k'\triangleq\Re\{k\}$ and $k''\triangleq\Im\{k\}$ for any wave number $k$ of the medium.

\section{Preliminaries}\label{sec_preliminaries}

In this section, some preliminaries on the electromagnetic theory are given, which are needed to analyze the electromagnetic channel.
We first introduce the dyadic Green function, which relates the current density to the electric field, and its decomposition.
Second, the power consumption and the statistical property of noise followed from the fluctuation-dissipation theorem are given.
Third, we give the definition of near-field and far-field regions and their property at the end of this section.
Finally, we introduce how our framework is related to the framework in discrete-space analysis.

\subsection{The dyadic Green function and its decomposition}\label{sec_preliminaries_dyadic}

In electromagnetic theory, the dyadic Green function (DGF) $\obf{G}$ is the kernel, which relates the current density $\mathbf{J}$ and electric field $\mathbf{E}$ as
\begin{align}
\mathbf{E}(\mathbf{r})=
i\omega\mu(\mathbf{r})\int\obf{G}(\mathbf{r},\mathbf{r}')\mathbf{J}(\mathbf{r}')d\mathbf{r}',
\label{sec_preliminaries_dyadic_1}
\end{align}
where $\omega$ is the radial frequency and the permeability at $\mathbf{r}$ is equal tothe free-space permeability $\mu_0$\footnote{In this paper, we only consider the dielectric media, which satisfies the condition $\mu(\mathbf{r})=\mu_0$.}.
The decomposition of DGF is given in~\cite{chew1995waves,li1994electromagnetic}, and we summarize it as follows:
The spherical vector waves, which are used for the decomposition of $\obf	{G}$, are defined as
\begin{align}
\mathbf{U}_{nm1}(k,\mathbf{r})
&\triangleq\nabla\times\mathbf{r}h_n^{(1)}(kr)Y_{nm}(\theta,\phi),\\
\mathbf{V}_{nm1}(k,\mathbf{r})
&\triangleq\nabla\times\mathbf{r}j_n(kr)Y_{nm}(\theta,\phi),\\
\mathbf{W}_{nm1}(k,\mathbf{r})
&\triangleq\nabla\times\mathbf{r}y_n(kr)Y_{nm}(\theta,\phi),\\
\mathbf{U}_{nm2}(k,\mathbf{r})
&\triangleq\frac{1}{k}\nabla\times\mathbf{U}_{nm1}(k,\mathbf{r}),\\
\mathbf{V}_{nm2}(k,\mathbf{r})
&\triangleq\frac{1}{k}\nabla\times\mathbf{V}_{nm1}(k,\mathbf{r}),\\
\mathbf{W}_{nm2}(k,\mathbf{r})
&\triangleq\frac{1}{k}\nabla\times\mathbf{W}_{nm1}(k,\mathbf{r})
\end{align}
for integers $n\ge1, -n\le m \le n, l=1,2$,  where $k\triangleq\omega\sqrt{\mu\epsilon}$ is the wavenumber for the permittivity $\epsilon$ and the permeability $\mu$ of the medium, $j_n,y_n,h_n^{(1)}$ are the spherical Bessel function of the first, the second and the third kind, respectively, and $Y_{nm}$ is the spherical harmonics.
Note that the value of $k$ is a complex number in general depending on the property of medium.
Similarly, the conjugate wave functions are defined as 
\begin{align}
\mathbf{U}_{nm1}^\star(k,\mathbf{r})
&\triangleq
\nabla\times \mathbf{r}h_n^{(1)}(kr)^*Y_{nm}(\theta,\phi),\\
\mathbf{V}_{nm1}^\star(k,\mathbf{r})
&\triangleq
\nabla\times \mathbf{r}j_n(kr)^*Y_{nm}(\theta,\phi),\\
\mathbf{W}_{nm1}^\star(k,\mathbf{r})
&\triangleq
\nabla\times \mathbf{r}y_n(kr)^*Y_{nm}(\theta,\phi),\\
\mathbf{U}_{nm2}^\star(k,\mathbf{r})
&\triangleq
\frac{1}{k^*}\nabla\times \mathbf{U}_{nm1}^\star(k,\mathbf{r}),\\
\mathbf{V}_{nm2}^\star(k,\mathbf{r})
&\triangleq
\frac{1}{k^*}\nabla\times \mathbf{V}_{nm1}^\star(k,\mathbf{r}),\\
\mathbf{W}_{nm2}^\star(k,\mathbf{r})
&\triangleq
\frac{1}{k^*}\nabla\times \mathbf{W}_{nm1}^\star(k,\mathbf{r})
\end{align}
for all $n,m,l$. We explicitly derive the spherical vector waves and their properties in Appendix~\ref{sec_appendix_definitions}.
\begin{figure}[t]
\begin{center}
\begin{tikzpicture}
\node at (0,0) {\includegraphics[page=1,height=2in]{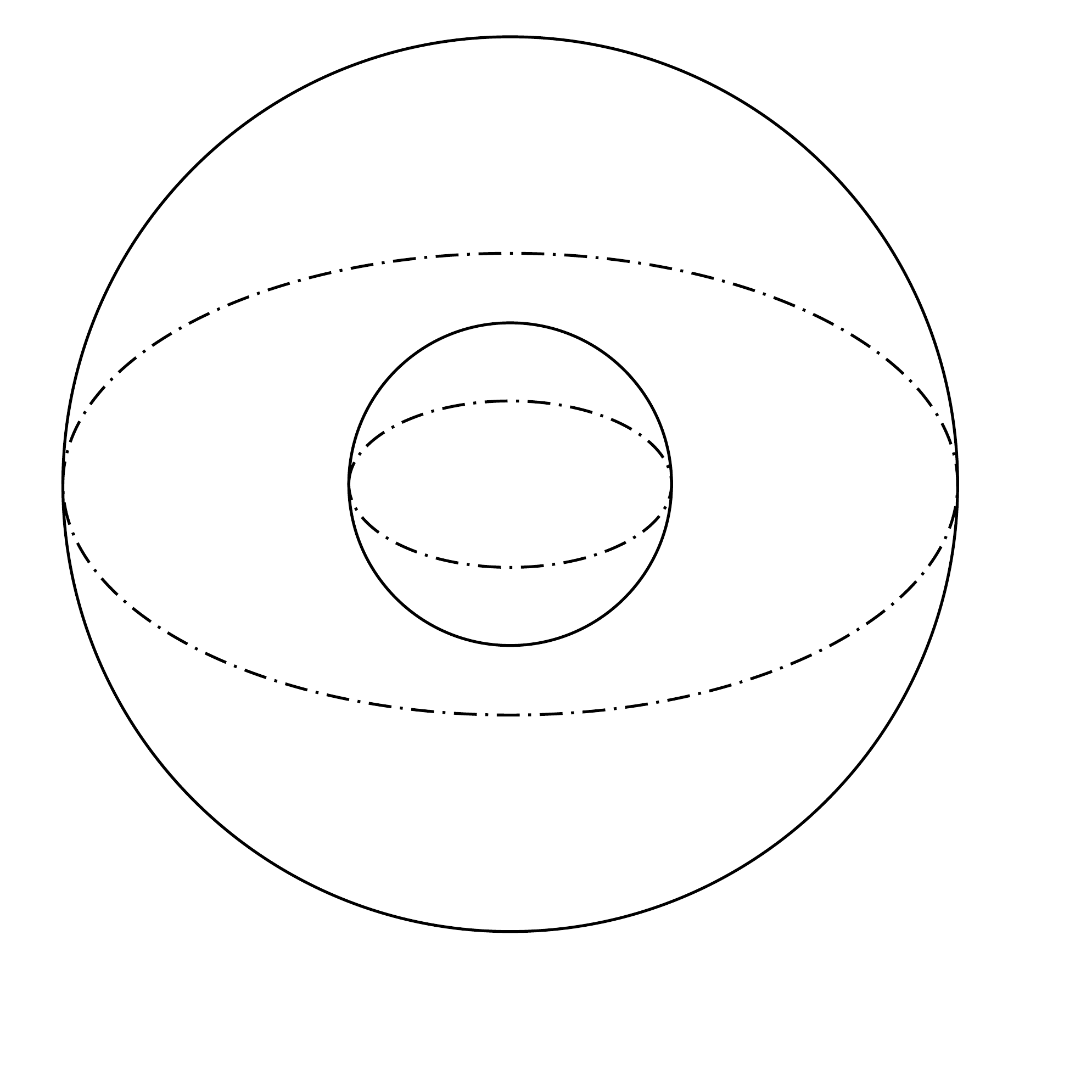}};
\node at (-0.8in,0.8in) {$S$};
\node at (-0.3in,0.3in) {$V$};
\end{tikzpicture}
\vspace{-0.2in}
\caption{Two concentric spheres $V$ and $S$}\label{figure1}
\end{center}
\vspace{-0.4in}
\end{figure}

Now, assume there are concentric spheres $V$ and $S$ with radius $R_1$ and $R_2>R_1$, respectively (Fig.~\ref{figure1}).
In addition, the wave numbers of the region $V$ and the region $V^C$, the outside of $V$, are assumed to be $k_1\triangleq\omega\sqrt{\mu_0\epsilon_1}$ and $k_2=k_0\triangleq\omega\sqrt{\mu_0\epsilon_0}$, where $\epsilon_1$ and $\epsilon_0$ are the permittivity of $V$ and the free-space permittivity, respectively. 
We define the normalized vector wave functions as
\begin{align}
\mathbf{v}_{nml}(k,\mathbf{r})
\triangleq
\frac{\mathbf{V}_{nml}(k,\mathbf{r})}{\mathcal{N}_{V,\mathbf{V}_{nml}}(k)},
\mathbf{v}_{nml}^\star(k,\mathbf{r})
\triangleq
\frac{\mathbf{V}_{nml}^\star(k,\mathbf{r})}{\mathcal{N}_{V,\mathbf{V}_{nml}^\star}(k)},
\mathbf{u}_{nml}(k,\mathbf{r})
\triangleq
\frac{\mathbf{U}_{nml}(k,\mathbf{r})}{\mathcal{N}_{S,\mathbf{U}_{nml}}(k)},
\mathbf{u}_{nml}^\star(k,\mathbf{r})
\triangleq
\frac{\mathbf{U}_{nml}^\star(k,\mathbf{r})}{\mathcal{N}_{S,\mathbf{U}_{nml}^\star}(k)},
\end{align}
where the normalization coefficients are
\begin{align}
\mathcal{N}_{V,\mathbf{V}_{nml}}(k)
&\triangleq 
\sqrt{\left\langle\mathbf{V}_{nml}(k,\cdot),\mathbf{V}_{nml}(k,\cdot)\right\rangle_V}=
\mathcal{N}_{V,\mathbf{V}_{nml}^\star}(k)
,\forall n,m,l\\
\mathcal{N}_{S,\mathbf{U}_{nml}}(k)
&\triangleq
\sqrt{\left\langle\mathbf{U}_{nml}(k,\cdot),\mathbf{U}_{nml}(k,\cdot)\right\rangle_S}=
\mathcal{N}_{S,\mathbf{U}_{nml}^\star}(k),\forall n,m,l,
\end{align}
and the inner products between two spherical vector waves $\mathbf{F}_1$ and $\mathbf{F}_2$ are
\begin{align}
&\left\langle
\mathbf{F}_1(k,\cdot),\mathbf{F}_2(k,\cdot)
\right\rangle_V\triangleq
\int_V\mathbf{F}_1(k,\mathbf{r})^H \mathbf{F}_2(k,\mathbf{r}) d \mathbf{r},\\
&\left\langle
\mathbf{F}_1(k,\cdot),\mathbf{F}_2(k,\cdot)
\right\rangle_S\triangleq
\int\mathbf{F}_1(k,(R_S,\Omega))^H\mathbf{F}_2(k,(R_S,\Omega))d\Omega,
\end{align}
for the angular position $\Omega\triangleq(\theta,\phi)$. 
The normalization coefficients are derived in Appendix~\ref{sec_appendix_definitions} and they are independent of $m$ and only depend on $(n,l)$. 
Thus, let $\mathcal{N}_{n,l}^S\triangleq \mathcal{N}_{S,\mathbf{U}_{n,0,l}}(k_0), \mathcal{N}_{n,l}^V\triangleq \mathcal{N}_{V,\mathbf{V}_{n,0,l}}(k_1)$.
By using the vector wave functions and the conjugate vector wave functions, the dyadic Green function $\obf{G}$ can be decomposed as
\begin{align}
\obf{G}(\mathbf{r},\mathbf{r}')
=
ik(\mathbf{r}')
\sum_{n,l}\frac{1}{n(n+1)}\obf{g}_{nl}(\mathbf{r},\mathbf{r}')-\frac{\hat{\mathbf{r}}\hat{\mathbf{r}}^T}{k(\mathbf{r}')^2}\delta(\mathbf{r}-\mathbf{r}'),
\end{align}
where $\hbf{r}\triangleq\mathbf{r}/\left\|\mathbf{r}\right\|$ and $\mathbf{g}_{nl}(\mathbf{r},\mathbf{r}')$ is defined as follows\footnote{The case for $\mathbf{r},\mathbf{r}'\in V^C$ is not given since the decomposition for that case is not used in this paper.}:
for $\mathbf{r},\mathbf{r}'\in V$,
\begin{align}
\obf{g}_{nl}(\mathbf{r},\mathbf{r}')=
\begin{cases}
\sum_{m}
\left[\mathbf{U}_{nml}(k_1,\mathbf{r})+\mathcal{R}_{n,l}\mathbf{V}_{nml}(k_1,\mathbf{r})\right]
\mathbf{V}_{nml}^\star(k_1,\mathbf{r}')^H,
&\text{~if~}r\ge r',\\
\sum_{m}
\mathbf{V}_{nml}(k_1,\mathbf{r})
\left[\mathbf{U}_{nml}^\star(k_1,\mathbf{r}')^H+\mathcal{R}_{n,l}\mathbf{V}_{nml}^\star(k_1,\mathbf{r}')^H\right],
&\text{~if~}r\le r',
\end{cases}
\end{align}
for $\mathbf{r}\in V^C$ and $\mathbf{r}'\in V$, 
\begin{align}
\obf{g}_{nl}(\mathbf{r},\mathbf{r}')
&=
\mathcal{T}_{n,l}(k_1)
\sum_{m}
\mathbf{U}_{nml}(k_0,\mathbf{r})
\mathbf{V}_{nml}^\star(k_1,\mathbf{r}')^H,
\end{align}
and for $\mathbf{r}\in V$ and $\mathbf{r}'\in V^C$, 
\begin{align}
\obf{g}_{nl}(\mathbf{r},\mathbf{r}')
&=
\mathcal{C}(k_1)\mathcal{T}_{n,l}(k_1)
\sum_{m}
\mathbf{V}_{nml}(k_1,\mathbf{r})
\mathbf{U}_{nml}^\star(k_0,\mathbf{r}')^H
\end{align}
for $\mathcal{C}(k)=k/k_0$.
Here, $\mathcal{R}_{n,l}$ and $\mathcal{T}_{n,l}$ are the scattering coefficients defined in~\cite{chew1995waves,li1994electromagnetic}, and we derived those coefficients as 
\begin{align}
\mathcal{R}_{n,1}(k)
&=
\frac{
\mathcal{C}
\hat{H}_n^{(1)}(z)
\hat{H}_n^{(1)'}(\mathcal{C}z)
-
\hat{H}_n^{(1)'}(z)
\hat{H}_n^{(1)}(\mathcal{C}z)
}
{
\hat{J}_n(\mathcal{C}z)
\hat{H}_n^{(1)'}(z)
-
\mathcal{C}
\hat{J}_n'(\mathcal{C}z)
\hat{H}_n^{(1)}(z)
}
,
\\
\mathcal{R}_{n,2}(k)
&=
\frac{
\hat{H}_n^{(1)}(z)
\hat{H}_n^{(1)'}(\mathcal{C}z)
-
\mathcal{C}
\hat{H}_n^{(1)'}(z)
\hat{H}_n^{(1)}(\mathcal{C}z)
}
{
\mathcal{C}
\hat{J}_n(\mathcal{C}z)
\hat{H}_n^{(1)'}(z)
-
\hat{J}_n'(\mathcal{C}z)
\hat{H}_n^{(1)}(z)
},\\
\mathcal{T}_{n,1}(k)
&=
\frac{
i
}
{
\hat{J}_n(\mathcal{C}z)
\hat{H}_n^{(1)'}(z)
-
\mathcal{C}
\hat{J}_n'(\mathcal{C}z)
\hat{H}_n^{(1)}(z)
},
\\
\mathcal{T}_{n,2}(k)
&=
\frac{
i
}
{
\mathcal{C}
\hat{J}_n(\mathcal{C}z)
\hat{H}_n^{(1)'}(z)
-
\hat{J}_n'(\mathcal{C}z)
\hat{H}_n^{(1)}(z)
}
\end{align}
for all $n$, where $\hat{H}_n^{(1)}(\rho)\triangleq \rho h_n^{(1)}(\rho)$, $\hat{J}_n(\rho)\triangleq \rho j_n(\rho)$, $\hat{H}_n^{(1)'}(\rho)\triangleq\frac{d}{d\rho}\hat{H}_n^{(1)}(\rho)$, $\hat{J}_n'(\rho)\triangleq \frac{d}{d\rho}\hat{J}_n(\rho)$, $z\triangleq k_0R_1$ and the argument $k$ of $\mathcal{C}(k)$ is omitted. Throughout the paper, the argument of scattering coefficients are omitted if $k=k_1$. 

\subsection{Electromagnetic power consumption}\label{sec_preliminaries_electromagnetic}

In~\cite{harrington1961time,chew1995waves}, the complex power of the current source is equal to
\begin{align}
-\frac{1}{2}
\left\langle
\mathbf{J},\mathbf{E}
\right\rangle_V,
\end{align}
where $V$ is the source region having arbitary shape.
By using the definition of DGF \eqref{sec_preliminaries_dyadic_1}, it is equal to
\begin{align}
-\frac{1}{2}
\left\langle\mathbf{J},i\omega\mu_0\obf{G},\mathbf{J}\right\rangle_V
&=
-\frac{i\omega\mu_0}{2}
\left\langle\mathbf{J},\obf{G},\mathbf{J}\right\rangle_V\triangleq
-\frac{i\omega\mu_0}{2}A,
\end{align}
where 
\begin{align}
\left\langle\mathbf{F}_1,\obf{T},\mathbf{F}_2\right\rangle_V
\triangleq
\int_V
\int_V
\mathbf{F}_1(\mathbf{r})^H\obf{T}(\mathbf{r},\mathbf{r}')\mathbf{F}_2(\mathbf{r}')
d\mathbf{r}d\mathbf{r}'
\end{align}
for a linear kernel $\obf{T}$ and two vectors $\mathbf{F}_1,\mathbf{F}_2$.
The real part of the complex power is the power that is consumed by the source, i.e.,
\begin{align}
\frac{1}{2}\Re\{-i\omega\mu_0A\}=\frac{\omega\mu_0}{2}\frac{A-A^*}{2i}.
\end{align}
Define the operator $\mathrm{S}$ that exchanges the argument, i.e., for all $\mathbf{r},\mathbf{r}'$, $
\mathrm{S}(\obf{G})(\mathbf{r},\mathbf{r}')\triangleq\obf{G}(\mathbf{r}',\mathbf{r})$.
Then, the complex conjugate $A^*$ of $A$ is equal to $\left\langle\mathbf{J},\obf{G}^*,\mathbf{J}\right\rangle_V$ since 
\begin{align}
A^*
=
\left\langle\mathbf{J},\obf{G},\mathbf{J}\right\rangle_V^*
=
\left(
\left\langle\mathbf{J},\obf{G},\mathbf{J}\right\rangle_V^T
\right)^*
=
\left(
\left\langle\mathbf{J}^*,\mathcal{S}(\obf{G})^T,\mathbf{J}^*\right\rangle_V
\right)^*
=
\left(
\left\langle\mathbf{J}^*,\obf{G},\mathbf{J}^*\right\rangle_V
\right)^*
=
\left\langle\mathbf{J},\obf{G}^*,\mathbf{J}\right\rangle_V,
\end{align}
where the fourth equality holds due to the reciprocity of DGF~\cite{chew1995waves}. As a result, the power consumption due to $\mathbf{J}$ can be represented as follows:
\begin{align}
\frac{\omega\mu_0}{2}\Re\{-i\left\langle\mathbf{J},\obf{G},\mathbf{J}\right\rangle_V\}
&=
\frac{\omega\mu_0}{2}
\left\langle
\mathbf{J},
\Im\{\obf{G}\},
\mathbf{J}\right\rangle_V.
\end{align}

\subsection{Thermal noise: fluctuation-dissipation theorem}\label{sec_preliminaries_thermal}

The statistical property of thermal noise was first analyzed by Johnson and Nyquist~\cite{PhysRev.32.97, PhysRev.32.110}.
The authors show that the thermal current fluctuation $i_{\mathrm{noise}}$ across the conductor is inversely proportional to the resistance $\mathrm{R}$ of the conductor, i.e.,
\begin{align}
E\{i_{\mathrm{noise}}^2\}=\frac{4k_BTB}{\mathrm{R}},
\end{align}
or equivalently, for the thermal voltage fluctuation $v_{\mathrm{noise}}$ across the conductor, 
\begin{align}
E\{v_{\mathrm{noise}}^2\}=4k_BTB\mathrm{R},
\end{align}
where $k_B$ is the Boltzmann constant, $T$ is the temperature of the conductor, and $B$ is the bandwidth. 
Later in statistical physics, Johnson-Nyquist thermal noise was generalized for various physical systems~\cite{PhysRev.83.34}, where \emph{the energy dissipation in thermodynamical system is shown to be related to the thermal fluctuation}, which is called the \emph{fluctuation-dissipation theorem (FDT)}\footnote{In~\cite{novotny2006principles}, FDT is introduced in view of classical electromagnetic theory.}~\cite{PhysRev.83.34}. 

In this paper, we take a noise model based on the FDT.
We first assume there is a dielectric sphere $V$ in which the physical loss can be characterized by the complex permittivity $\epsilon$ of $V$. 
Then, from~\cite{shchegrov2000near,novotny2006principles,rytov1989principles}, the statistical property of the noise current density satisfies the following property:
\begin{lemma}[FDT for narrowband analysis]
At temperature $T$, the statistics of the thermal charge fluctuation $\mathbf{J}_{\mathrm{noise}}$ in a dielectric medium $V$ follows
\begin{align}
E
\left\{
\mathbf{J}_{\mathrm{noise}}(\mathbf{r})
\mathbf{J}_{\mathrm{noise}}(\mathbf{r}')^H
\right\}
=4k_BTB\omega\epsilon'' \obf{I}\delta(\mathbf{r}-\mathbf{r}'),
\end{align}
where $k_B$ is the Boltzmann constant, $B$ is bandwidth, $\epsilon''\triangleq \Im\left\{\epsilon\right\}$ for the complex permittivity $\epsilon$ of the medium in which the thermal fluctuation exists, and $\mathbf{I}$ is the identity operator. 
\end{lemma}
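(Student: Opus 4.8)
The plan is to derive the stated correlation from the general fluctuation-dissipation theorem (FDT), specializing to the electromagnetic setting in which the dissipative part of the response function is the imaginary permittivity $\epsilon''$. First I would recall the FDT in its standard form: for a system in thermal equilibrium at temperature $T$ whose linear response is governed by a generalized susceptibility $\chi(\omega)$, the spectral density of the conjugate fluctuating quantity is proportional to $\Im\{\chi(\omega)\}$ weighted by the thermal factor $\coth(\hbar\omega/2k_BT)$. For a dielectric, the constitutive relation $\mathbf{P}=(\epsilon-\epsilon_0)\mathbf{E}$ makes the permittivity play the role of $\chi$, and the thermally fluctuating polarization current $\mathbf{J}_{\mathrm{noise}}=-i\omega\mathbf{P}_{\mathrm{fluct}}$ is the Langevin source term in the stochastic Maxwell equations whose strength is fixed by FDT; the dissipation it must balance is precisely the absorption encoded in $\epsilon''$.

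Second, I would carry out the specialization in the following order. (i) Write the FDT correlation of the current density in the form proportional to $\omega\epsilon''(\omega)\coth(\hbar\omega/2k_BT)$, as in Rytov's fluctuational electrodynamics. (ii) Exploit the locality and isotropy of the medium---$\epsilon$ acts as a scalar pointwise---to fix the spatial and tensorial structure of the correlation as $\obf{I}\,\delta(\mathbf{r}-\mathbf{r}')$, i.e.\ no coupling between distinct points or distinct Cartesian components. (iii) Take the classical (Rayleigh-Jeans) limit $\hbar\omega\ll k_BT$, where $\coth(\hbar\omega/2k_BT)\to 2k_BT/\hbar\omega$, so that the $\hbar$ cancels and the prefactor becomes linear in $k_BT\omega\epsilon''$. (iv) Integrate the spectral density over the narrow band of width $B$ and convert from the two-sided to the one-sided engineering convention; this last step is what supplies the bandwidth $B$ and the numerical constant, which I would fix by matching to the Johnson-Nyquist normalization $E\{i_{\mathrm{noise}}^2\}=4k_BTB/\mathrm{R}$ already quoted above.

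Finally, as a consistency check I would reduce the continuous medium to a single lossy conductor and verify that integrating $4k_BTB\omega\epsilon''$ against the effective conductivity $\sigma=\omega\epsilon''$ reproduces the lumped-element result $4k_BTB/\mathrm{R}$; this both validates the tensor-delta structure and pins the overall factor of $4$. I expect the main obstacle to be exactly this bookkeeping of the prefactor: the factor of $4$, the one-sided-versus-two-sided spectral-density convention, and the narrowband integration over $B$ are all convention-sensitive and must be made mutually consistent with the Johnson-Nyquist baseline, rather than any difficulty in the physics. A secondary point requiring justification is that the classical regime $\hbar\omega\ll k_BT$ indeed applies at the frequencies and temperatures of interest, so that replacing the quantum $\coth$ factor by its leading term leaves the stated result unchanged.
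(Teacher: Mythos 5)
The paper gives no proof of this lemma at all: it is imported directly from the fluctuational-electrodynamics literature via the citations immediately preceding the statement~\cite{shchegrov2000near,novotny2006principles,rytov1989principles}, and your plan reproduces exactly the standard derivation found in those sources --- the Callen--Welton FDT with the permittivity playing the role of the susceptibility and $\mathbf{J}_{\mathrm{noise}}=-i\omega\mathbf{P}_{\mathrm{fluct}}$ as the Langevin source (consistent with the paper's $e^{-i\omega t}$ convention), the $\mathbf{I}\,\delta(\mathbf{r}-\mathbf{r}')$ structure from locality and isotropy, the classical limit $\coth(\hbar\omega/2k_BT)\to 2k_BT/(\hbar\omega)$, and the one-sided narrowband normalization fixed against the Johnson--Nyquist baseline. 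Your closing consistency check, reducing to a lumped conductor through $\sigma=\omega\epsilon''$ to recover $E\{i_{\mathrm{noise}}^2\}=4k_BTB/\mathrm{R}$ and thereby pin the factor of $4$, is sound, so the proposal is correct and takes essentially the same route the paper implicitly relies on.
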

Since $\sigma=\omega\epsilon''$, where $\sigma$ is the dielectric conductivity of the medium, the FDT states that the thermal charge fluctuation is proportional to the temperature and the conductivity of the medium.
This is the exact generalization of the Johnson-Nyquist thermal noise at the resistor stated above.
In addition, the FDT gives the pointwise independence of the thermal charge fluctuation.
By using the FDT and the definition of the dyadic Green function, the following corollary can be derived~\cite{novotny2006principles,dung1998three}:
\begin{corollary}\label{sec_preliminaries_thermal_corollary}
At temperature $T$, the statistical property of the thermal electric field $\mathbf{E}_{\mathrm{noise}}$ due to the charge fluctuation $\mathbf{J}_{\mathrm{noise}}$ follows
\begin{align}
E
\left\{
\mathbf{E}_{\mathrm{noise}}(\mathbf{r})
\mathbf{E}_{\mathrm{noise}}(\mathbf{r}')^H
\right\}
=
4k_BTB\omega\mu_0 \Im\left\{\obf{G}(\mathbf{r},\mathbf{r}')\right\},
\end{align}
where $k_B$ is the Boltzmann constant and $B$ is bandwidth.
\end{corollary}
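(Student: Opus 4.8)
The plan is to push the FDT statistics of $\mathbf{J}_{\mathrm{noise}}$ through the linear map \eqref{sec_preliminaries_dyadic_1} and then to collapse the resulting expression with a single dyadic Green-function identity. First I would write the thermal field as $\mathbf{E}_{\mathrm{noise}}(\mathbf{r})=i\omega\mu_0\int\obf{G}(\mathbf{r},\mathbf{r}_1)\mathbf{J}_{\mathrm{noise}}(\mathbf{r}_1)\,d\mathbf{r}_1$, form the outer product $\mathbf{E}_{\mathrm{noise}}(\mathbf{r})\mathbf{E}_{\mathrm{noise}}(\mathbf{r}')^H$, and move the expectation inside the double integral by linearity. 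Since $\omega$ and $\mu_0$ are real, the scalar prefactor $(i\omega\mu_0)(i\omega\mu_0)^*$ is $\omega^2\mu_0^2$, and substituting the FDT Lemma lets the Dirac delta sift one integration variable, leaving
\begin{align}
E\left\{\mathbf{E}_{\mathrm{noise}}(\mathbf{r})\mathbf{E}_{\mathrm{noise}}(\mathbf{r}')^H\right\}=4k_BTB\,\omega^3\mu_0^2\int\epsilon''(\mathbf{r}_1)\,\obf{G}(\mathbf{r},\mathbf{r}_1)\obf{G}(\mathbf{r}',\mathbf{r}_1)^H\,d\mathbf{r}_1.
\end{align}
Comparing with the target $4k_BTB\,\omega\mu_0\,\Im\{\obf{G}(\mathbf{r},\mathbf{r}')\}$, the corollary reduces to the single identity
\begin{align}
\omega^2\mu_0\int\epsilon''(\mathbf{r}_1)\,\obf{G}(\mathbf{r},\mathbf{r}_1)\obf{G}(\mathbf{r}',\mathbf{r}_1)^H\,d\mathbf{r}_1=\Im\{\obf{G}(\mathbf{r},\mathbf{r}')\}.
\label{plan_greenid}
\end{align}

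To establish \eqref{plan_greenid}, I would first use the reciprocity $\obf{G}(\mathbf{r},\mathbf{r}_1)=\obf{G}(\mathbf{r}_1,\mathbf{r})^T$ invoked in Section~\ref{sec_preliminaries_electromagnetic} so that both factors become functions of the \emph{field point} $\mathbf{r}_1$; concretely $\obf{G}(\mathbf{r}',\mathbf{r}_1)^H=\obf{G}(\mathbf{r}_1,\mathbf{r}')^*$, and the $(p,q)$ entry of the integrand is the bilinear pairing of column $p$ of $\obf{G}(\cdot,\mathbf{r})$ with the conjugate of column $q$ of $\obf{G}(\cdot,\mathbf{r}')$. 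Each such column satisfies the defining equation $\nabla\times\nabla\times\obf{G}(\cdot,\mathbf{a})-k(\cdot)^2\obf{G}(\cdot,\mathbf{a})=\obf{I}\delta(\cdot-\mathbf{a})$ with $k^2=\omega^2\mu_0\epsilon$, while the conjugated column obeys the same equation with $(k^2)^*$. Applying the vector form of Green's second identity over the ball $\|\mathbf{r}_1\|\le R$ and substituting these two equations eliminates the double-curl operators and leaves a volume integrand proportional to $(k^2-(k^2)^*)=2i\omega^2\mu_0\epsilon''$ times the pairing, whereas the two delta functions sift out $\obf{G}(\mathbf{r},\mathbf{r}')^*$ and, again by reciprocity, $\obf{G}(\mathbf{r},\mathbf{r}')$. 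Their difference is $-2i\,\Im\{\obf{G}(\mathbf{r},\mathbf{r}')\}$, so after dividing by $2i$ the identity \eqref{plan_greenid} falls out, provided the boundary term at $R\to\infty$ disappears.

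The hard part will be precisely that boundary term, namely the antisymmetric surface integral of $\obf{G}^*\times(\nabla\times\obf{G})-\obf{G}\times(\nabla\times\obf{G}^*)$ over the sphere of radius $R$. When the medium is absorbing the columns decay exponentially and the term vanishes at once, but the exterior $V^C$ here is lossless free space, so the far field does not decay and the cancellation must be justified. The clean route is a limiting-absorption argument: endow all of space with an infinitesimal loss $\epsilon''\to0^+$, so that $\obf{G}$ decays and the surface term vanishes identically, derive \eqref{plan_greenid}, and then pass to the limit; equivalently one verifies the cancellation directly from the outgoing Silver--M\"uller condition $\nabla\times\obf{G}\sim ik_0\,\hat{\mathbf{r}}_1\times\obf{G}$, under which the leading $O(1/R)$ radiative contributions of the two terms are equal and cancel. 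This is the only step requiring genuine analytic care, and it is exactly the point treated in the references \cite{novotny2006principles,dung1998three} cited for the corollary; everything else is bookkeeping of the real constants $\omega,\mu_0$ and of the component indices.
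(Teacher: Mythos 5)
The paper itself offers no proof of this corollary---it is delegated to the citations \cite{novotny2006principles,dung1998three}---and your skeleton (pushing the FDT current statistics through \eqref{sec_preliminaries_dyadic_1} and reducing to the Green-tensor identity $\omega^2\mu_0\int\epsilon''(\mathbf{r}_1)\,\obf{G}(\mathbf{r},\mathbf{r}_1)\obf{G}(\mathbf{r}',\mathbf{r}_1)^H\,d\mathbf{r}_1=\Im\{\obf{G}(\mathbf{r},\mathbf{r}')\}$) is exactly the derivation in those references; the prefactor bookkeeping, the delta sifting, the use of reciprocity, and the volume part of the Green's-identity computation are all correct. The genuine gap sits at the boundary term, precisely where you flagged the need for analytic care, and your proposed resolution fails on one of its two branches: the Silver--M\"uller cancellation is false. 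Writing $\mathbf{P}$ for a column of $\obf{G}(\cdot,\mathbf{r})$ and $\mathbf{Q}$ for the \emph{conjugate} of a column of $\obf{G}(\cdot,\mathbf{r}')$, the radiation conditions read $\nabla\times\mathbf{P}\sim ik_0\hat{\mathbf{r}}_1\times\mathbf{P}$ but $\nabla\times\mathbf{Q}\sim -ik_0\hat{\mathbf{r}}_1\times\mathbf{Q}$, since conjugation flips the sign; hence on the large sphere, using transversality of the far field,
\begin{align}
\left[\mathbf{Q}\times(\nabla\times\mathbf{P})-\mathbf{P}\times(\nabla\times\mathbf{Q})\right]\cdot\hat{\mathbf{r}}_1
\longrightarrow
2ik_0\,\mathbf{Q}\cdot\mathbf{P},
\end{align}
so the two pieces \emph{add} rather than cancel, and for $\mathbf{r}=\mathbf{r}'$ with equal column indices the surface integral tends to $2ik_0\oint\lVert\mathbf{P}\rVert^2dS$, essentially the power radiated by a point source, which is strictly nonzero. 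The cancellation you had in mind occurs for the unconjugated pairing of $\obf{G}$ with $\obf{G}$ (the classical reciprocity computation), not for $\obf{G}$ with $\obf{G}^*$.

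This is not a repairable technicality of the same identity: with the actual loss profile of the paper ($\epsilon''>0$ only inside $V$ and exactly zero outside), your integral identity is simply false---take $\epsilon''\equiv0$ everywhere and it would read $0=\Im\{\obf{G}_0(\mathbf{r},\mathbf{r}')\}$, contradicting $\Im\{\obf{G}_0(\mathbf{r},\mathbf{r})\}=(k_0/6\pi)\obf{I}$, the very formula the paper uses in Section \rom{3}; the defect is exactly the surviving surface (radiation) term. Your limiting-absorption branch is the right route, but it must be run with the exterior loss retained: endow all of space with loss $\delta>0$, so that the FDT noise currents pervade all of space and the surface term genuinely vanishes, and conclude the identity and the corollary for every fixed $\delta$. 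In the limit $\delta\to0^+$ the exterior wavenumber satisfies $k''\propto\delta$, hence $\delta\int_{V^C}\lVert\obf{G}(\mathbf{r},\mathbf{r}_1)\rVert^2d\mathbf{r}_1=O(1)$: the exterior volume integral converges to the finite radiation term instead of disappearing, so the limit does \emph{not} deliver your identity with exterior $\epsilon''=0$; it delivers the corollary with fluctuating currents everywhere, i.e., under global thermal equilibrium including the blackbody background at the same temperature $T$. That is the sense in which \cite{dung1998three} proves the statement and the sense in which the paper applies it at the free-space receiver points, and your proof must make this interpretation explicit rather than claim that the noise field of the lossy sphere alone reproduces $\Im\{\obf{G}\}$.
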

Note that the conductivity is not explicitly shown in the corollary since $\omega\mu_0 \Im\{\overline{\mathbf{G}}\}$ works as the resistivity, which directly comes from the definition of the dyadic Green function. Again, this is the exact generalization of the Johnson-Nyquist thermal noise at the resistor, which states that the variance of the voltage noise across the resistor is proportional to the resistance. However, FDT states the electric field fluctuation is spatially correlated over different points by the imaginary part of the dyadic Green function.

\subsection{Near-field region for the spherical vector waves}\label{sec_preliminaries_near}

In~\cite{potter1967application,hansen1988spherical}, the spherical vector wave with its order $n$ has the following physical regions:
\begin{itemize}
\item
Far-field region (Fraunhofer region), $k_0r \gtrsim 4n^2/\pi  $: The amplitude of the wave is inversely proportional to $r$ and shows the phase variation as $\exp(ikr)$.
\item
Radiative near-field region (Fresnel region), $n \lesssim k_0r \lesssim 4n^2/\pi $: The amplitude of the wave is inversely proportional to $r$. However, the phase variation differs by the order of the wave. 
\item
Reactive near-field region (Evanescent region), $k_0r \lesssim n$: The amplitude increases exponentially as $r$ decreases. In addition, the phase variation differs by the order of the wave. 
\end{itemize}
This shows the physical size of the reactive near-field region becomes larger as the order $n$ of the orthogonal bases increases.

\subsection{Connection to discrete-space analysis}\label{sec_preliminaries_connection}

\begin{figure}[t]
\begin{center}
\begin{tikzpicture}
\node at (0,0) {\includegraphics[page=1,height=2in]{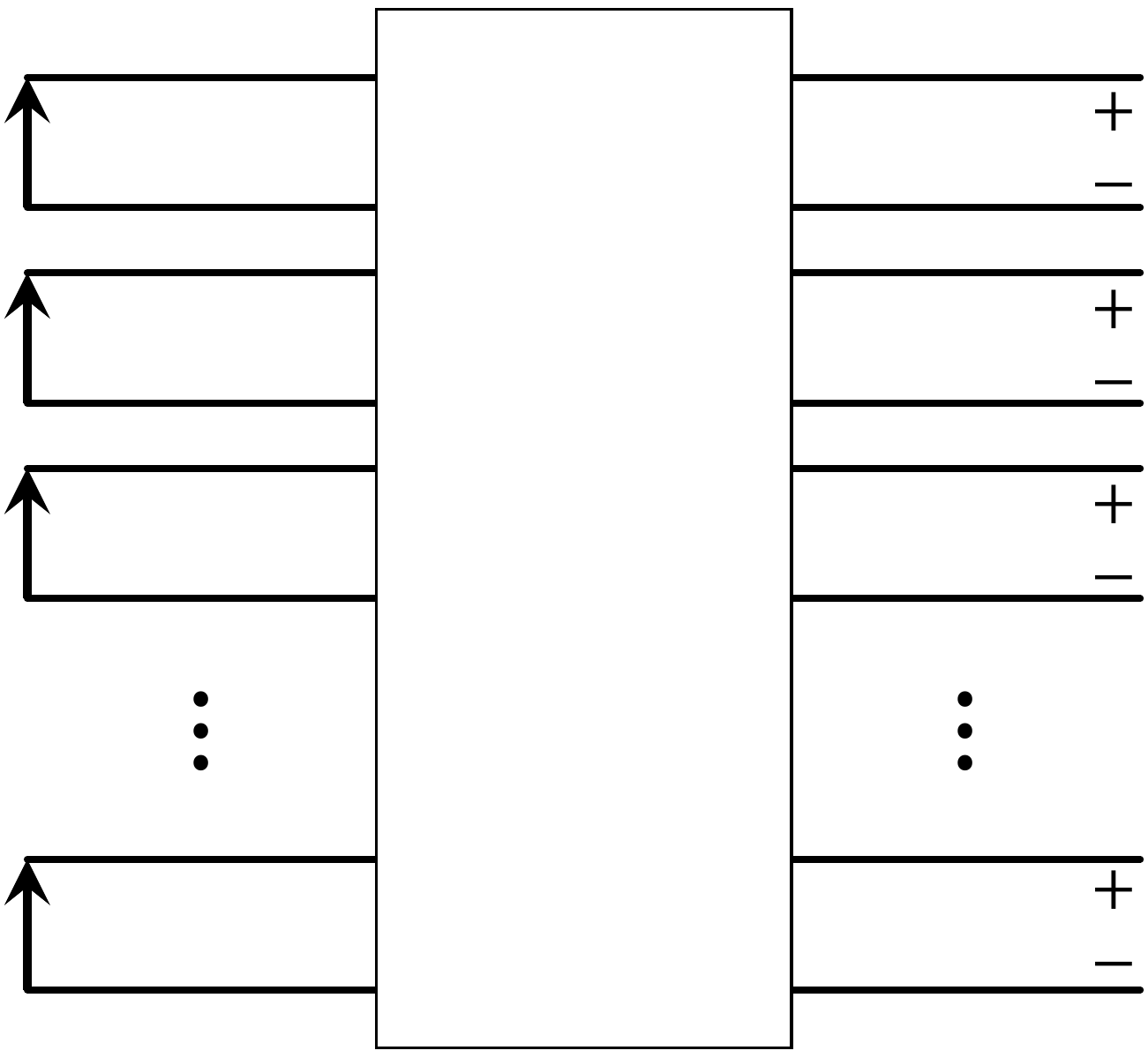}};
\node at (0,0) {$\overline{\mathbf{Z}}$};
\node at (-1.2in,+0.70in) {$i_1$};
\node at (-1.2in,+0.35in) {$i_2$};
\node at (-1.2in,+0.00in) {$i_3$};
\node at (-1.2in,-0.75in) {$i_N$};
\node at (+1.2in,+0.70in) {$v_1$};
\node at (+1.2in,+0.35in) {$v_2$};
\node at (+1.2in,+0.00in) {$v_3$};
\node at (+1.2in,-0.75in) {$v_M$};
\end{tikzpicture}
\vspace{-0.2in}
\caption{Linear multiport model for multi-antenna channel}\label{figure2page1}
\end{center}
\vspace{-0.4in}
\end{figure}

From~\cite{wallace2004mutual, ivrlac2010toward}, the multi-antenna radio channel can be modeled as an equivalent linear multiport network.
For example, if $N$ idealized current sources $\mathbf{i}_i\triangleq[i_1,...,i_N]^T$ and $M$ open-circuit voltages $\mathbf{v}_o\triangleq[v_1,...,v_M]^T$ are used as the channel input and output, respectively, there is an equivalent multiport system with an impedance matrix $\obf{Z}$ as shown in Fig.~\ref{figure2page1}.
Formally, if the voltage across the channel input is $\mathbf{v}_i$ and the current at the output port is $\mathbf{i}_o$, which is zero when output port is open, the multiport model is obtained as 
\begin{align}
\begin{bmatrix}
\mathbf{v}_i\\
\mathbf{v}_o
\end{bmatrix}
=
\obf{Z}
\begin{bmatrix}
\mathbf{i}_i\\
\mathbf{i}_o
\end{bmatrix}
\triangleq
\begin{bmatrix}
\obf{Z}_T 		&\obf{Z}_{TR} 		\\
\obf{Z}_{RT} 		&\obf{Z}_R 
\end{bmatrix}
\begin{bmatrix}
\mathbf{i}_i\\
\mathbf{i}_o
\end{bmatrix},
\end{align}
where $\obf{Z}$ is partitioned into four matrices: transmit and receive impedance matrix $\obf{Z}_T\in\mathbb{C}^{N\times N}$ and $\obf{Z}_R\in\mathbb{C}^{M\times M}$, the channel from the transmitter to the receiver $\obf{Z}_{RT}\in\mathbb{C}^{M\times N}$, and the reverse channel from the receiver to the transmitter $\obf{Z}_{TR}\in\mathbb{C}^{N\times M}$. 
When the receiver is sufficiently far apart from the transmitter, $\obf{Z}_{TR}$ is negligible at the transmitter relative to $\obf{Z}_{T}$, and the power consumption is equal to
\begin{align}
\frac{1}{2}\mathbf{i}_i^H \Re\{\obf{Z}_T\}\mathbf{i}_i.
\label{sec_preliminaries_connection_1}
\end{align}
In addition, for the noise temperature $T$ of the antennas, the open-circuit noise voltage at the output port satisfies
\begin{align}
E\left\{\mathbf{v}_{o,N}\mathbf{v}_{o,N}^H\right\}=4k_B T B \Re\{\obf{Z}_R\}.
\label{sec_preliminaries_connection_2}
\end{align}
The power consumption and the noise statistics in this paper are related to \eqref{sec_preliminaries_connection_1} and \eqref{sec_preliminaries_connection_2}.
As an example, for the antenna array with $N$ Hertzian dipoles, the power consumption and the noise statistics in the multi-antenna channel above can be derived from the power consumption and the noise statistics in this paper.

\begin{figure}[t]
\begin{center}
\begin{tikzpicture}
\node at (0,0) {\includegraphics[page=2,height=1in]{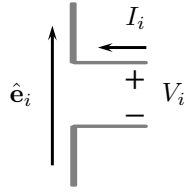}};
\node at (-0.4in,+0.0in) {$\hbf{e}_i$};
\node at (+0.4in,+0.0in) {$V_i$};
\node at (+0.2in,+0.4in) {$I_i$};
\end{tikzpicture}
\caption{Definitions on current direction, voltage sign and the orientation of Hertzian dipole}\label{figure2page2}
\end{center}
\end{figure}

\begin{figure}
\begin{center}
\begin{tikzpicture}
\node at (0,0) {\includegraphics[page=3,height=0.8in]{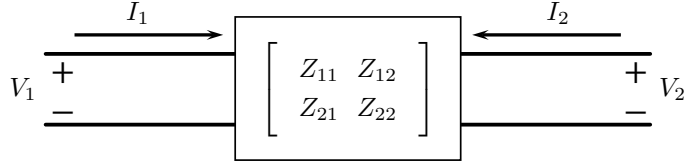}};
\node at (-1.7in,+0.0in) {$V_1$};
\node at (+1.7in,+0.0in) {$V_2$};
\node at (-1.1in,+0.4in) {$I_1$};
\node at (+1.1in,+0.4in) {$I_2$};
\matrix [matrix of math nodes, left delimiter  = {[}, right delimiter = {]}] at (0,0)
{
  Z_{11} & Z_{12} \\
  Z_{21} & Z_{22} \\
};
\end{tikzpicture}
\end{center}
\vspace{-0.1in}
\caption{Equivalent two-port network between two dipoles}\label{figure2page3}
\vspace{-0.1in}
\end{figure}
\begin{corollary}
For the transmitter with $N$ lossless Hertzian dipoles with their input current $\mathbf{i}\triangleq[I_1,...,I_N]^T$ and mutual impedance $\obf{Z}_T\triangleq[\mathrm{Z}_{ij}]_{1\le i,j\le N}$, the power consumption is
\begin{align}
\frac{1}{2}\mathbf{i}^H\Re\{\obf{Z}_T\}\mathbf{i}.
\end{align}
\end{corollary}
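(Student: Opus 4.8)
\textit{Proof proposal.}
The plan is to specialize the continuous-space power-consumption formula $\frac{\omega\mu_0}{2}\langle\mathbf{J},\Im\{\obf{G}\},\mathbf{J}\rangle_V$ established in Section~\ref{sec_preliminaries_electromagnetic} to the current density generated by $N$ Hertzian dipoles, and then verify that the resulting quadratic form in the terminal currents is exactly $\frac{1}{2}\mathbf{i}^H\Re\{\obf{Z}_T\}\mathbf{i}$. First I would model the $i$-th dipole, located at $\mathbf{r}_i$ with real unit orientation $\hbf{e}_i$, effective length $\ell_i$, and terminal current $I_i$, as the point current density $\mathbf{J}_i(\mathbf{r})=I_i\ell_i\hbf{e}_i\delta(\mathbf{r}-\mathbf{r}_i)$, so the array produces $\mathbf{J}=\sum_{i=1}^N\mathbf{J}_i$. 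Substituting this into the power formula and sifting out the two delta functions collapses the double volume integral to a finite double sum,
\begin{align}
P=\frac{\omega\mu_0}{2}\sum_{i,j}I_i^*I_j\,\ell_i\ell_j\,\hbf{e}_i^T\Im\{\obf{G}(\mathbf{r}_i,\mathbf{r}_j)\}\hbf{e}_j.
\end{align}

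Next I would identify each summand with $\Re\{\mathrm{Z}_{ij}\}$. Using the multiport correspondence of Section~\ref{sec_preliminaries_connection}, the open-circuit terminal voltage of dipole $i$ is minus the field projected onto the dipole (the induced-EMF convention of Fig.~\ref{figure2page2}), $V_i=-\ell_i\hbf{e}_i^T\mathbf{E}(\mathbf{r}_i)$, and inserting the DGF relation $\mathbf{E}(\mathbf{r})=i\omega\mu_0\int\obf{G}(\mathbf{r},\mathbf{r}')\mathbf{J}(\mathbf{r}')d\mathbf{r}'$ evaluated on the array gives $V_i=\sum_j\mathrm{Z}_{ij}I_j$ with $\mathrm{Z}_{ij}=-i\omega\mu_0\,\ell_i\ell_j\,\hbf{e}_i^T\obf{G}(\mathbf{r}_i,\mathbf{r}_j)\hbf{e}_j$. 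Since $\omega\mu_0\ell_i\ell_j$ and the $\hbf{e}$'s are real, the scalar identity $\Re\{-iW\}=\Im\{W\}$ yields $\Re\{\mathrm{Z}_{ij}\}=\omega\mu_0\ell_i\ell_j\hbf{e}_i^T\Im\{\obf{G}(\mathbf{r}_i,\mathbf{r}_j)\}\hbf{e}_j$, which is exactly the summand above. Hence $P=\frac{1}{2}\sum_{i,j}I_i^*I_j\Re\{\mathrm{Z}_{ij}\}=\frac{1}{2}\mathbf{i}^H\Re\{\obf{Z}_T\}\mathbf{i}$, as claimed.

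I expect two points to need the most care. The first is the self-terms $i=j$, where $\obf{G}(\mathbf{r}_i,\mathbf{r}_i)$ is singular; the resolution is that the singular part of the DGF lives in its real part (the $-\hbf{r}\hbf{r}^T\delta(\mathbf{r}-\mathbf{r}')/k^2$ contribution and the static near-field tensor), so $\Im\{\obf{G}\}$ remains finite at coincidence and the self-resistance $\Re\{\mathrm{Z}_{ii}\}$, i.e.\ the radiation resistance of the dipole, is well defined even though the self-reactance diverges. The second is justifying the appearance of $\Re\{\obf{Z}_T\}$ rather than its Hermitian part: in general $\Re\{\mathbf{i}^H\obf{Z}_T\mathbf{i}\}=\mathbf{i}^H\frac{\obf{Z}_T+\obf{Z}_T^H}{2}\mathbf{i}$, and the two agree only when $\obf{Z}_T$ is symmetric. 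Symmetry $\mathrm{Z}_{ij}=\mathrm{Z}_{ji}$ follows from the reciprocity $\obf{G}(\mathbf{r},\mathbf{r}')=\obf{G}(\mathbf{r}',\mathbf{r})^T$ already invoked in Section~\ref{sec_preliminaries_electromagnetic}, applied to the scalar $\hbf{e}_i^T\obf{G}(\mathbf{r}_i,\mathbf{r}_j)\hbf{e}_j$; this also makes $\Re\{\obf{Z}_T\}$ real symmetric, so the quadratic form is real, consistent with its interpretation as a consumed power.
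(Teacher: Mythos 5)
Your proposal is correct and follows essentially the same route as the paper's proof: model each dipole as a delta-function current density, insert it into the power formula $\frac{\omega\mu_0}{2}\langle\mathbf{J},\Im\{\obf{G}\},\mathbf{J}\rangle_V$, and identify each summand with $\Re\{\mathrm{Z}_{ij}\}=\Im\{\omega\mu_0\,\mathbf{l}_i^T\obf{G}(\mathbf{r}_i,\mathbf{r}_j)\mathbf{l}_j\}$ via the open-circuit voltage $V_i=-\mathbf{l}_i^T\mathbf{E}(\mathbf{r}_i)$. Your two added remarks---that $\Im\{\obf{G}\}$ stays finite at coincidence so the self-resistance is well defined, and that reciprocity of $\obf{G}$ makes $\obf{Z}_T$ symmetric so $\Re\{\obf{Z}_T\}$ rather than the Hermitian part appears---are points the paper leaves implicit (it treats only $N=2$ and states the generalization is straightforward), but they do not change the argument.
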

\begin{proof}
Assume two Hertzian dipoles located at $\mathbf{r}_1,\mathbf{r}_2$ with $\mathbf{l}_i=\hbf{e}_iL_i,i=1,2,$ where $\hbf{e}_i$ and $L_i$ are the orientation and the length of dipoles, respectively. In this paper, the direction of current $I_i,i=1,2,$ and the sign of the open-circuit voltage $V_i,i=1,2,$ are defined along the orientation as described in Fig. \ref{figure2page3}, and the current density on the dipole is
\begin{align}
\mathbf{J}_i(\mathbf{r})=\mathbf{l}_iI_i\delta(\mathbf{r}-\mathbf{r}_i), i=1,2.
\end{align}
For the electric field $\mathbf{E}$ on each dipole, the direction of current flow should be the same as that of the field, which implies the signs of $\mathbf{l}_i^T\mathbf{E}, \mathbf{l}_i^T\mathbf{J}_i$ and $I_i$ should be the same. On the other hand, $V_i$ should have the opposite sign of $I_i$ under our description, i.e.,
\begin{align}
V_i\triangleq-\mathbf{l}_i^T\mathbf{E}(\mathbf{r}_i).
\end{align}
In addition, the spatial correlation between two dipoles can be equivalently modeled as a two-port network as described in Fig. \ref{figure2page3}, where the impedance is defined as
\begin{align}
\mathrm{Z}_{ij}\triangleq\left.\frac{V_i}{I_j}\right|_{I_{k}=0}, i,j,k=1,2, k\ne j.
\end{align}
Then, from the definitions above, we have
$
\mathrm{Z}_{12}=-i\omega\mu_0\mathbf{l}_1^T\obf{G}(\mathbf{r}_1,\mathbf{r}_2)\mathbf{l}_2,
$
which follows since $\mathbf{E}(\mathbf{r}_1)=i\omega\mu_0\obf{G}(\mathbf{r}_1,\mathbf{r}_2)\mathbf{l}_2I_2$.
The real part of impedance $\mathrm{Z}_{12}$ is
\begin{align}
\Re\{\mathrm{Z}_{12}\}=\Im\left\{\omega\mu_0\mathbf{l}_1^T\obf{G}(\mathbf{r}_1,\mathbf{r}_2)\mathbf{l}_2\right\}.
\end{align}
For the current density $\mathbf{J}_1+\mathbf{J}_2$, the power consumption is equal to
\begin{align}
\frac{\omega\mu_0}{2}\sum_{i=1}^2\sum_{j=1}^2I_i^*I_j\Im\{\mathbf{l}_i^T\obf{G}(\mathbf{r}_i,\mathbf{r}_j)\mathbf{l}_j\}
=\frac{1}{2}\mathbf{i}^H\Re\{\obf{Z}\}\mathbf{i},
\end{align}
where $\mathbf{i}\triangleq[I_1,I_2]^T, \obf{Z}\triangleq[\mathrm{Z}_{ij}]_{i,j=1,2}$. The generalization for $N$-port antenna array is straightforward. 
\end{proof}
%

\begin{corollary}
For the receiver with $N$ lossless Hertzian dipoles with their open-circuit noise voltage $\mathbf{v}_{\mathrm{noise}}\triangleq[v_{\mathrm{noise},1},...,v_{\mathrm{noise},N}]^T$ and mutual impedance $\obf{Z}_R\triangleq[\mathrm{Z}_{ij}]_{1\le i,j\le N}$, the noise voltages satisfies 
\begin{align}
E\left\{\mathbf{v}_{\mathrm{noise}}\mathbf{v}_{\mathrm{noise}}^H\right\}
=
4k_BTB\Re\left\{\obf{Z}_R\right\}.
\end{align}
\end{corollary}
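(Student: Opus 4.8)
The plan is to reduce the statement to two facts already established: the open-circuit voltage definition $v_{\mathrm{noise},i}=-\mathbf{l}_i^T\mathbf{E}_{\mathrm{noise}}(\mathbf{r}_i)$ used for the lossless Hertzian dipole in the preceding corollary, and the field-correlation identity of Corollary~\ref{sec_preliminaries_thermal_corollary}. The only random object here is the thermal field $\mathbf{E}_{\mathrm{noise}}$ produced by the charge fluctuation in the surrounding dielectric; the orientation vectors $\mathbf{l}_i=\hbf{e}_iL_i$ and the dipole positions $\mathbf{r}_i$ are deterministic geometry fixed by the array, exactly as in the power-consumption corollary.

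First I would write each noise voltage as the projection of the thermal field onto its dipole orientation, then form the $(i,j)$ entry of the covariance and pull the real orientation vectors out of the expectation. The key bookkeeping step is that conjugating the scalar $v_{\mathrm{noise},j}$ turns it into $\mathbf{l}_j^T\mathbf{E}_{\mathrm{noise}}(\mathbf{r}_j)^*$, so the two sign flips cancel and the product rearranges into $\mathbf{l}_i^T\mathbf{E}_{\mathrm{noise}}(\mathbf{r}_i)\mathbf{E}_{\mathrm{noise}}(\mathbf{r}_j)^H\mathbf{l}_j$, which is precisely where the Hermitian field-correlation kernel of Corollary~\ref{sec_preliminaries_thermal_corollary} enters. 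Substituting that kernel yields
\[
E\{v_{\mathrm{noise},i}v_{\mathrm{noise},j}^*\}=4k_BTB\,\omega\mu_0\,\mathbf{l}_i^T\Im\{\obf{G}(\mathbf{r}_i,\mathbf{r}_j)\}\mathbf{l}_j=4k_BTB\,\Im\{\omega\mu_0\mathbf{l}_i^T\obf{G}(\mathbf{r}_i,\mathbf{r}_j)\mathbf{l}_j\},
\]
where the last equality uses that $\omega\mu_0$ and the $\mathbf{l}$'s are real, so they commute through $\Im\{\cdot\}$. Since the previous corollary already showed $\Re\{\mathrm{Z}_{ij}\}=\Im\{\omega\mu_0\mathbf{l}_i^T\obf{G}(\mathbf{r}_i,\mathbf{r}_j)\mathbf{l}_j\}$, the $(i,j)$ entry equals $4k_BTB\Re\{\mathrm{Z}_{ij}\}$, and stacking over $i,j$ gives the asserted matrix identity.

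I do not expect a genuine obstacle: the substance is carried entirely by Corollary~\ref{sec_preliminaries_thermal_corollary} together with the impedance identity, and what remains is the conjugation bookkeeping above. The one point worth stating carefully is \emph{why} $\Re\{\obf{Z}_R\}$ is the correct object: because the dipoles are lossless, the real part of their mutual impedance is pure radiation resistance, which by reciprocity coincides with $\Im\{\obf{G}\}$ evaluated between dipole locations -- exactly the kernel supplied by the FDT. This identifies the result as the mutual-impedance generalization of Johnson--Nyquist noise and the natural counterpart of the power-consumption corollary, thereby closing the pairing \eqref{sec_preliminaries_connection_1}--\eqref{sec_preliminaries_connection_2} promised in Section~\ref{sec_preliminaries_connection}.
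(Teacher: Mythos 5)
Your proposal is correct and takes essentially the same route as the paper: express each open-circuit voltage as $v_{\mathrm{noise},i}=-\mathbf{l}_i^T\mathbf{E}_{\mathrm{noise}}(\mathbf{r}_i)$, apply the FDT field-correlation kernel of Corollary~\ref{sec_preliminaries_thermal_corollary}, and close with the identity $\Re\{\mathrm{Z}_{ij}\}=\Im\{\omega\mu_0\mathbf{l}_i^T\obf{G}(\mathbf{r}_i,\mathbf{r}_j)\mathbf{l}_j\}$ established in the transmit-side corollary. The only cosmetic difference is that the paper argues for two dipoles and separately verifies the diagonal entry against the Hertzian-dipole radiation resistance $\Re\{\mathrm{Z}_{11}\}=20k^2L_1^2$ before declaring the multiport generalization trivial, whereas you write the general $(i,j)$ entry directly with explicit conjugation bookkeeping.
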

\begin{proof}
Assume there are two open-circuit Hertzian dipoles.
For the background noise field $\mathbf{E}_{\mathrm{noise}}$ on each dipole, the open-circuit noise voltage across the dipole is
\begin{align}
v_{\mathrm{noise},i}=-\mathbf{l}_i^T\mathbf{E}_{\mathrm{noise}}(\mathbf{r}_i), i=1,2.
\end{align}
From FDT, we have
\begin{align}
E\{v_{\mathrm{noise},1}v_{\mathrm{noise},2}^*\}
&=
4k_B T B \Re\{\mathrm{Z}_{12}\}
\end{align}
since $\Re\{\mathrm{Z}_{12}\}=\mathbf{l}_1^TE\{\mathbf{E}_{\mathrm{noise}}(\mathbf{r}_1)\mathbf{E}_{\mathrm{noise}}^H(\mathbf{r}_2)\}\mathbf{l}_2$.
In addition, if $\mathbf{r}_2=\mathbf{r}_1$, 
\begin{align}
E\{|v_{\mathrm{noise},1}|^2\}=4k_BTB\Re\{\mathrm{Z}_{11}\},
\end{align}
which follows since $\Re\{\mathrm{Z}_{11}\}=\omega\mu_0\mathbf{l}_1^T\Im\{\obf{G}(\mathbf{r}_1,\mathbf{r}_1)\}\mathbf{l}_2=20k^2L_1^2$, $\Im\left\{\obf{G}(\mathbf{r}_1,\mathbf{r}_1)\right\}=(k/(6\pi))\obf{I}$ and $\omega\mu_0=k_0\sqrt{\mu_0/\epsilon_0}=120\pi k_0$.
Note that $\Re\{\mathrm{Z}_{11}\}$ is equal to the radiation resistance of the Hertzian dipole~\cite{balanis2005antenna}, and the statistics is equal to the formula of the Johnson-Nyquist thermal noise where the resistance is equal to $\Re\{\mathrm{Z}_{11}\}$. The generalization for the multiport is trivial. 
\end{proof}
We summarize the connection between the discrete-space analysis and our framework in Table~\ref{table1}. 
\begin{table}
\begin{center}
\begin{tabular}{c | c}
Discrete-space analysis						&		Continuous-space analysis							\\
\hline\hline
Current input									&		Current density at the transmitter					\\
$\mathbf{i}$									&		$\mathbf{J}$										\\
\hline
Open-circuit voltage output						&		Electric field at the receiver 						\\
$\mathbf{v}$									&		$\mathbf{E}			$							\\
\hline
Ohm's law 									&		Role of dyadic Green function 						\\
$\mathbf{v}=\obf{Z}\mathbf{i}$				
& 		
$\mathbf{E}(\mathbf{r})=i\omega\mu_0\int_\mathcal{R}\obf{G}(\mathbf{r},\mathbf{r}')\mathbf{J}(\mathbf{r}')d \mathbf{r}'$		\\
\hline
Impedance matrix								&		Dyadic Green function with constant 					\\
$\obf{Z}$									
&		
$-i\omega\mu_0\obf{G}$																					\\
\hline
Mutual resistance								&		Imaginary part of dyadic Green function with constant 	\\
$\Re\{\overline{\mathbf{Z}}\}$					&		$\Im\{\omega\mu_0\overline{\mathbf{G}}\}$ \\
\hline
Average power consumption						&		Average power consumption							\\
$\frac{1}{2}\mathbf{i}^H\Re\{\overline{\mathbf{Z}}\}\mathbf{i}$
&			
$\frac{1}{2}\left\langle\mathbf{J},\Im\{\omega\mu_0\obf{G}\},\mathbf{J}\right\rangle_\mathcal{R}$\\
\hline
Johnson-Nyquist noise  						&		FDT					\\
$E\{\mathbf{v}_{\mathrm{noise}}\mathbf{v}_{\mathrm{noise}}^H\}=4k_B T B \Re\{\overline{\mathbf{Z}}\}$
&
$E\{\mathbf{E}_{\mathrm{noise}}(\mathbf{r})\mathbf{E}_{\mathrm{noise}}^H(\mathbf{r}')\}=4k_B T B  \Im\{\omega\mu_0\overline{\mathbf{G}}(\mathbf{r},\mathbf{r}')\}$
\end{tabular}
\caption{Comparison between the frameworks of discrete-space analysis and continuous-space analysis in this paper}\label{table1}
\end{center}
\vspace{-0.4in}
\end{table}

\section{Main result}\label{sec_single}

Assume the concentric spheres $V$ and $S$ with radius $R_1$ and $R_2$, and the property of media, i.e., wave numbers, are defined in the same way as that in the previous section. 
Also, assume $K$ different points, $\mathbf{s}_1,...,\mathbf{s}_K$, are distributed on $S$.
For the proof of our theorem, we use the following definition to select $K$ points on $S$:
\begin{definition}[uniform distribution]
A sequence of sets of angular positions $\boldsymbol{\Theta}_K\triangleq\{\theta_{K,j}\in
\boldsymbol{\Omega}\triangleq\left[0,\pi\right]\times\left[0,2\pi\right)| j=1,...,K\}, K\in\mathbb{Z}^+$ is said to be \emph{uniformly distributed} if
\begin{align}
\frac{|\boldsymbol{\Phi}|}{4\pi}=\lim_{K\rightarrow \infty}\frac{\sum_{j=1}^K\mathbb{I}\{\theta_{K,j}\in\boldsymbol{\Phi}\}}{K},
\end{align}
for all conic solid angles $\boldsymbol{\Phi}\subseteq\boldsymbol{\Omega}$, where $|\boldsymbol{\Phi}|\triangleq\int_{\boldsymbol{\Phi}}d\Omega$ and $\mathbb{I}$ is the indicator function, and for $\beta>0$, 
\begin{align}
d(\Omega_{K,j_1},\Omega_{K,j_2})\ge\frac{\beta}{\sqrt{K}}, j_1,j_2=1,...,K, 
\end{align}
for all $j_1\ne j_2$ and for all $K$, where $d(\cdot,\cdot)$ is the Euclidean distance between two angular positions on the unit sphere. 
\end{definition}
Any $\beta<\sqrt{8\pi/\sqrt{3}}$ is expected to work for our definition~\cite{clare1991optimal}.
For uniformly distributed $\boldsymbol{\Theta}_1,\boldsymbol{\Theta}_2,...$, let us define $\mathbf{s}_{K,j}\triangleq(R_2,\theta_{K,j})$ for $\theta_{K,j}\in\boldsymbol{\Theta}_K$ for all $K,j$, where $k_0R_2\triangleq\sqrt{K/\alpha}$ for some $0<\alpha<1$.
Here, the constant $\alpha$ is related to the density of the sampled points on $S$.
Note that if we increase the number $K$ of sampled points for given sampling density $\alpha$, the radius $R_2$ of $S$ also increases.
Also, the electromagnetic interaction among the sampled points $\mathbf{s}_{K,j}$'s becomes negligible by choosing sufficiently small sampling density $\alpha$.
If there is no confusion, we will use $\mathbf{s}_j$ instead of $\mathbf{s}_{K,j}$.
In this section, we analyze the capacity of following two channels.
\begin{figure}[t]
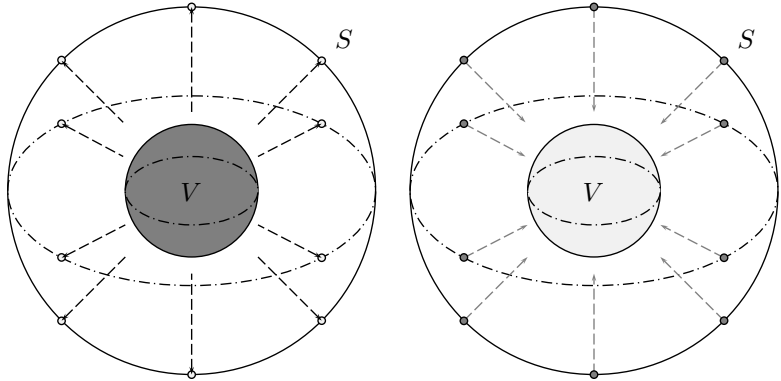

\begin{center}
\begin{tikzpicture}
\node at (0,0) {\includegraphics[page=3,height=2in]{figure1.pdf}};
\node at (0,0) {$V$};
\node at (0.8in,0.8in) {$S$};
\end{tikzpicture}
\begin{tikzpicture}
\node at (0,0) {\includegraphics[page=2,height=2in]{figure1.pdf}};
\node at (0,0) {$V$};
\node at (0.8in,0.8in) {$S$};
\end{tikzpicture}
\vspace{-0.2in}
\caption{Illustration on the single-user channel: a forward channel (left) and a reverse channel (right)}\label{figure2page2}
\end{center}
\vspace{-0.4in}
\end{figure}
\begin{itemize}
\item
In the \emph{forward channel}, the transmitter is allowed to generate the source current inside $V$ that satisfies the power constraint $\mathrm{P}$. Also, the receiver measures the electric field on $\mathbf{s}_1,...,\mathbf{s}_K$, where the electric field is the sum of the field due to the source and the thermal noise. The transmitter is allowed to use the spherical waves with order $n\le N$. 
\item
In the \emph{reverse channel}, the transmitter is allowed to generate the source current on $\mathbf{s}_1,...,\mathbf{s}_K$ under the power constraint $\mathrm{P}$. Also, the receiver measures the electric field inside $V$, where the electric field is the sum of the electric field generated from the source and the thermal noise. The receiver is allowed to use the spherical waves with order $n\le N$. 
\end{itemize}
In Section~\ref{sec_single_equivalent}, we derive the capacity of the forward and the reverse channels.
During our analysis, the electromagnetic power consumption and the statistical property of the thermal noise based on FDT are utilized, which are introduced in the previous section.
As a result, we show the capacity of each channel is determined by the efficiency of the orthogonally decomposed channels.
In Section~\ref{sec_single_lossless}, the forward and the reverse channels are considered when the dielectric sphere is assumed to be lossless, which can be regarded as the special case of~\ref{sec_single_equivalent}.
For this case, it can be shown that the capacity does not depend on the size of the spherical region.

\subsection{Capacity analysis}\label{sec_single_equivalent}

The capacity of both forward and reverse channels is given as follows:
\begin{definition}[efficiency]
The efficiency $\eta_{nml}$ of the channel with index $(n,m,l)$ is defined as
\begin{align}
\eta_{nml}
\triangleq
\frac{\rho_{nml}}{\tau_{nml}},
\end{align}
where
\begin{align}
\rho_{nml}
\triangleq
\frac{I_{n,l}^{jj^*}|k_1\mathcal{T}_{n,l}|^2}{k_0},
\tau_{nml}
\triangleq
\dfrac{\Re\{\mathcal{F}_{n,l}I_{n,l}^{jj}\}}{I_{n,l}^{jj^*}}+\dfrac{1}{4k_1'k_1''},
\forall n,m,l
\end{align}
for $\mathcal{F}_{n,l}(k)\triangleq \mathcal{D}_{n,l}(k)-\mathcal{E}_{n,l}(k)/(4k'k'')$ i.e., 
\begin{align}
\mathcal{D}_{n,l}(k)
&\triangleq
k[(1+\mathcal{R}_{n,l})I_{n,l}^{jj^*}(k,R_1)+iI_{n,l}^{yj^*}(k,R_1)],\\
\mathcal{E}_{n,l}(k)
&\triangleq
\begin{cases}
k^{-n}(k^*)^n,
&\text{~if~}l=1,\\
\dfrac{n+1}{2n+1}\mathcal{E}_{n-1,1}(k)+\dfrac{n}{2n+1}\mathcal{E}_{n+1,1}(k),
&\text{~if~}l=2,
\end{cases}
\end{align}
for all $n,l$ and the argument $k_1$ of $\mathcal{F}_{n,l}$ is omitted for simplicity.
$I_{n,l}^{jj^*}$ and $I_{n,l}^{yj^*}$ are defined in Appendix~\ref{sec_appendix_definitions}, and those without arguments are assumed to have their arguments as $(k_1,R_1)$. 
\end{definition}
\begin{theorem}\label{sec_single_equivalent_theorem}
For both the forward channel and the reverse channel defined as above, the capacity of each channel is equal to
\begin{align}
\sum_{(n,m,l)\in\boldsymbol{\Upsilon}}
\log 
\left(1+p_{nml}h_{nml}^2\right)
\end{align}
for $\boldsymbol{\Upsilon}\triangleq\{(n,m,l)\in \mathbb{Z}^3:1\le n \le N, -n\le m\le n, l=1,2\}$, where $h_{nml}\triangleq\sqrt{3\alpha}\sqrt{\frac{\eta_{nml}}{4k_BTB}}$ for all $n,m,l$ and 
\begin{align*}
p_{nml}\triangleq
\max\left(\frac{1}{\lambda}-\frac{1}{h_{nml}^2},0\right)
\end{align*}
for $\lambda>0$ satisfying $\sum_{(n,m,l)\in\boldsymbol{\Upsilon}}p_{nml}=\mathrm{P}$. 
\end{theorem}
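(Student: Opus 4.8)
The plan is to diagonalize both channels into a family of independent scalar Gaussian sub-channels indexed by $(n,m,l)\in\boldsymbol{\Upsilon}$ and then invoke the classical water-filling capacity formula for parallel Gaussian channels. First I would treat the forward channel. Expanding the admissible source current inside $V$ in the normalized basis as $\mathbf{J}=\sum_{(n,m,l)\in\boldsymbol{\Upsilon}} a_{nml}\,\mathbf{v}_{nml}(k_1,\cdot)$, I would use the branch of the decomposition of $\obf{G}$ valid for $\mathbf{r}\in V^C$, $\mathbf{r}'\in V$ together with $\mathbf{E}(\mathbf{r})=i\omega\mu_0\int_V\obf{G}(\mathbf{r},\mathbf{r}')\mathbf{J}(\mathbf{r}')\,d\mathbf{r}'$ to show that, on $S$, the radiated field lands in the span of the outgoing modes $\{\mathbf{U}_{nml}(k_0,\cdot)\}$, with the coefficient of mode $(n,m,l)$ equal to $a_{nml}$ times a factor proportional to $\mathcal{T}_{n,l}(k_1)$. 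Because only orders $n\le N$ are used, the noiseless received field lives in a finite-dimensional space of dimension $|\boldsymbol{\Upsilon}|$, so the $K$ samples (with $K\gg|\boldsymbol{\Upsilon}|$) form a sufficient statistic in the limit.

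Next I would handle the two quadratic forms that define the per-mode signal-to-noise ratio. The power consumption $\frac{\omega\mu_0}{2}\langle\mathbf{J},\Im\{\obf{G}\},\mathbf{J}\rangle_V$ from Section~\ref{sec_preliminaries_electromagnetic} must be shown to diagonalize in the chosen basis; using the $\mathbf{r},\mathbf{r}'\in V$ branch of $\obf{G}$ and the orthogonality of the spherical harmonics over $\Omega$, the cross terms vanish and the form collapses to $\sum_{(n,m,l)}\tau_{nml}|a_{nml}|^2$ up to a common constant, where $\tau_{nml}$ is exactly the normalized per-mode power of the Definition. The signal strength delivered to $S$ per unit input coefficient is then captured by $\rho_{nml}$, so that $\eta_{nml}=\rho_{nml}/\tau_{nml}$ emerges as the natural per-mode efficiency.

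The remaining ingredient is the receive-side noise, and this is the step I expect to be the main obstacle. The sampled noise $\{\mathbf{E}_{\mathrm{noise}}(\mathbf{s}_j)\}_{j=1}^K$ carries the spatially correlated covariance of Corollary~\ref{sec_preliminaries_thermal_corollary}, so the sub-channels are not automatically independent at finite $K$. Here I would invoke the uniform-distribution hypothesis on $\boldsymbol{\Theta}_K$ together with the scaling $k_0R_2=\sqrt{K/\alpha}$, which drives $k_0R_2$ past the far-field threshold $4N^2/\pi$ of Section~\ref{sec_preliminaries_near} for every used order, to replace the sampling sum $\sum_{j=1}^K$ by $\frac{K}{4\pi}\int_{\boldsymbol{\Omega}}(\cdot)\,d\Omega$ as $K\to\infty$. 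The asymptotic orthogonality of the spherical harmonics sampled on uniformly distributed points then whitens and decorrelates the projected noise across $(n,m,l)$, while the far-field asymptotics of $\hat{H}_n^{(1)}$ together with the area-per-point $4\pi/(\alpha k_0^2)$ produce the constant that, after normalization, yields the per-mode gain $h_{nml}^2=3\alpha\,\eta_{nml}/(4k_BTB)$. The factor $3\alpha$ is precisely the bookkeeping of this sampling-to-integral conversion, and controlling the approximation errors uniformly over the finite index set $\boldsymbol{\Upsilon}$, so that they vanish in the capacity expression, is the technical crux.

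Finally I would dispatch the reverse channel by reciprocity. By the reciprocity of $\obf{G}$ already exploited in Section~\ref{sec_preliminaries_electromagnetic}, the coupling between a source on $S$ and the field inside $V$ is the conjugate transpose of the forward coupling, and the FDT noise inside $V$ carries the matching $\Im\{\obf{G}\}$ covariance; hence the reverse channel is information-theoretically equivalent to the forward one with identical per-mode gains $\eta_{nml}$, so the two capacities coincide. With both channels reduced to $|\boldsymbol{\Upsilon}|$ parallel independent Gaussian sub-channels of gain $h_{nml}$ and unit-variance noise under the sum-power constraint $\sum_{(n,m,l)\in\boldsymbol{\Upsilon}}p_{nml}=\mathrm{P}$, the standard water-filling solution $p_{nml}=\max(1/\lambda-1/h_{nml}^2,0)$ gives the claimed capacity $\sum_{(n,m,l)\in\boldsymbol{\Upsilon}}\log(1+p_{nml}h_{nml}^2)$.
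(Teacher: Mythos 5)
Your overall architecture is the paper's: modal diagonalization of $\obf{G}$, per-mode power and radiation quadratic forms giving $\eta_{nml}=\rho_{nml}/\tau_{nml}$, sampling-to-integral conversion, and water-filling. But the noise-whitening step as you describe it would fail. The projected noise covariance is the double sum $\frac{4\pi}{K}\sum_{j,j'}\mathbf{u}_{nml}(\mathbf{s}_j)^H\Im\{\obf{G}(\mathbf{s}_j,\mathbf{s}_{j'})\}\mathbf{u}_{\tilde n\tilde m\tilde l}(\mathbf{s}_{j'})$, and replacing sampling sums by integrals captures only the coherent $j\ne j'$ contribution, which turns out to be $O(\alpha)$; the dominant term — the one producing the white variance $4k_BTB\,\omega\mu_0k_0/(6\pi)$ in the theorem — comes from the diagonal self-terms $j=j'$, which are a measure-zero set in the double integral and are simply lost under your sum-to-integral conversion. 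The step the paper actually uses, and which your sketch is missing, is the pointwise decorrelation $\Im\{\obf{G}(\mathbf{s}_j,\mathbf{s}_{j'})\}\approx\Im\{\obf{G}_0(\mathbf{s}_j,\mathbf{s}_j)\}\delta_{j,j'}=\frac{k_0}{6\pi}\obf{I}\delta_{j,j'}$. This is justified not by orthogonality of sampled spherical harmonics (that is the mechanism for diagonalizing the \emph{signal} and for the subsequent single sum-to-integral step) but by the minimum-spacing clause $d(\Omega_{K,j_1},\Omega_{K,j_2})\ge\beta/\sqrt{K}$ combined with $k_0R_2=\sqrt{K/\alpha}$: the physical spacing is at least $\beta/(k_0\sqrt{\alpha})$, a fixed number of wavelengths independent of $K$, so choosing the sampling density $\alpha$ small makes the mutual interaction negligible. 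The same pointwise-decoupling approximation is what whitens the \emph{transmit power} quadratic form in the reverse channel, so dispatching the reverse case ``by reciprocity'' alone does not close it: the paper proves the equivalence through an explicit duality in which the reverse noise covariance per mode is shown to equal $4k_BTB\,\omega\mu_0\tau_{nml}$ by evaluating $\Re\{\langle\mathbf{v}_{nml},-i\obf{G},\mathbf{v}_{nml}\rangle_V\}$ (Appendix~\ref{sec_appendix_noise}, including the nontrivial identity $\langle\mathbf{v}_{nml},-i\obf{G},\mathbf{v}_{nml}\rangle_V=\langle\mathbf{v}_{nml}^\star,-i\obf{G},\mathbf{v}_{nml}^\star\rangle_V$), not by a formal transpose of the forward map.

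There is also a basis slip that matters precisely in the lossy regime the theorem addresses. You expand $\mathbf{J}=\sum a_{nml}\mathbf{v}_{nml}(k_1,\cdot)$, but for $k_1''\ne0$ the channel-matched basis is the conjugate family $\mathbf{v}_{nml}^\star$, because the source-side factor in the decomposition of $\obf{G}$ is $\mathbf{V}_{nml}^\star(k_1,\mathbf{r}')^H$. With your choice the coupling coefficient of mode $(n,m,l)$ is $\langle\mathbf{v}_{nml}^\star,\mathbf{v}_{nml}\rangle_V\,a_{nml}=(I_{n,l}^{jj}/I_{n,l}^{jj^*})\,a_{nml}$, and by Cauchy--Schwarz $\lvert I_{n,l}^{jj}\rvert\le I_{n,l}^{jj^*}$ with equality only when $k_1''=0$; since Appendix~\ref{sec_appendix_noise} shows the power form is the same in either basis, your source strictly under-couples and yields a per-mode gain smaller than $\eta_{nml}$, so you would not reproduce the stated $h_{nml}$ (the two bases coincide only in the lossless case, which is Theorem~\ref{sec_single_lossless_corollary}, not this theorem). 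A minor further point: the normalization approximation $\mathcal{N}_{n,l}^S\approx\sqrt{\alpha n(n+1)/K}$ requires only $K\gg\alpha N^2$, i.e., the samples lie outside the \emph{reactive} near field where $\lvert h_n^{(1)}(k_0R_2)\rvert\approx1/(k_0R_2)$ already holds; invoking the Fraunhofer threshold $4N^2/\pi$ is stronger than needed, though harmless as $K\to\infty$. With these repairs — conjugate source basis, pointwise noise decorrelation before the sum-to-integral step, and the explicit power/noise duality for the reverse channel — your water-filling conclusion goes through exactly as in the paper.
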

\begin{remark}
For any $n,l$, the efficiency $\eta_{nml}$ decreases rapidly after some threshold as $n$ increases when the lossy dielectric is used, i.e., $k_1''\ne 0$. 
Thus, we can always find an integer $N_0$ such that the capacity in \textbf{Theorem \ref{sec_single_equivalent_theorem}} increases for $N<N_0$ and remains as a constant for $N\ge N_0$ as $N$ increases.
\end{remark}
The wave numbers of the spherical waves and the normalized spherical vector waves are omitted if there is no confusion. Note that $m$ is chosen to be zero since the normalization coefficients only depend on the order $n$ and $l$ of the spherical vector waves. 

\subsubsection{Forward channel}\label{sec_single_equivalent_forward}
We derive the capacity of the forward channel as follows.
~\newline
\textbf{\emph{Transmitter.}}
Assume the current source $\mathbf{J}$ in $V$, which can be represented as
\begin{align}
\mathbf{J}
&=
\sum_{(n,m,l)\in\boldsymbol{\Upsilon}} J_{nml}\mathbf{v}_{nml}^\star,
\end{align}
where $J_{nml}\triangleq\left\langle\mathbf{v}_{nml}^\star,\mathbf{J}\right\rangle_V, \forall n,m,l$. 
From Section \ref{sec_preliminaries_dyadic}, DGF from $\mathbf{r}'\in V$ to $\mathbf{r}\in S$ can be decomposed as
\begin{align}
\obf{G}(\mathbf{r},\mathbf{r}')
=
ik_1\sum_{n,m,l}
\frac{\mathcal{T}_{n,l}\mathcal{N}_{n,l}^S\mathcal{N}_{n,l}^V}{n(n+1)}
\mathbf{u}_{nml}(\mathbf{r})
\mathbf{v}_{nml}^\star(\mathbf{r}')^H.
\end{align}
By using the decomposition of DGF and the definition of the source, the electric field on $S$ generated from the source is 
\begin{align}
\mathbf{E}
&=
\sum_{\boldsymbol{\Upsilon}}\mathcal{G}_{n,l}J_{nml}\mathbf{u}_{nml}
\label{single_equivalent_forward_1}
\end{align}
for $\mathcal{G}_{n,l}\triangleq-\omega\mu_0k_1\mathcal{T}_{n,l}\sqrt{I_{n,l}^{jj^*}}\mathcal{N}_{n,l}^S/\sqrt{n(n+1)}$,
where the equalities follow by using the orthogonality of the spherical vector waves and $\mathcal{N}_{n,l}^V=\sqrt{n(n+1)I_{n,l}^{jj^*}}$. 
~\newline
\textbf{\emph{Receiver.}}
Suppose the receiver measures the electric field on $\mathbf{s}_1,...,\mathbf{s}_K$, which is the sum of the electric field generated from the source and the thermal noise, i.e.,
\begin{align}
\mathbf{E}(\mathbf{s}_j)+\mathbf{E}_{\mathrm{noise}}(\mathbf{s}_j),j=1,...,K.
\end{align}
Then, the channel output is defined by using the electric field on $\mathbf{s}_1,...,\mathbf{s}_K$ as
\begin{align}
E_{nml}
&\triangleq
\sqrt{\frac{4\pi}{K}}
\sum_{j=1}^K
\mathbf{u}_{nml}(\mathbf{s}_j)^H[\mathbf{E}(\mathbf{s}_j)+\mathbf{E}_{\mathrm{noise}}(\mathbf{s}_j)]
\end{align}
for all $(n,m,l)\in\boldsymbol{\Upsilon}$. Let us define
\begin{align}
E_{nml,\mathrm{signal}}
&\triangleq
\sqrt{\frac{4\pi}{K}}
\sum_{j=1}^K
\mathbf{u}_{nml}(\mathbf{s}_j)^H
\mathbf{E}(\mathbf{s}_j),\\
E_{nml,\mathrm{noise}}
&\triangleq
\sqrt{\frac{4\pi}{K}}
\sum_{j=1}^K
\mathbf{u}_{nml}(\mathbf{s}_j)^H
\mathbf{E}_{\mathrm{noise}}(\mathbf{s}_j),
\end{align}
which represent the contributions of the signal and the noise electric field on the channel output, respectively.
Then, $E_{nml,\mathrm{signal}}$ is equal to
\begin{align}
\sum_{(\tilde{n},\tilde{m},\tilde{l})\in\boldsymbol{\Upsilon}}\mathcal{G}_{\tilde{n},\tilde{l}}J_{\tilde{n}\tilde{m}\tilde{l}}
\left[
\sqrt{\frac{4\pi}{K}}\sum_{j=1}^K\mathbf{u}_{nml}(\mathbf{s}_j)^H\mathbf{u}_{\tilde{n}\tilde{m}\tilde{l}}(\mathbf{s}_j)
\right]
\end{align}
for all $(n,m,l)\in\boldsymbol{\Upsilon}$.
For the approximation of $E_{nml,\mathrm{signal}}$ for sufficiently large $K$, the following two approximations is used. 
The first approximation is 
\begin{align}
\sqrt{\frac{4\pi}{K}}\sum_{j=1}^K\mathbf{u}_{nml}(\mathbf{s}_j)^H\mathbf{u}_{\tilde{n}\tilde{m}\tilde{l}}(\mathbf{s}_j)
\approx
\sqrt{\frac{K}{4\pi}}
\left\langle\mathbf{u}_{nml},\mathbf{u}_{\tilde{n}\tilde{m}\tilde{l}}\right\rangle_S
=
\sqrt{\frac{K}{4\pi}}
\delta_{n,\tilde{n}}
\delta_{m,\tilde{m}}
\delta_{l,\tilde{l}}
\end{align}
for sufficiently large $K$, where $\delta_{a,b}=1$ if $a=b$ and $0$ if $a\ne b$. This follows since the angular positions of $\mathbf{s}_1,...,\mathbf{s}_K$ are uniformly distributed, and the summation can be approximately equal to the integral over all angular positions. 
The second approximation is done for $\mathcal{N}_{n,l}^S$ contained in $\mathcal{G}_{n,l}$, i.e., 
\begin{align}
\mathcal{N}_{n,l}^S\approx\sqrt{\frac{\alpha n(n+1)}{K}}
\end{align}
for all $(n,m,l)\in\boldsymbol{\Upsilon}$, where the approximation holds for $K\gg \alpha N^2$
since $\lvert h_n^{(1)}(k_0R_2)\rvert\approx 1/(k_0R_2)$ for $k_0R_2\gg n$ and $k_0R_2=\sqrt{K/\alpha}$ from the definition of the uniform distribution. 
%
In other words, \emph{the approximation for $\mathcal{N}_{n,l}^S$ holds when $\mathbf{s}_1,...,\mathbf{s}_K$ are sufficiently far from the reactive near-field region of the spherical vector waves with order $N$.} 
As a result, by using those two approximations, the channel output is approximately equal to
\begin{align}
\mathcal{H}_{n,l}J_{nml}+E_{nml,\mathrm{noise}}, \forall (n,m,l)\in\boldsymbol{\Upsilon},
\end{align}
where $\mathcal{H}_{n, l}\triangleq-\omega\mu_0k_1\mathcal{T}_{n,l}\sqrt{\alpha I_{n,l}^{jj^*}/(4\pi)}$.
~\newline
\textbf{\emph{Power consumption.}}
The electric field generated from the source $\mathbf{v}_{nml}^\star$ is derived in Appendix~\ref{sec_appendix_field}. 
Then, by using the orthogonality of $\mathbf{A}_{nm}^{(1)}, \mathbf{A}_{nm}^{(2)}, \mathbf{A}_{nm}^{(3)}$, the complex power of the source $\mathbf{v}_{nml}^\star$ is equal to
\begin{align}
-\frac{1}{2}\left\langle\mathbf{v}_{nml}^\star,i\omega\mu_0\obf{G},\mathbf{v}_{nml}^\star\right\rangle_V
=
\frac{\omega\mu_0}{2}
\left(
\frac{\mathcal{F}_{n,l}I_{n,l}^{jj}}{I_{n,l}^{jj^*}}
+
\frac{1}{4k_1'k_1''}
\right).
\end{align}
The power consumption is the real part of the complex power, which is equal to
\begin{align}
\frac{\omega\mu_0}{2}\tau_{nml},
\end{align}
where $\tau_{nml}$ is defined in \textbf{Theorem~\ref{sec_single_equivalent_theorem}}.
Since the complex power due to the source $\mathbf{J}=\sum_{\boldsymbol{\Upsilon}}J_{nml}\mathbf{v}_{nml}^\star$ is equal to
\begin{align}
-\frac{1}{2}
\left\langle\mathbf{J},i\omega\mu_0\obf{G},\mathbf{J}\right\rangle_V
&=
-\frac{1}{2}
\sum_{\boldsymbol{\Upsilon}}
|J_{nml}|^2
\left\langle\mathbf{v}_{nml}^\star,i\omega\mu_0\obf{G},\mathbf{v}_{nml}^\star\right\rangle_V,
\end{align}
from the orthogonality of the spherical vector waves, the power consumed by the source $\mathbf{J}$ is
\begin{align}
\frac{\omega\mu_0}{2}
\sum_{\boldsymbol{\Upsilon}}|J_{nml}|^2
\tau_{nml}.
\end{align}
Here, the power consumption averaged over $J_{nml}$'s should be bounded above by the power constraint $\mathrm{P}$. 
~\newline
\textbf{\emph{Radiation power.}}
From~\cite{harrington1961time}, the radiation power is equal to
\begin{align}
\frac{1}{2}\oint_S\Re\{\mathbf{E}\times\mathbf{H}^*\}^T\hat{\mathbf{r}}R_2^2	d\Omega,
\end{align}
where $\mathbf{H}$ is the magnetic field generated from the source.
Since the radiated power should be the same for all $R_2>R_1$, it is equal to
\begin{align}
\frac{1}{2\mathrm{Z}_0}\lim_{R_2\rightarrow\infty}R_2^2\left\langle\mathbf{E},\mathbf{E}\right\rangle_S
\end{align}
for the free-space wave impedance $\mathrm{Z}_0\triangleq\sqrt{\mu_0/\epsilon_0}$.
By using the orthogonality of $\mathbf{u}_{nml}$'s, the radiation power is equal to
\begin{align}
\frac{1}{2\mathrm{Z}_0}
\sum_{\boldsymbol{\Upsilon}}|J_{nml}|^2
\frac{\omega^2\mu_0^2I_{n,l}^{jj^*}|k_1\mathcal{T}_{n,l}|^2}{n(n+1)}\lim_{R_2\rightarrow\infty}(R_2\mathcal{N}_{n,l}^S)^2.
\end{align}
Finally, since the limit is equal to$n(n+1)/(k_0R_2)^2$ and $\omega\mu_0=\mathrm{Z}_0k_0$,
the power radiated by the source $\mathbf{J}$ is equal to 
\begin{align}
\frac{\omega\mu_0}{2}\sum_{\boldsymbol{\Upsilon}}|J_{nml}|^2\rho_{nml}, 
\end{align}
where $\rho_{nml}$ is defined in \textbf{Theorem~\ref{sec_single_equivalent_theorem}}. 
Note that the radiation efficiency~\cite{balanis2005antenna} of the spherical vector wave with order $(n,m,l)$ is equal to 
$\eta_{nml}$, which is also defined in the theorem. 
~\newline
\textbf{\emph{Noise statistics.}}
The noise $E_{nml,\mathrm{noise}}$ satisfies
\begin{align}
E\{E_{nml,\mathrm{noise}}E_{n'm'l',\mathrm{noise}}^*\}
&=
4k_BTB\omega\mu_0
\left[
\frac{4\pi}{K}
\sum_{j=1}^K\sum_{\tilde{j}=1}^K
\mathbf{u}_{nml}(\mathbf{s}_j)^H
\Im\{\obf{G}(\mathbf{s}_j,\mathbf{s}_{\tilde{j}})\}
\mathbf{u}_{\tilde{n}\tilde{m}\tilde{l}}(\mathbf{s}_{\tilde{j}})
\right],
\end{align}
where the equality holds by using the noise statistics in \textbf{Corollary \ref{sec_preliminaries_thermal_corollary}}. 
Since $\mathbf{s}_j$'s are uniformly distributed, the electromagnetic interaction among those sampled points is ignorable and the following approximation holds:
\begin{align}
\Im\{\obf{G}(\mathbf{s}_j,\mathbf{s}_{j'})\}
&\approx
\Im\{\obf{G}_0(\mathbf{s}_j,\mathbf{s}_{j})\}\delta_{j,j'}=
\dfrac{k_0}{6\pi}\obf{I}\delta_{jj'},
\end{align}
where $\obf{G}_0$ is the free-space DGF and $\obf{I}$ is the identity operator~\cite{novotny2006principles}. 
Thus, the statistics can be approximated as 
\begin{align}
E\{E_{nml,\mathrm{noise}}E_{n'm'l',\mathrm{noise}}^*\}
&\approx
4k_BTB
\frac{\omega\mu_0k_0}{6\pi}
\left[
\frac{4\pi}{K}
\sum_{j=1}^K
\mathbf{u}_{nml}(\mathbf{s}_j)^H
\mathbf{u}_{\tilde{n}\tilde{m}\tilde{l}}(\mathbf{s}_j)
\right].
\end{align}
Also, by using the condition that $\mathbf{s}_j$'s are uniformly distributed, it can be again approximated as 
\begin{align}
4k_BTB \frac{\omega\mu_0k_0}{6\pi}
\left\langle\mathbf{u}_{nml},\mathbf{u}_{\tilde{n}\tilde{m}\tilde{l}}\right\rangle_S=
4k_BTB \frac{\omega\mu_0k_0}{6\pi}
\delta_{n,\tilde{n}}\delta_{m,\tilde{m}}\delta_{l,\tilde{l}}.
\end{align}
~\newline
\textbf{\emph{Information-theoretically equivalent channel.}}
The channel output, the power consumption and radiation from the source and the noise statistics were derived in the previous section for sufficiently large $K$, and the result is summarized as follows.
The channel output $E_{nml}$ is equal to
\begin{align}
E_{nml}=
\mathcal{H}_{n,l}J_{nml}+E_{nml,\mathrm{noise}},
\end{align}
for all $(n,m,l)\in\boldsymbol{\Upsilon}$ and $\mathcal{H}_{n,l}=-\omega\mu_0k_1\mathcal{T}_{n,l}\sqrt{\alpha I_{n,l}^{jj^*}/(4\pi)}$, where the
where the power constraint is
\begin{align}
\frac{\omega\mu_0}{2}\sum_{\boldsymbol{\Upsilon}}|J_{nml}|^2\tau_{nml}\le
\mathrm{P},
\end{align}
and the noise follows
\begin{align}
E\{E_{nml,\mathrm{noise}}E_{\tilde{n}\tilde{m}\tilde{l},\mathrm{noise}}^*\}=4k_BTB\frac{\omega\mu_0k_0}{6\pi}\delta_{n,\tilde{n}}\delta_{m,\tilde{m}}\delta_{l,\tilde{l}}.
\end{align}
Let the input, the output and the noise of the channel be
\begin{align}
X_{nml}\triangleq
\sqrt{\omega\mu_0\tau_{nml}}J_{nml},
Y_{nml}\triangleq
\sqrt{\frac{6\pi}{\omega\mu_0k_0}}E_{nml},
Z_{nml}\triangleq
\sqrt{\frac{6\pi}{\omega\mu_0k_0}}E_{nml,\mathrm{noise}}
\end{align}
for all $(n,m,l)\in\boldsymbol{\Upsilon}$. 
Then, the channel output satisfies
\begin{align}
Y_{nml}=\sqrt{\frac{3\alpha}{2}}\left(-k_1\mathcal{T}_{n,l}\sqrt{\frac{I_{n,l}^{jj^*}}{k_0\tau_{nml}}}\right)X_{nml}+Z_{nml},
\end{align}
where $(1/2)\sum_{\boldsymbol{\Upsilon}}|X_{nml}|^2\le\mathrm{P}$ and $E\{Z_{nml}Z_{\tilde{n}\tilde{m}\tilde{l}}^*\}=4k_BTB\delta_{n,\tilde{n}}\delta_{m,\tilde{m}}\delta_{l,\tilde{l}}$.
Without loss of generality, the channel gain can be regarded as
\begin{align}
\sqrt{\frac{3\alpha}{2}}\left\lvert -k_1\mathcal{T}_{n,l}\sqrt{\frac{I_{n,l}^{jj^*}}{k_0\tau_{nml}}}\right\rvert
\end{align}
for circularly symmetric Gaussian noise~\cite{tse2005fundamentals}, and this is equal to
\begin{align}
\sqrt{\frac{3\alpha}{2}}\sqrt{\eta_{nml}}.
\end{align}
The derivation of the capacity of this channel is straightforward by using the waterfilling power allocation in multi-antenna channel~\cite{tse2005fundamentals}.

\subsubsection{Reverse channel}\label{sec_single_equivalent_reverse}

Now, let us derive the capacity of the reverse channel. Here, we can find the exact duality between the forward and the reverse channel, i.e., the power consumption (the noise statistics) of the reverse channel is related to the noise statistics (the power consumption) of the forward channel. Due to such duality, it can be shown that the capacity of the reverse channel is equal to that of the forward channel, which is derived as follows:
~\newline
\textbf{\emph{Transmitter.}}
Assume $K$ point sources are located on $\mathbf{s}_1,...,\mathbf{s}_K$. The current density for such point sources is
\begin{align}
\mathbf{J}(\mathbf{r})=\sum_{j=1}^K \mathbf{d}_j\delta(\mathbf{r}-\mathbf{s}_j)
\end{align}
for some vector $\mathbf{d}_j,j=1,...,K$. 
From Section \ref{sec_preliminaries_dyadic}, the DGF from the source point $\mathbf{r}'\in S$ to $\mathbf{r}\in V$ can be decomposed as 
\begin{align}
\obf{G}(\mathbf{r},\mathbf{r}')
=
ik_0\sum_{nml}
\frac{\mathcal{C}\mathcal{T}_{n,l}\mathcal{N}_{n,l}^V \mathcal{N}_{n,l}^S}{n(n+1)}
\mathbf{v}_{nml}(\mathbf{r})
\mathbf{u}_{nml}^\star(\mathbf{r}')^H.
\end{align}
From the definition of the dyadic Green function in Section \ref{sec_preliminaries_dyadic}, the electric field in $V$ generated from the source is equal to
\begin{align}
\mathbf{E}_{\mathrm{signal}}
=
\sum_{nml}
\mathcal{G}_{n,l}
\left[
\sum_{j=1}^K \mathbf{u}_{nml}^\star(\mathbf{s}_j)^H\mathbf{d}_j
\right]
\mathbf{v}_{nml}
,
\mathbf{r}\in V,
\end{align}
where $\mathcal{G}_{n,l}=-\omega\mu_0k_0\mathcal{C}\mathcal{T}_{n,l}\mathcal{N}_{n,l}^V\mathcal{N}_{n,l}^S/(n(n+1)),\forall n,m,l$. Here, note that $\mathcal{G}_{n,l}$ is equal to that in the forward channel since $k_1=k_0\mathcal{C}$ and $\mathcal{N}_{n,l}^V=\sqrt{n(n+1)I_{n,l}^{jj^*}}$.
For the reverse channel, we assume the channel input $X_{nml}$ is related to $\mathbf{d}_j$ as 
\begin{align}
\mathbf{d}_j
=
\sqrt{\frac{6\pi}{\omega\mu_0k_0}}
\left[
\sqrt{\frac{4\pi}{K}}
\sum_{\boldsymbol{\Upsilon}}
\mathbf{u}_{nml}^\star(\mathbf{s}_j)X_{nml}
\right]
\label{eq:sec_single_equivalent_reverse_1}
\end{align}
for all $j=1,...,K$. Then, the electric field in $V$ due to the source is equal to 
\begin{align}
\sqrt{\frac{6\pi}{\omega\mu_0k_0}}
\sum_{nml}\mathcal{G}_{n,l}
\mathbf{v}_{nml}
\sum_{(\tilde{n},\tilde{m},\tilde{l})\in\boldsymbol{\Upsilon}}X_{\tilde{n}\tilde{m}\tilde{l}}
\left[
\sqrt{\frac{4\pi}{K}}
\sum_{j=1}^K \mathbf{u}_{nml}^\star(\mathbf{s}_j)^H
\mathbf{u}_{\tilde{n}\tilde{m}\tilde{l}}^\star(\mathbf{s}_j)
\right].
\end{align}
Due to the condition that $\mathbf{s}_1,...,\mathbf{s}_K$ are uniformly distributed and the approximation for $\mathcal{G}_{n,l}$ outside the reactive near-field region, the electric field is approximately the same as
\begin{align}
\sqrt{\frac{6\pi}{\omega\mu_0k_0}}
\sum_{\boldsymbol{\Upsilon}}
\mathcal{H}_{n,l}X_{nml}
\mathbf{v}_{nml}
\end{align}
for sufficiently large $K$. Note that the approximation is similar to that in the forward channel. 
~\newline
\textbf{\emph{Receiver.}}
Suppose the receiver measures the electric field in $V$, which is the sum of the electric field generated from the source and the thermal noise, i.e.,
\begin{align}
\mathbf{E}_{\mathrm{signal}}+\mathbf{E}_{\mathrm{noise}}.
\end{align}
The channel output is defined by using the values of the electric field in $V$ as
\begin{align}
E_{nml}
\triangleq
\left\langle\mathbf{v}_{nml},\mathbf{E}_{\mathrm{signal}}+\mathbf{E}_{\mathrm{noise}}\right\rangle_V
\approx
\sqrt{\frac{6\pi}{\omega\mu_0k_0}}\mathcal{H}_{n,l}X_{nml}+E_{nml,\mathrm{noise}}
\end{align}
for all $(n,m,l)\in\boldsymbol{\Upsilon}$, where $E_{nml,\mathrm{noise}}\triangleq\left\langle\mathbf{v}_{nml},\mathbf{E}_{\mathrm{noise}}\right\rangle_V$ for all $n,m,l$.
~\newline
\textbf{\emph{Power consumption.}}
Let $V_t$ be the source region including the points $\mathbf{s}_1,...,\mathbf{s}_K$.
Then, the power consumption at the source is equal to 
\begin{align}
\frac{\omega\mu_0}{2}
\left\langle\mathbf{J},\Im\{\obf{G}\},\mathbf{J}\right\rangle_{V_t}
=
\frac{\omega\mu_0}{2}
\sum_{j=1}^K\sum_{\tilde{j}=1}^K
\left\langle\mathbf{d}_j\Delta_j,\Im\{\obf{G}\},\mathbf{d}_{\tilde{j}}\Delta_{\tilde{j}}\right\rangle_{V_t}
\end{align}
for $\Delta_j(\mathbf{r})\triangleq\delta(\mathbf{r}-\mathbf{s}_j), j=1,...,K$. 
Due to the condition that $\mathbf{s}_1,...,\mathbf{s}_K$ are uniformly distributed, the electromagnetic interaction among those points are negligible i.e.,
\begin{align}
\Im\{\obf	{G}(\mathbf{s}_j,\mathbf{s}_{\tilde{j}})\}
\approx\Im\{\obf{G}_0(\mathbf{s}_j,\mathbf{s}_j)\}\delta_{j,\tilde{j}}
=\frac{k_0}{6\pi}\obf{I}\delta_{j,\tilde{j}}
\end{align}
for the free-space DGF $\obf{G}_0$. Thus, the power consumption is approximately equal to
\begin{align}
\frac{\omega\mu_0k_0}{12\pi}
\sum_{j=1}^K
\left\lVert\mathbf{d}_j\right\rVert^2.
\end{align}
By substituting \eqref{eq:sec_single_equivalent_reverse_1}, the approximation is equal to
\begin{align}
\frac{1}{2}\sum_{(n,m,l)\in\boldsymbol{\Upsilon}}
\sum_{(\tilde{n},\tilde{m},\tilde{l})\in\boldsymbol{\Upsilon}}
X_{nml}^*X_{\tilde{n}\tilde{m}\tilde{l}}\left[
\frac{4\pi}{K}
\sum_{j=1}^K
\mathbf{u}_{nml}^\star(\mathbf{s}_j)^H
\mathbf{u}_{\tilde{n}\tilde{m}\tilde{l}}^\star(\mathbf{s}_j)
\right].
\label{eq:sec_single_equivalent_reverse_2}
\end{align}
In addition, since the term in the bracket in~\eqref{eq:sec_single_equivalent_reverse_2} can be approximated as $\left\langle\mathbf{u}_{nml}^\star,\mathbf{u}_{\tilde{n}\tilde{m}\tilde{l}}^\star\right\rangle_S$ due to the condition of the uniform distribution, it is approximately equal to 
\begin{align}
\frac{1}{2}\sum_{\boldsymbol{\Upsilon}}|X_{nml}|^2.
\end{align}
~\newline
\textbf{\emph{Noise statistics.}}
From the noise model based on FDT, the noise $E_{nml}$ satisfies
\begin{align}
E\{E_{nml,\mathrm{noise}}E_{\tilde{n}\tilde{m}\tilde{l},\mathrm{noise}}^*\}
=
4k_BTB\omega\mu_0
\left\langle\mathbf{v}_{nml},\Im\{\obf{G}\},\mathbf{v}_{\tilde{n}\tilde{m}\tilde{l}}\right\rangle_V,
\end{align}
where the term in the bracket is equal to$-\frac{i}{2}\left(\left\langle\mathbf{v}_{nml},\obf{G},\mathbf{v}_{n'm'l'}\right\rangle_V-\left\langle\mathbf{v}_{nml},\obf{G}^*,\mathbf{v}_{n'm'l'}\right\rangle_V\right)$.
Since
\begin{align}
\left\langle\mathbf{v}_{nml},\obf{G},\mathbf{v}_{\tilde{n}\tilde{m}\tilde{l}}\right\rangle_V
=
\left\langle\mathbf{v}_{nml},\obf{G},\mathbf{v}_{nml}\right\rangle_V
\delta_{n,\tilde{n}}\delta_{m,\tilde{m}}\delta_{l,\tilde{l}},
\end{align}
and
\begin{align}
\left\langle\mathbf{v}_{nml},\obf{G}^*,\mathbf{v}_{\tilde{n}\tilde{m}\tilde{l}}\right\rangle_V
&=
\left\langle\mathbf{v}_{nml},\obf{G},\mathbf{v}_{nml}\right\rangle_V^*
\delta_{n,\tilde{n}}\delta_{m,\tilde{m}}\delta_{l,\tilde{l}},
\end{align}
which follows from the orthogonality among spherical vector waves, the noise statistics is equal to
\begin{align}
E\{E_{nml,\mathrm{noise}}E_{\tilde{n}\tilde{m}\tilde{l},\mathrm{noise}}^*\}=4k_BTB\omega\mu_0\Re\{\left\langle\mathbf{v}_{nml},-i\obf{G},\mathbf{v}_{nml}\right\rangle_V\}\delta_{n,\tilde{n}}\delta_{m,\tilde{m}}\delta_{l,\tilde{l}}.
\end{align}
As a result, it becomes
\begin{align}
4k_BTB\omega\mu_0\tau_{nml}\delta_{n,\tilde{n}}\delta_{m,\tilde{m}}\delta_{l,\tilde{l}}.
\end{align}
since $\left\langle\mathbf{v}_{nml},-i\obf{G},\mathbf{v}_{nml}\right\rangle_V=\left\langle\mathbf{v}_{nml}^\star,-i\obf{G},\mathbf{v}_{nml}^\star\right\rangle_V$ and its real part is equal to$\tau_{nml}$. The proof for $\left\langle\mathbf{v}_{nml},-i\obf{G},\mathbf{v}_{nml}\right\rangle_V=\left\langle\mathbf{v}_{nml}^\star,-i\obf{G},\mathbf{v}_{nml}^\star\right\rangle_V$  is available in Appendix~\ref{sec_appendix_noise}.
~\newline
\textbf{\emph{Information-theoretically equivalent channel.}}
In summary, the channel output for sufficiently large $K$ is approximately equal to
\begin{align}
E_{nml}=
\sqrt{\frac{6\pi}{\omega\mu_0k_0}}
\mathcal{H}_{n,l}X_{nml}+E_{nml,\mathrm{noise}},
\end{align}
for all $(n,m,l)\in\boldsymbol{\Upsilon}$ and $\mathcal{H}_{n,l}=-\omega\mu_0k_1\mathcal{T}_{n,l}\sqrt{\alpha I_{n,l}^{jj^*}/(4\pi)}$, where the
where the power constraint is
\begin{align}
\frac{1}{2}\sum_{\boldsymbol{\Upsilon}}|X_{nml}|^2\le\mathrm{P},
\end{align}
and the noise follows
\begin{align}
E\{E_{nml,\mathrm{noise}}E_{\tilde{n}\tilde{m}\tilde{l},\mathrm{noise}}^*\}
=
4k_BTB\omega\mu_0\tau_{nml}\delta_{n,\tilde{n}}\delta_{m,\tilde{m}}\delta_{l,\tilde{l}}.
\end{align}
Let the input, the output and the noise of the channel be
\begin{align}
Y_{nml}
\triangleq
\frac{1}{\sqrt{\omega\mu_0\tau_{nml}}}E_{nml},
Z_{nml}
\triangleq
\frac{1}{\sqrt{\omega\mu_0\tau_{nml}}}E_{nml,\mathrm{noise}}.
\end{align}
Then, the channel output of the channel is equal to
\begin{align}
Y_{nml}=\sqrt{\frac{6\pi}{\omega\mu_0k_0}}\frac{\mathcal{H}_{n,l}}{\sqrt{\omega\mu_0\tau_{nml}}} X_{nml}+Z_{nml},
\end{align}
where $(1/2)\sum_{\boldsymbol{\Upsilon}}|X_{nml}|^2\le\mathrm{P}$ and $E\{Z_{nml}Z_{n'm'l'}^*\}=4k_BTB\delta_{nml,n'm'l'}$. The absolute value of the channel gain of the channel is equal to
\begin{align}
\left\lvert\sqrt{\frac{3\alpha}{2}}\left(-k_1\mathcal{T}_{n,l}\sqrt{\frac{I_{n,l}^{jj^*}}{k_0\tau_{nml}}}\right)\right\rvert
=
\sqrt{\frac{3\alpha}{2}}\sqrt{\eta_{nml}},
\end{align}
which is the same as the gain of the forward channel derive before.
Similar to the forward channel, the derivation of the capacity of this channel is straightforward by using the waterfilling power allocation in multi-antenna channel~\cite{tse2005fundamentals}.

\subsection{Lossless dielectric sphere}\label{sec_single_lossless}

Consider the forward channel and assume the sphere filled with lossless dielectric, i.e., $k_1''\rightarrow0$.
Assume there is the source in $V$, i.e., 
\begin{align}
\mathbf{J}=\sum_{\boldsymbol{\Upsilon}}J_{nml}\mathbf{v}_{nml},
\end{align}
where $J_{nml}\triangleq\left\langle\mathbf{v}_{nml},\mathbf{J}\right\rangle_V, \forall n,m,l$. Note that the orthogonal basis is $\mathbf{v}_{nml}$ since the spherical Bessel function $j_n(k_1r)$ of the first kind is a real-valued function for $r>0$ and $\mathbf{v}_{nml}^\star=\mathbf{v}_{nml}$ for lossless case. 
The electric field generated from the source $\mathbf{v}_{nml}$ can be derived similar to that for the lossy dielectric sphere, and is equal to
\begin{align}
\mathbf{E}_{nml}
&=
-\frac{\omega\mu_0}{\sqrt{n(n+1)I_{n,l}^{jj}}}
\bigg\{
\frac{k_1}{n(n+1)}
\mathbf{C}_{nml}
+
\frac{i}{k_1^2}[\hat{\mathbf{r}}^T \mathbf{V}_{nml}]\hat{\mathbf{r}}
\bigg\},
\end{align}
where $\mathbf{C}_{nml}\triangleq\mathbf{C}_{1,nml}+i\mathbf{C}_{2,nml}$ for
\begin{align}
\mathbf{C}_{1,nml}
&\triangleq
(1+\mathcal{R}_{n,l})\left\langle\mathbf{V}_{nml},\mathbf{V}_{nml}\right\rangle_V\mathbf{V}_{nml},
\\
\mathbf{C}_{2,nml}
&\triangleq
\left\langle\mathbf{V}_{nml},\mathbf{V}_{nml}\right\rangle_{V(r)}\mathbf{W}_{nml}
+
[
\left\langle\mathbf{W}_{nml},\mathbf{V}_{nml}\right\rangle_V
-
\left\langle\mathbf{W}_{nml},\mathbf{V}_{nml}\right\rangle_{V(r)}
]
\mathbf{V}_{nml},
\end{align}
where $V(r)$ is a sphere with radius $r$ centered at the origin. 
The power consumed by the source $\mathbf{J}$ is equal to 
\begin{align}
\Re\left\{-\frac{1}{2}\left\langle\mathbf{J},\sum_{\boldsymbol{\Upsilon}}\mathbf{E}_{nml}\right\rangle_V\right\}
=
\frac{1}{2}
\sum_{\boldsymbol{\Upsilon}}
|J_{nml}|^2
\Re
\left\{-\left\langle\mathbf{v}_{nml},\mathbf{E}_{nml}\right\rangle_V\right\},
\end{align}
due to the orthogonality of the spherical vector waves. Since the inner product $\left\langle\mathbf{v}_{nml},\mathbf{C}_{2,nml}\right\rangle_V$ and $\left\langle\mathbf{v}_{nml},[\hat{\mathbf{r}}^T\mathbf{V}_{nml}]\hat{\mathbf{r}}\right\rangle_V$ are real-valued, the power consumption is equal to
\begin{align}
\frac{\omega\mu_0}{2}
\sum_{\boldsymbol{\Upsilon}}
\frac{|J_{nml}|^2}{(\mathcal{N}_{n,l}^V)^2}
\frac{k_1}{n(n+1)}
\Re
\left\{\left\langle\mathbf{V}_{nml},
\mathbf{C}_{1,nml}\right\rangle_V\right\}.
\end{align}
Since $\left\langle\mathbf{V}_{nml},\mathbf{C}_{1,nml}\right\rangle_V=(1+\mathcal{R}_{n,l})(\mathcal{N}_{n,l}^V)^4$, the power consumption is equal to
\begin{align}
\frac{\omega\mu_0}{2}
\sum_{\boldsymbol{\Upsilon}}|J_{nml}|^2
k_1 I_{n,l}^{jj}\Re\{1+\mathcal{R}_{n,l}\},
\end{align}
since $(\mathcal{N}_{n,l}^V)^2=n(n+1)I_{n,l}^{jj}$.
In addition, the radiation power is equal to
\begin{align}
\frac{\omega\mu_0}{2}
\sum_{\boldsymbol{\Upsilon}}|J_{nml}|^2
\frac{I_{n,l}^{jj}k_1^2|\mathcal{T}_{n,l}|^2}{k_0}.
\end{align}
Therefore, the radiation efficiency of each spherical wave is
\begin{align}
\frac{\mathcal{C}|\mathcal{T}_{n,l}|^2}{\Re\{1+\mathcal{R}_{n,l}\}},
\end{align}
which is shown to be 1 for any real-valued $k_1$. The proof is in Appendix~\ref{sec_appendix_lossless_efficiency}. As a result, we derive the following theorem by using a similar approach as in the previous section:
\begin{theorem}\label{sec_single_lossless_corollary}
For both the forward channel and the reverse channel defined as above, the capacity of each channel is equal to
\begin{align}
\sum_{(n,m,l)\in\boldsymbol{\Upsilon}}
\log 
\left(1+\frac{\mathrm{P}}{2N(N+2)}h_{nml}^2\right)
\end{align}
for $\boldsymbol{\Upsilon}\triangleq\{(n,m,l)\in \mathbb{Z}^3:1\le n \le N, -n\le m\le n, l=1,2\}$, where $h_{nml}\triangleq\sqrt{\frac{3\alpha}{4k_BTB}}$ for all $n,m,l$. 
\end{theorem}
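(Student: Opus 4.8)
The plan is to specialize the capacity analysis of Theorem~\ref{sec_single_equivalent_theorem} to the lossless case $k_1''\to 0$ and then exploit the fact that all per-mode channel gains collapse to a single common value, so that the waterfilling allocation of Theorem~\ref{sec_single_equivalent_theorem} degenerates to an equal split of the power budget $\mathrm{P}$ across the $|\boldsymbol{\Upsilon}|$ active modes. By the forward/reverse duality already established in Section~\ref{sec_single_equivalent}, it suffices to treat the forward channel; the reverse channel then inherits the same capacity.

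First I would establish that the radiation efficiency of each spherical mode is exactly $1$ when $k_1$ is real. From the lossless power-consumption and radiation-power expressions derived just above, the per-mode efficiency is $\mathcal{C}|\mathcal{T}_{n,l}|^2/\Re\{1+\mathcal{R}_{n,l}\}$, and the key lemma (Appendix~\ref{sec_appendix_lossless_efficiency}) is the Wronskian-type identity $\mathcal{C}|\mathcal{T}_{n,l}|^2=\Re\{1+\mathcal{R}_{n,l}\}$. This expresses energy conservation: since no power is dissipated in a lossless medium, every mode radiates all the power it consumes. Hence $\eta_{nml}=1$ for every $(n,m,l)\in\boldsymbol{\Upsilon}$.

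Next I would feed $\eta_{nml}=1$ into the information-theoretic equivalence of Section~\ref{sec_single_equivalent}. Repeating the forward-channel reduction with the real-valued basis $\mathbf{v}_{nml}$ in place of $\mathbf{v}_{nml}^\star$, one again obtains a bank of parallel Gaussian subchannels; but now every subchannel has the identical gain $\sqrt{3\alpha/2}\,\sqrt{\eta_{nml}}=\sqrt{3\alpha/2}$, so that $h_{nml}=\sqrt{3\alpha/(4k_BTB)}$ is the same constant for all modes. When all parallel channels share a common gain, the waterfilling level is flat and the optimal allocation is uniform, $p_{nml}=\mathrm{P}/|\boldsymbol{\Upsilon}|$ for every active mode.

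The remaining step is the mode count. Since $\boldsymbol{\Upsilon}=\{(n,m,l):1\le n\le N,\,-n\le m\le n,\,l=1,2\}$, summing $2(2n+1)$ over $n=1,\dots,N$ gives $|\boldsymbol{\Upsilon}|=2\sum_{n=1}^N(2n+1)=2N(N+2)$, so $p_{nml}=\mathrm{P}/(2N(N+2))$. Substituting this uniform allocation and the common $h_{nml}$ into the capacity expression of Theorem~\ref{sec_single_equivalent_theorem} yields the claimed formula. The main obstacle is the first step, verifying $\eta_{nml}=1$ exactly, because the general formula for $\tau_{nml}$ contains the term $1/(4k_1'k_1'')$ that diverges as $k_1''\to0$; one cannot simply take a naive limit of the lossy expressions and must instead redo the power computation with the lossless (real-$k_1$) basis, where the divergent reactive contribution and the $\mathcal{E}_{n,l}$ term cancel against one another so that only the finite, perfectly-efficient radiative part survives.
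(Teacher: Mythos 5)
Your proposal is correct and follows essentially the same route as the paper: the paper likewise redoes the power and radiation computations directly in the real basis $\mathbf{v}_{nml}$ (thereby sidestepping the divergent $1/(4k_1'k_1'')$ term rather than taking a naive lossy limit), proves $\mathcal{C}|\mathcal{T}_{n,l}|^2=\Re\{1+\mathcal{R}_{n,l}\}$ via Wronskian identities of the spherical Bessel functions in Appendix~\ref{sec_appendix_lossless_efficiency} to conclude $\eta_{nml}=1$, and then reuses the parallel-Gaussian-channel equivalence of Section~\ref{sec_single_equivalent} so that the identical gains $\sqrt{3\alpha/2}$ make waterfilling degenerate to the uniform allocation $\mathrm{P}/(2N(N+2))$ over the $|\boldsymbol{\Upsilon}|=2\sum_{n=1}^{N}(2n+1)=2N(N+2)$ modes, with the reverse channel covered by the same duality. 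No gaps; your identification of the divergence in $\tau_{nml}$ as the main obstacle, and its resolution by recomputing with real $k_1$, is exactly how the paper handles it.
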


The theorem states that if there is no loss, $2N(N+2)$ channels are of equal quality. 
A similar result was reported in~\cite{gruber2008new} that considered the capacity for the free-space source region.
Such result is related to the superdirective antenna arrays~\cite{oseen1922einsteinsche,schelkunoff1943mathematical,bouwkamp1945problem,di1956directivity} that can achieve desired directivity irrespective of the size of the antenna array.
Thus, it is possible to generate extremely narrow beam and achieve arbitrarily high DoF if the source region is filled with the lossless medium.
However, there are some practical issues on superdirectivity~\cite{jensen2008capacity}, and this will be considered in the next section. 

\section{Other considerations}\label{sec_practical}

In Section~\ref{sec_practical_Q}, the Q factor, which is inversely proportional to the fractional bandwidth, is derived. 
Also, in Section~\ref{sec_practical_effect}, we argue the reactive near-field region should be carefully considered since the communication in that region may critically affect the power consumption of the source, which is the reason why the order of the spherical waves is bounded above as $n\le N$ in the previous section.

\subsection{Q factor}\label{sec_practical_Q}

The Q factor of the source is defined as the ratio of the energy stored in the field to the power consumption~\cite{chu1948physical,collin1964evaluation,fante1969quality,mclean1996re,thal2006new,yaghjian2005impedance,hansen2012stored,hansen2014properties}.
The Q factor is practically important since the fractional bandwidth of the resonant antenna is inversely proportional to Q when Q is sufficiently larger than $1$. 
In this section, we derive the Q factor for the lossy dielectric sphere\footnote{The Q factor for the dielectric sphere was also derived in~\cite{hansen2012stored,hansen2014properties}. In those works, Q was derived by assuming the surface current sources and using the boundary condition.
On the other hand, we derive Q of the volume current sources by using the spherical vector wave expansion of DGF and show that the results is not the same as those in the previous works. 
}. 
Let us assume that the source $\mathbf{v}_{nml}^\star$ generates the electric field $\mathbf{E}_{nml}$ and the magnetic field $\mathbf{H}_{nml}$.
By using the electromagnetic field, the electric and magnetic energy stored in $V$ are defined as
\begin{align}
W_{nml}^{e,in}
&=
\frac{\epsilon_1'}{4},
\int_{0<r<R_1}\left\lVert\mathbf{E}_{nml}(\mathbf{r})\right\rVert^2d\mathbf{r},\\
W_{nml}^{m,in}
&=
\frac{\mu_0}{4}
\int_{0<r<R_1}\left\lVert\mathbf{H}_{nml}(\mathbf{r})\right\rVert^2d\mathbf{r}
\end{align}
for $\epsilon_1'\triangleq\Re\{\epsilon_1\}$.
Also, define $W_{nml}^{e,out}$ and $W_{nml}^{m,out}$ as the stored electric and magnetic energy outside $V$ except the contribution of radiated energy outside $V$~\cite{collin1964evaluation}.
Then, the Q factor is defined as 
\begin{align*}
Q_{nml}\triangleq\max\left\{Q_{nml}^m,Q_{nml}^e\right\},
\end{align*}
where
\begin{align}
Q_{nml}^m
\triangleq
\dfrac{2\omega(W_{nml}^{m,in}+W_{nml}^{m,out})}{P_{tot,nml}},
Q_{nml}^e
\triangleq
\dfrac{2\omega(W_{nml}^{e,in}+W_{nml}^{e,out})}{P_{tot,nml}}
\end{align}
for the power consumption $P_{tot,nml}$ of the source. By using the efficiency $\eta_{nml}$, 
\begin{align}
Q_{nml}=\eta_{nml}\tilde{Q}_{nml}
\end{align}
for $\tilde{Q}_{nml}\triangleq\max\{\tilde{Q}_{nml}^m,\tilde{Q}_{nml}^e\}$ called radiation Q factor, where 
\begin{align}
\tilde{Q}_{nml}^m
\triangleq
\dfrac{2\omega W_{nml}^{m,in}}{P_{rad,nml}}
+
\dfrac{2\omega W_{nml}^{m,out}}{P_{rad,nml}},
\tilde{Q}_{nml}^e
\triangleq
\dfrac{2\omega W_{nml}^{e,in}}{P_{rad,nml}}
+
\dfrac{2\omega W_{nml}^{e,out}}{P_{rad,nml}}
\end{align}
for the radiated power $P_{rad,nml}$ from the source.
From~\cite{collin1964evaluation}, we have
\begin{align}
\dfrac{2\omega W_{nm2}^{m,out}}{P_{rad,nml}}
&=\dfrac{2\omega W_{nm1}^{e,out}}{P_{rad,nml}}=\mathcal{A}_n(R_1),\\
\dfrac{2\omega W_{nm1}^{m,out}}{P_{rad,nml}}
&=\dfrac{2\omega W_{nm2}^{e,out}}{P_{rad,nml}}
=\frac{n+1}{2n+1}\mathcal{A}_{n-1}(R_1)+\frac{n}{2n+1}\mathcal{A}_{n+1}(R_1)
\end{align}
for $\mathcal{A}_n$ defined as
\begin{align}
\mathcal{A}_n(r)
\triangleq
-\frac{(k_0r)^3}{2}
\bigg(\left\lvert h_n^{(1)}(k_0r)\right\rvert^2-j_{n+1}(k_0r)j_{n-1}(k_0r)
-y_{n+1}(k_0r)y_{n-1}(k_0r)-\frac{2}{(k_0r)^2}\bigg)
\end{align}
for all $n$. In addition, by using the electric field generated from the source derived in Appendix~\ref{sec_appendix_field}, the electric energy stored inside $V$ is derived as
\begin{align}
W_{nml}^{e,in}
=\frac{\epsilon_1'}{4}\omega^2\mu_0^2
\bigg(
\left\lvert
F_{n,l}
\right\rvert^2
+
\left(\frac{1}{4k_1'k_1''}\right)^2
+
\frac{\Re\{F_{n,l}I_{n,l}^{jj}\}}{2k_1'k_1''I_{n,l}^{jj^*}}
\bigg),
\end{align}
where $F_{n,l}\triangleq D_{n,l}-E_{n,l}/(4k_1'k_1''), \forall n,l$.
Also, by using the magnetic field generated from the source derived in Appendix~\ref{sec_appendix_field}, the magnetic energy stored inside $V$ is derived as
\begin{align}
W_{nml}^{m,in}
=\frac{\mu_0}{4}
\bigg(
\bigg[
\left\lvert
k_1F_{n,l}
\right\rvert^2
+
\left\lvert\frac{k_1}{4k_1'k_1''}\right\rvert^2
\bigg]
\frac{I_{n,3-l}^{jj^*}}{I_{n,l}^{jj^*}}
+
\frac{\Re\{k_1^2F_{n,l}I_{n,3-l}^{jj}\}}{2k_1'k_1''I_{n,l}^{jj^*}}
\bigg).
\end{align}


\subsection{Power consumption considering the near-field scattering}\label{sec_practical_effect}
\begin{figure}[t]
\begin{center}
\includegraphics[height=2.4in]{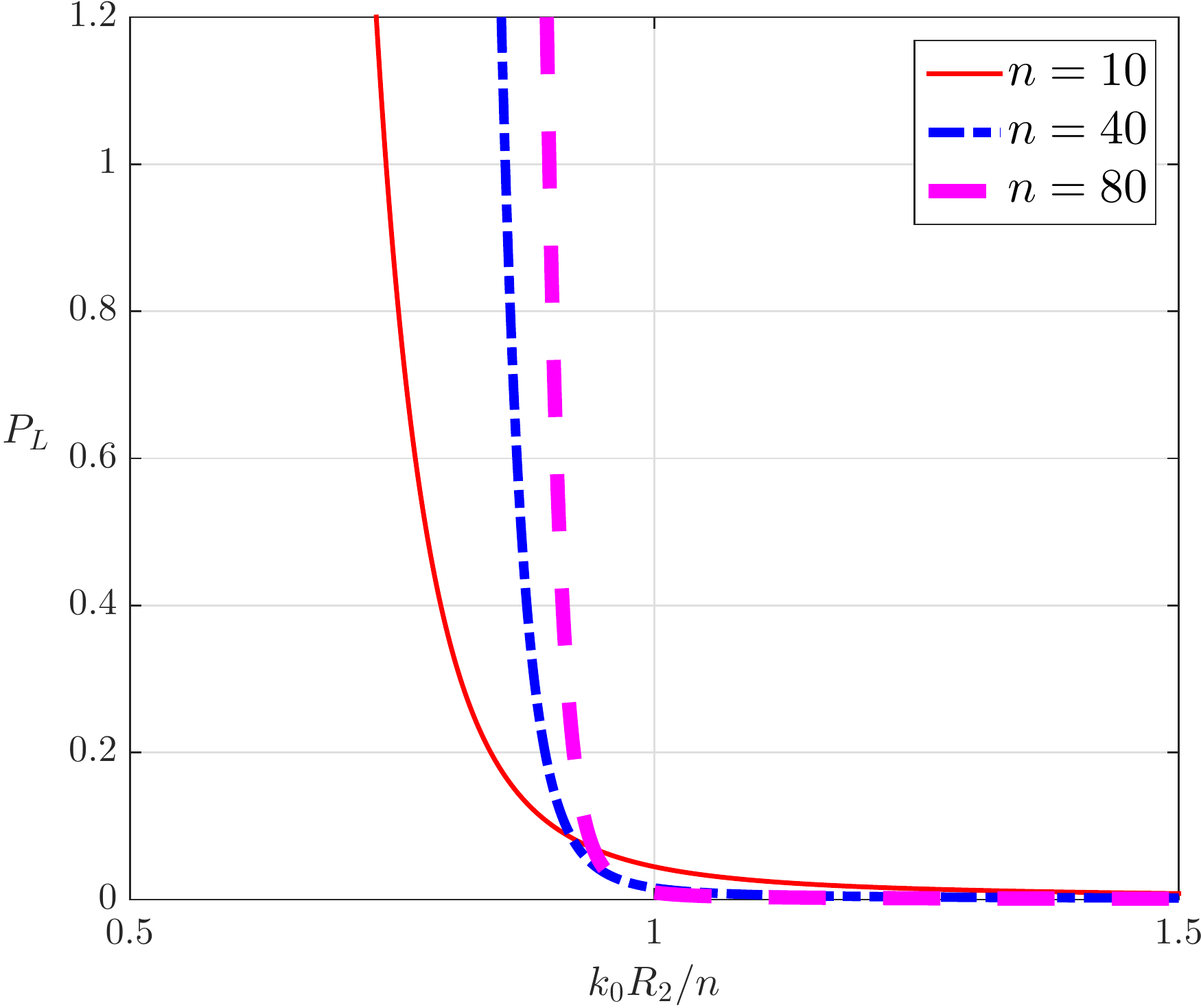}
\includegraphics[height=2.4in]{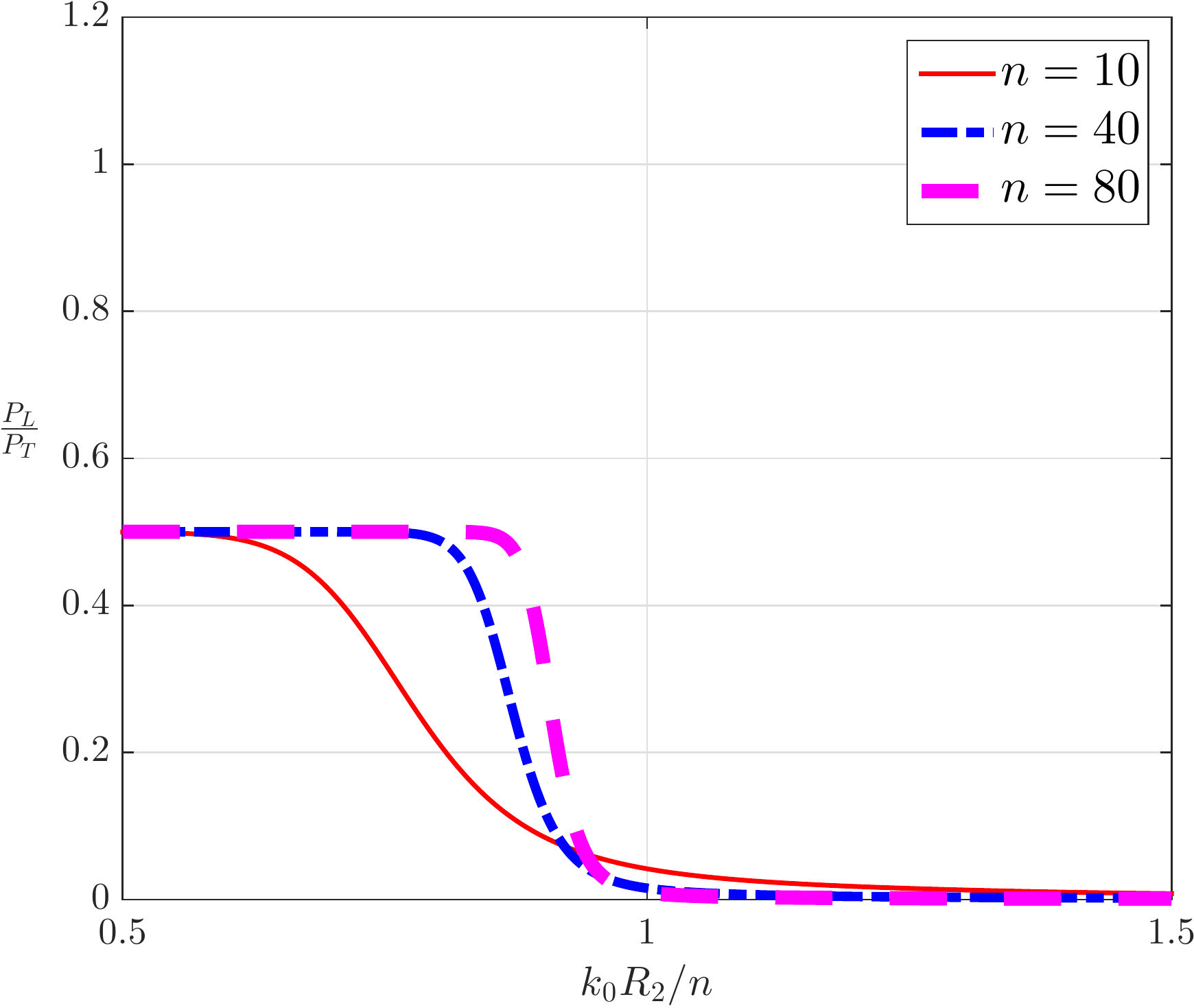}
\end{center}
\vspace{-0.1in}
\caption{The power transfered to the load (left) and the ratio of the transfered power to the power consumption (right)}\label{fig:powerconsumption}
\vspace{-0.1in}
\end{figure}

In many researches on multi-antenna communication, the electromagnetic interaction between the transmitter and the receiver is usually ignored for the calculation of the power consumption. 
This is due to the assumption that the distance between the transceivers is sufficiently far apart~\cite{ivrlac2010toward}.
As we introduced in Section \ref{sec_preliminaries_near}, however, the size of the near-field region becomes larger as the order $n$ of the spherical waves increases.
Thus, for given distance between the transceiver, the maximum order $n$ of the spherical waves should be suitably bounded, i.e., the receiver is far from the reactive near-field region of the transmitter.
In this subsection, we show the power consumption at the transmitter is critically affected due to the back-scattered waves when the receiver is inside the reactive near-field region of the transmitter. 
Formally, assume the current source $\mathbf{J}$ inside a free-space sphere $V$, i.e., $k_1=k_0$, and that source generates the electric field $\mathbf{E}$ on a sphere $S$, where the spheres $V$, $S$ and their corresponding parameters are the same as those in the previous section~\footnote{Even though the free-space sphere is assumed for simple analysis in this section, a similar analysis can be done for general dielectric sphere.}.
In addition, assume the receiver measures the electric field on $\mathbf{s}\triangleq(R_2,0,\cdot)$ by using a single dipole antenna with length $L$, which is perfectly matched to the load impedance~\cite{balanis2005antenna} and has its orientation as $\hbf{x}$.
Then, as we mentioned in the gain-maximization problem in Section~\ref{sec_discussion_numerical}, it is sufficient to consider the spherical vector waves with indices $(n,m,l)\in\boldsymbol{\Upsilon}'\triangleq\left\{(n,m,l):1\le n\le N, m=\pm1,l=1,2\right\}$.

For simplicity, suppose the transmitter uses the source generating the $(n,1,1)$ mode, i.e., 
\begin{align}
\mathbf{J}=\sqrt{\frac{2}{\omega\mu_0\tau_{n11}}}\mathbf{v}_{n11},
\end{align}
where the power consumption at the source is equal to 1 if the effect of the backscattered field is ignored.
This source generates the electric field 
\begin{align}
\mathbf{E}
=-\frac{\sqrt{2\omega\mu_0k_0}}{\sqrt{n(n+1)}}\mathbf{U}_{n11}.
\end{align}
By using the derivations in~\cite{hansen1988spherical}, we have
\begin{align}
\mathbf{U}_{n11}(\mathbf{s})
=
-ih_n^{(1)}(kR_2)(\hbf{x}+i\hbf{y})\mathcal{U}_n, 
\mathbf{U}_{n11}^\star(\mathbf{s})
=
-ih_n^{(2)}(kR_2)(\hbf{x}+i\hbf{y})\mathcal{U}_n, 
\end{align}
where $\mathcal{U}_n=\sqrt{n(n+1)(2n+1)/(16\pi)}$ and $h_n^{(2)}$ is the spherical Hankel function of the second kind. 
Then, the open-circuit voltage $V$ at the dipole is
\begin{align}
V
\triangleq-L\hbf{x}^T\mathbf{E}(\mathbf{s})
=-\sqrt{15}ikL	h_n^{(1)}(kR_2)\sqrt{2n+1},\
\end{align}
by using $\omega\mu_0k_0=\eta_0k_0^2$~\cite{hansen1988spherical}.
From~\cite{balanis2005antenna}, the power $P_L$ transferred to the load resistance and captured by the receiver is
\begin{align}
P_L
=
\frac{|V|^2}{8\mathrm{R}_r}=\frac{3}{32}(2n+1)\left(y_n^2(kR_2)+j_n^2(kR_2)\right),
\end{align}
where $\mathrm{R}_r\triangleq20k_0^2L^2$ is the radiation resistance of the dipole~\cite{balanis2005antenna} and $y_n$ is the spherical Bessel function of the second kind.
As shown in Fig.~\ref{fig:powerconsumption}, the power captured by the receiver in the reactive near-field region is greater than the power consumption calculated by ignoring the back-scattering effect, i.e., 
\begin{align}
P_L>1\text{~for~}kR_2\ll n.
\end{align}
Thus, ignoring the back-scattering effect on the power consumption contradicts the energy conservation law since the receiving power is greater than the radiation power. 

The power consumption considering the back scattering is calculated as follows.
The volume current density on the dipole is 
\begin{align}
\mathbf{J}_S(\mathbf{r})=\hbf{x}IL\delta(\mathbf{r}-\mathbf{s}).
\end{align}
since the current $I$ on the dipole satisfies $IL=-{VL}/{(2\mathrm{R}_r)}$.
By using the decomposition of the DGF for the inward direction in Section~\ref{sec_preliminaries_dyadic}, the scattered field $\mathbf{E}_S$ in $V$ due to $\mathbf{J}_S$ is
\begin{align}
\mathbf{E}_S
=
\sum_{nml}\mathcal{G}_{n,l}(\mathbf{u}_{nml}^\star(\mathbf{s})^H\hbf{x}IL)\mathbf{v}_{nml},
\end{align}
where $\mathcal{G}_{n,l}=-\omega\mu_0k_0\mathcal{N}_{n,l}^V\mathcal{N}_{n,l}^S/(n(n+1)),\forall n,m,l$.
Then, the power consumption $P_S$ in $V$ due to the scattered field $\mathbf{E}_S$ is
\begin{align}
P_S
&\triangleq\frac{1}{2}
\Re\left\{-
\left\langle\mathbf{E}_S,\mathbf{J}\right\rangle_V
\right\}=
\frac{1}{2}\Re\left\{-\mathcal{G}_{n,1}\sqrt{\frac{2}{\omega\mu_0\tau_{n11}}}\mathbf{u}_{nml}^\star(\mathbf{s})^H\hbf{x}IL\right\}\nonumber
=\frac{3}{16}(2n+1)\left(n_n^2(kR_2)-j_n^2(kR_2)\right),\label{eq:scatter_p}
\end{align}	
where the second equality holds due to the orthogonality of $\mathbf{v}_{nml}$'s.
Note that $P_S\rightarrow 0$ as $kR_2\rightarrow\infty$, which implies $P_S$ is negligible when the transmitter and the receiver are sufficiently far apart.
The power consumption $P_T$ at the transmitter considering the near-field back scattering is
\begin{align}
P_T=P_R+P_S,
\end{align}
where $P_R=1$ is the power consumption due to the field ignoring the back-scattered field. 
It is shown that for fixed $\beta\triangleq(kR_S)/n$, which is the boundary of the reactive near-field region of the transmitter mentioned in Section~\ref{sec_preliminaries_near}, the ratio of the receiving power to the power consumption at the transmitter is 
\begin{align}
\frac{P_L}{P_T}\rightarrow\frac{1}{2}
\label{eq:sec_single_effect_1}
\end{align}
as $n\rightarrow \infty$ if $\beta<1$ and 
\begin{align}
\frac{P_L}{P_T}\rightarrow 0
\label{eq:sec_single_effect_2}
\end{align}
as $n\rightarrow \infty$ if $\beta>1$ as shown in Fig. \ref{fig:powerconsumption}.
As a result, the results in~\eqref{eq:sec_single_effect_1} and~\eqref{eq:sec_single_effect_2} do not violate the energy conservation law.
The detailed proof is in Appendix \ref{appendix:scattering}.

\section{Discussion}

\subsection{Numerical results}\label{sec_discussion_numerical}

\begin{figure}[t]
\centering
\subfigure[]{\includegraphics[height=2.3in]{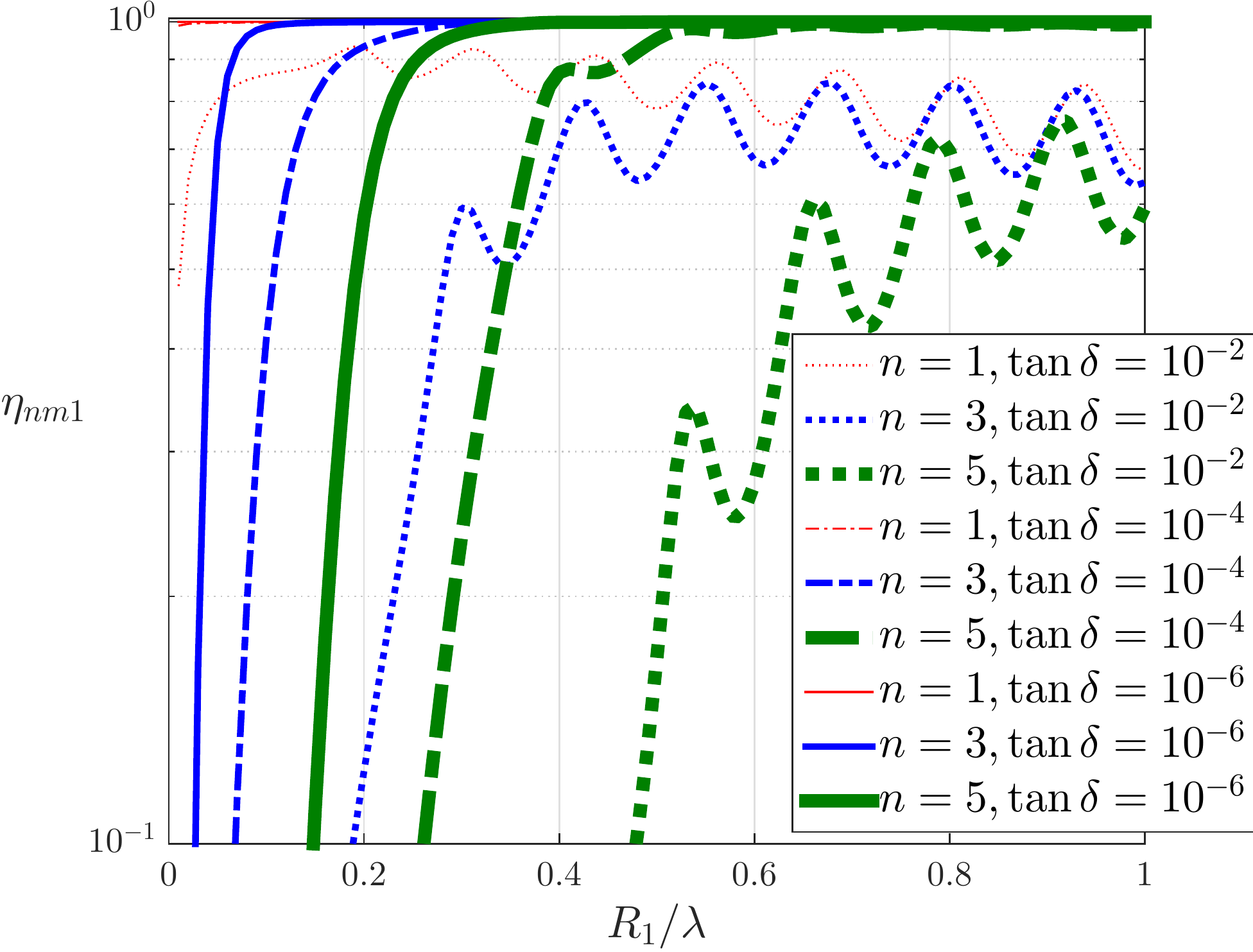}\hspace{0.2in}}
\subfigure[]{\includegraphics[height=2.3in]{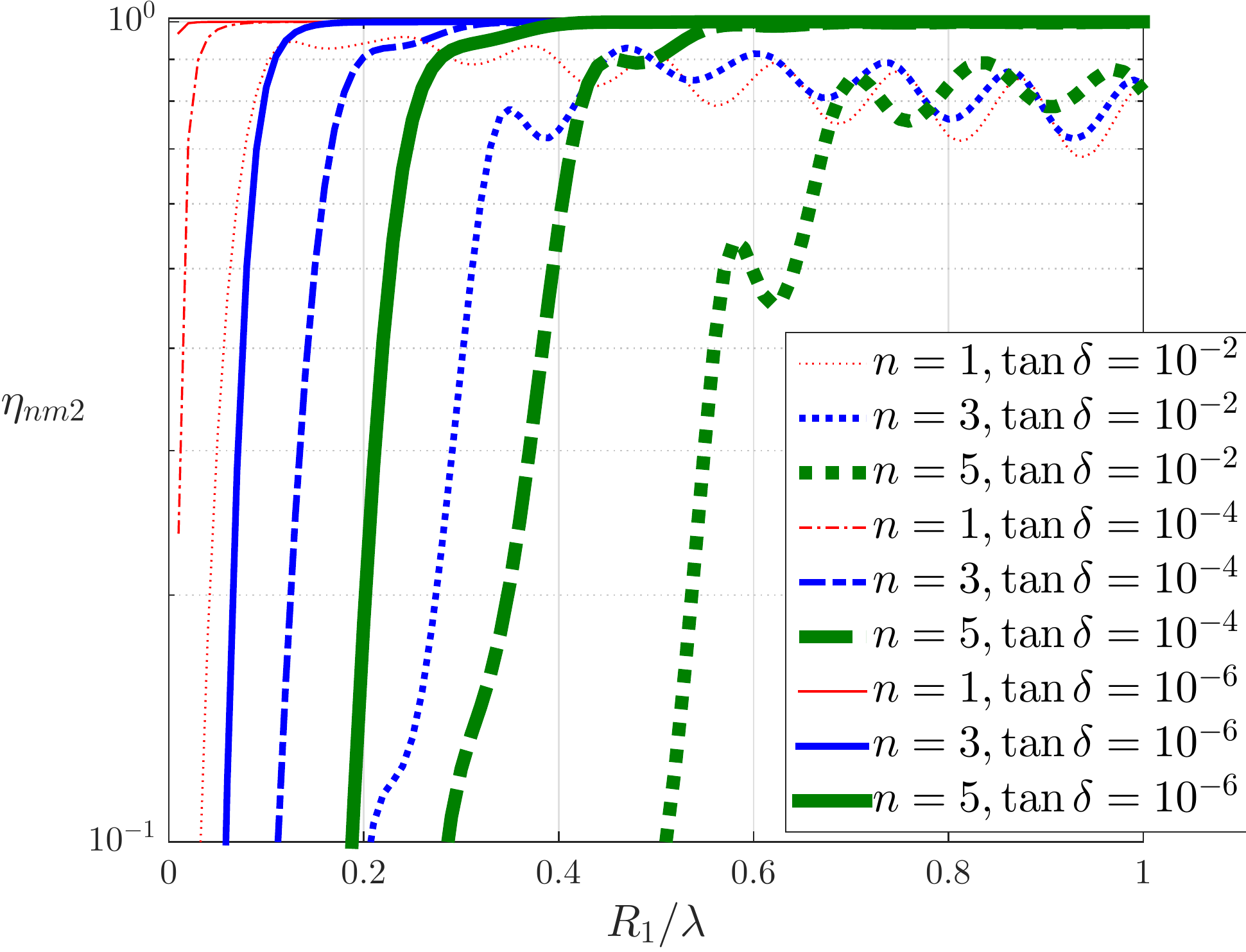}\hspace{0.2in}}
\caption{The efficiency $\eta_{nml}$ of the source $\mathbf{v}_{nml}^\star$ (a) for $l=1$ and (b) for $l=2$.}\label{figure5figure6}
\end{figure}


\begin{figure}[t]
\centering
\subfigure[]{\includegraphics[height=2.3in]{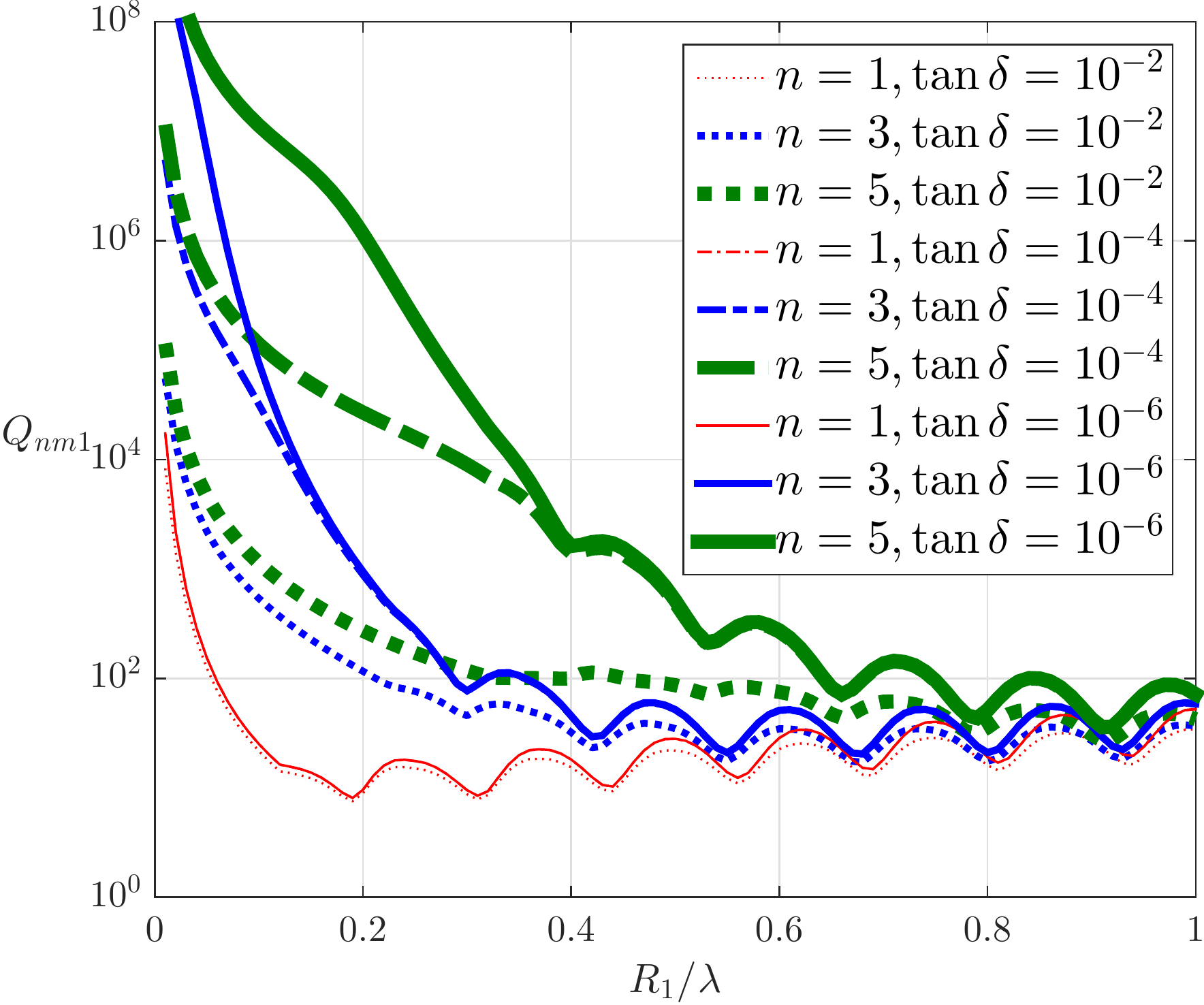}\hspace{0.4in}}
\subfigure[]{\includegraphics[height=2.3in]{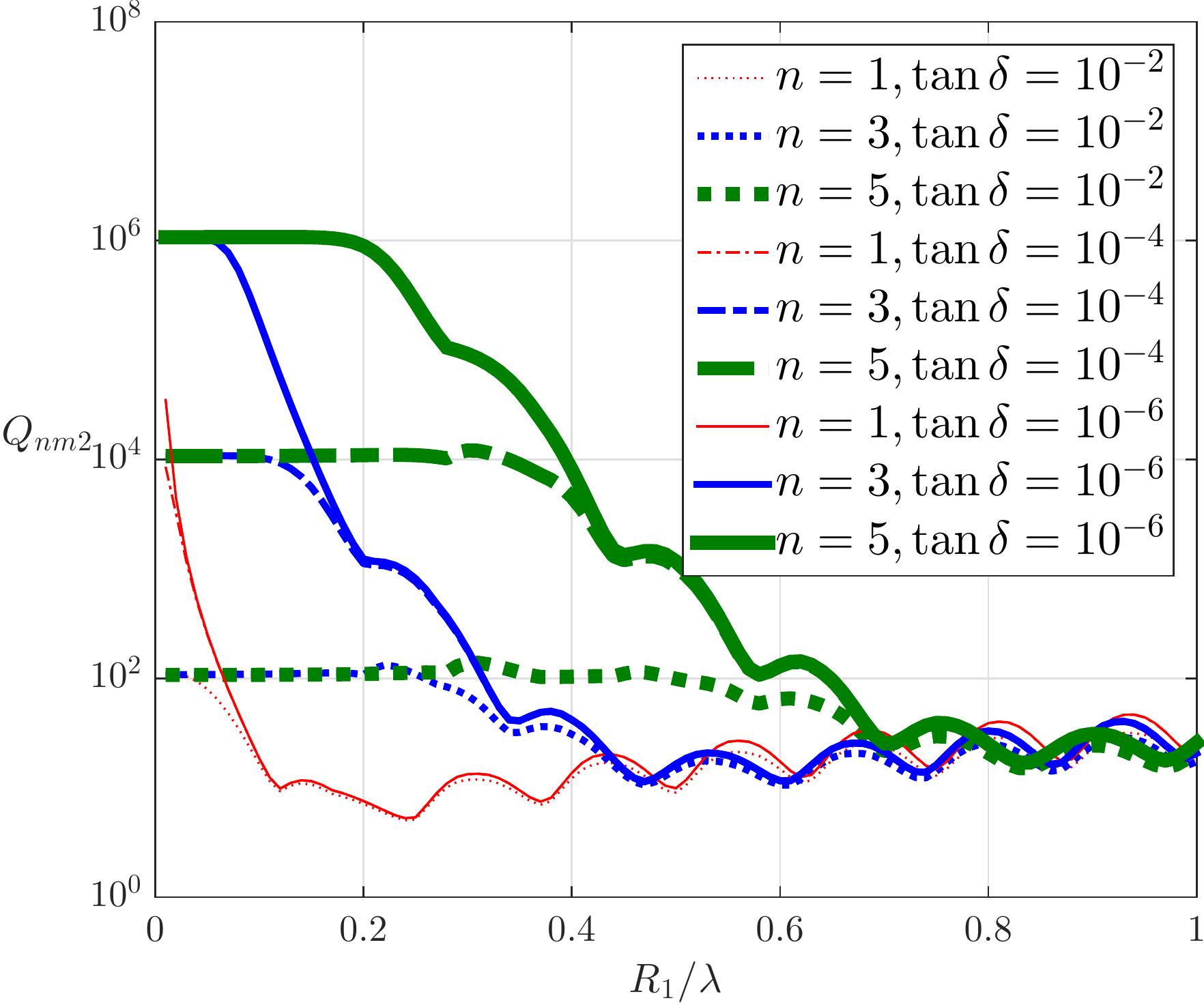}\hspace{0.4in}}
\caption{The quality factor $Q_{nml}$ of the source $\mathbf{v}_{nml}^\star$ (a) for $l=1$ and (b) for $l=2$.}\label{figure3figure4}
\end{figure}

\begin{figure}[t]
\begin{center}
\includegraphics[height=2.4in]{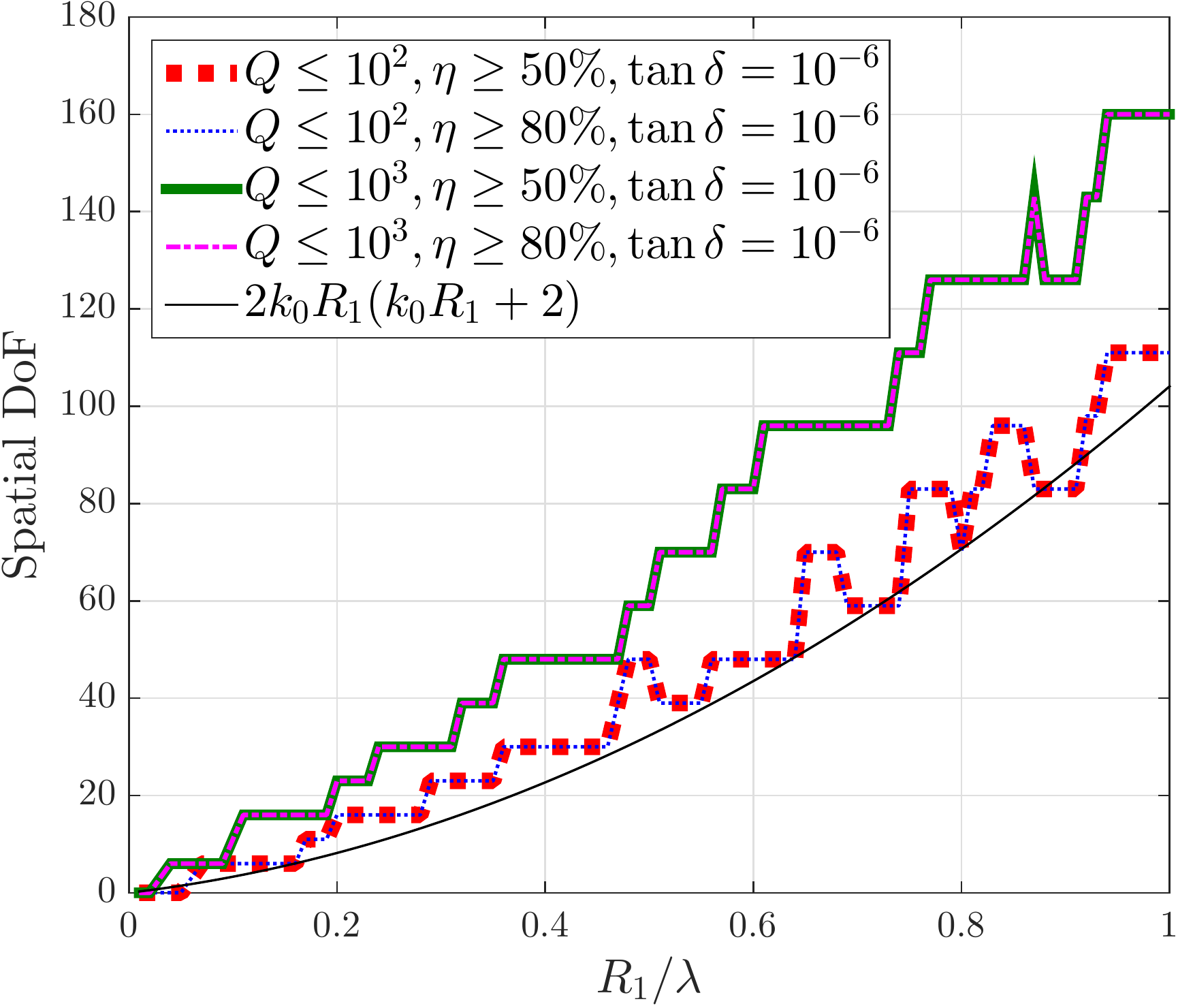}
\end{center}
\vspace{-0.2in}
\caption{The spatial DoF at single frequency when $\tan\delta=10^{-4}$.}\label{figure7figure8}
\end{figure}

Let us assume the carrier frequency $\omega=\omega_c\triangleq2\pi f_c$, where $f_c=16.8\mathrm{GHz}$.
Also, assume the permittivity of the dielectric sphere is $\epsilon_1=\epsilon_r\epsilon_0(1+i\tan\delta)$, where $\epsilon_r$ is called the relative permittivity and $\tan\delta$ is called the loss tangent that represents the lossy property of the dielectric.

For $\epsilon_r=16, n=1,3,5, \tan\delta=10^{-2},10^{-4},10^{-6}$, we have the following numerical results. 
First, in Fig.~\ref{figure5figure6}, the efficiency $\eta_{nml}$, which was derived in Section~\ref{sec_single_equivalent}, is plotted. In this figure, one can see that $\eta_{nml}$ gets smaller as (1) $n$ increases, (2) $R_1$ decreases, or (3) $\tan\delta$ gets larger.
Second, the quality factor $Q_{nml}$ derived in Section~\ref{sec_practical_Q} is plotted in Fig.~\ref{figure3figure4}.
An interesting behavior in this figure is that both $Q_{nm1}$ and $Q_{nm2}$ converge as $R_1/\lambda\rightarrow 0$, and $Q_{nm2}$ converges to a much smaller value than $Q_{nm1}$ does for the same $n$ and $\tan\delta$.
Also, both $Q_{nm1}$ and $Q_{nm2}$ converge to smaller values as $\tan\delta$ gets larger.
Third, the achievable spatial DoF is numerically plotted in Fig.~\ref{figure7figure8}.
Here, we use the upper bound for all $Q_{nml}$ and the lower bound for all $\eta_{nml}$, and the indices $n,m,l$ are omitted in the figure. 
Also, we plot the curve $2k_0R_1(k_0R_1+2)$, which is equal to the achievable DoF when the spherical waves with $n\le k_0R_1$ are usable. 
Here, note that the spherical waves with $n\le k_0R_1$ are mainly used when the source region is filled with the medium with high conductivity~\cite{harrington1960effect}.
As shown in Fig.~\ref{figure7figure8}, the achievable DoF is larger than $2k_0R_1(k_0R_1+2)$ when $\tan\delta=10^{-6}$ and the upper bound on $Q_{nml}$ becomes larger.
Thus, spatial DoF depends not only on the size $R_1$ of the sphere, but also on the lower bound on the efficiency and the upper bound on Q factor.
Note that the capacity may decrease due to the high $Q$ since the higher quality factor implies the narrower usable bandwidth near the carrier frequency $\omega_c$~\cite{poon2015does}.
However, if multiple carriers are utilized, higher Q and resulting reduction in bandwidth applies to each carrier independently since each signal is independent and the whole system is linear~\cite{gruber2008new}. Therefore, the total DoF and the capacity are not fundamentally affected by higher Q and the resulting reduction in bandwidth.

In addition, we compare our result to~\cite{krasnok2014experimental}, which demonstrated an antenna composed of the notched dielectric sphere and was possible to efficiently excite the higher order mode, i.e., $n\ge k_0R_1$.
For comparison with~\cite{krasnok2014experimental}, let us assume there is a dielectric sphere with $\epsilon_r=16$ and $\tan\delta=1.2\times 10^{-4}$ at $f_c=16.8\mathrm{GHz}$, which is similar to the property of $\mathrm{MgO}$-$\mathrm{TiO_2}$ at $f_c$. Also, assume the radius of the sphere is $R_1=5\mathrm{mm}$ and the target minimum bandwidth $\Delta f$ is equal to $0.5\mathrm{GHz}$. 
Under such assumption, we solve the gain optimization problem with the restriction on the maximum Q factor, i.e., 
\begin{equation*}
\begin{aligned}
& \underset{\mathbf{J}}{\text{maximize}}
& & G_{\mathbf{J}}(\theta,\phi) \\
& \text{subject to}
& & Q_{\mathbf{J}} \leq \bar{Q},
\end{aligned}
\end{equation*}
where $\mathbf{J}$ is a vector of all the components in $\{J_{nml}:(n,m,l)\in\boldsymbol{\Upsilon}\}$ that follows the ordering in \textbf{Theorem~\ref{sec_single_equivalent_theorem}}, $\bar{Q}\triangleq f_c/\Delta f$, and the gain $G_{\mathbf{J}}$ is defined as
\begin{align}
G_{\mathbf{J}}(\theta,\phi)
\triangleq
\frac{4\pi U_{\mathbf{J}}(\theta,\phi)}{P_{\mathbf{J}}}
\end{align}
for the radiation intensity $U_{\mathbf{J}}$ and the total power consumption $P_{\mathbf{J}}$ 
\begin{align}
U_{\mathbf{J}}(\theta,\phi)\triangleq\lim_{r\rightarrow\infty}\frac{\left\lVert r\mathbf{E}(\mathbf{r})\right\rVert^2}{2\mathrm{Z}_0},
P_{\mathbf{J}}\triangleq\frac{\omega\mu_0}{2}\sum_{\boldsymbol{\Upsilon}}|J_{nml}|^2\tau_{nml},
\end{align}
and 
\begin{align}
Q_{\mathbf{J}}\triangleq
\frac{(\omega\mu_0/2)\sum_{nml}Q_{nml}|J_{nml}|^2\tau_{nml}}{P_{\mathbf{J}}}
\end{align}
from the definition of the quality factor.
Let $(\theta,\phi)=(0,\cdot)$ without loss of generality by using the property of rotational invariance spherical vector waves~\cite{harrington1960effect,hansen2011small}.
Then, it is sufficient to consider the spherical vector waves with indices $(n,m,l)\in\boldsymbol{\Upsilon}'\triangleq\left\{(n,m,l):1\le n\le N, m=\pm1,l=1,2\right\}$ since only modes generate the field at $\theta=0$~\cite{hansen2011small}.
In addition, by using \eqref{single_equivalent_forward_1}, the explicit derivation of $\mathbf{U}_{nml}$ in Appendix~\ref{sec_appendix_definitions} and the far-field behavior of spherical Hankel functions~\cite{potter1967application,hansen1988spherical}, we have
\begin{align}
U_{\mathbf{J}}(0,\cdot)
=
\frac{1}{2\mathrm{Z}_0}
\left\lVert\sum_{(n,m,l)\in\boldsymbol{\Upsilon}'}
\mathcal{K}_{n,l}
J_{nml}\tilde{\mathbf{U}}_{nml}\right\rVert^2,
\end{align}
where $\mathcal{K}_{n,l}\triangleq-\omega\mu_0k_1\mathcal{T}_{n,l}\sqrt{{I_{n,l}^{jj^*}}/(n(n+1))}$ and
\begin{align}
\tilde{\mathbf{U}}_{nm1}
\triangleq
\frac{(-i)^{n+1}}{k_0}(-i\hbf{x}+m\hbf{y})\mathcal{U}_n,
\tilde{\mathbf{U}}_{nm2}
\triangleq
\frac{(-i)^{n}}{k_0}(-m\hbf{x}-i\hbf{y})\mathcal{U}_n
\end{align}
for $n\le N$, $m=\pm1$ and $\mathcal{U}_n\triangleq\sqrt{n(n+1)(2n+1)/(16\pi)}$.

\begin{figure}[t]
\begin{center}
\subfigure[]{\includegraphics[height=2.4in]{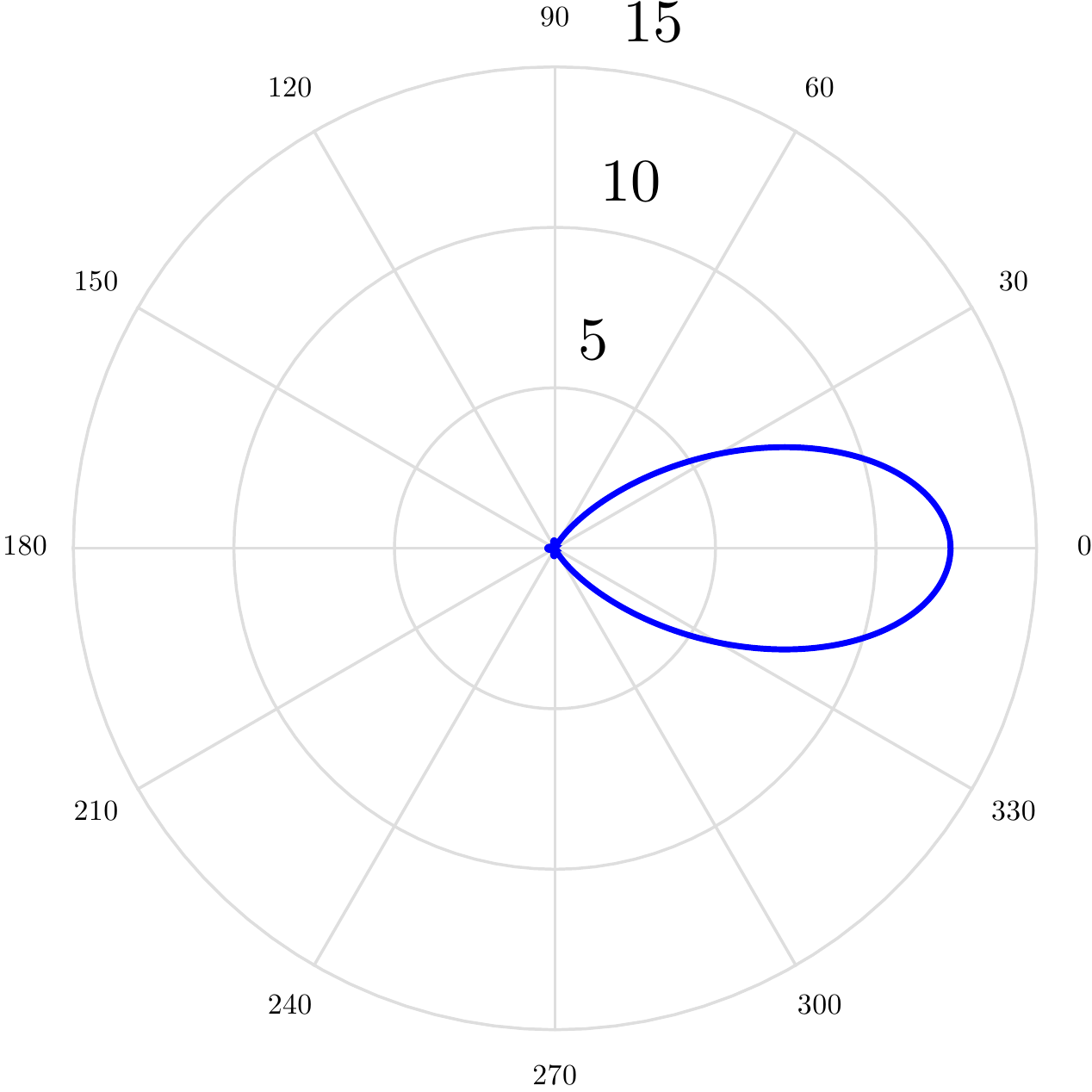}}
\subfigure[]{\includegraphics[height=2.4in]{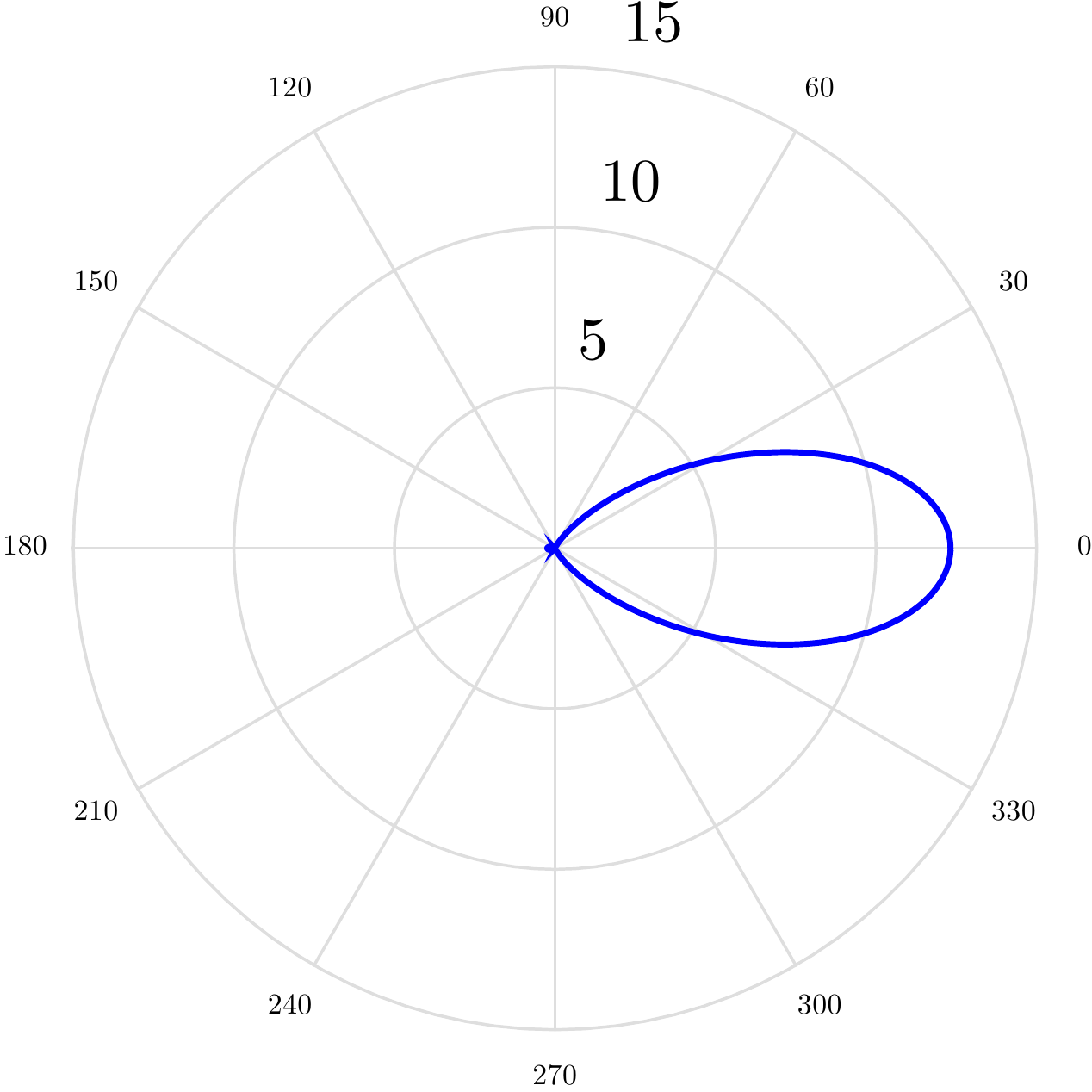}}
\end{center}
\vspace{-0.2in}
\caption{The result of the gain optimization problem. Here, the magnitude in each figure is the gain $G_{\mathbf{J}}(\theta,\phi)$ for the azimuth angle $\theta$. (a) $\phi=0,\pi$. (b) $\phi=\pi/2,3\pi/2$.}\label{figure12}
\end{figure}

As a result, the maximization is achieved for $N=5$, and the resultant beam pattern is given in Fig.~\ref{figure12}.
In our work, the maximum gain, the directivity and the half-gain beamwidth is about $12.31$, $12.36$ and $60^\circ$. In comparison with our work, the corresponding results given in the demonstration in~\cite{krasnok2014experimental} are $8$, $10$ and $35^\circ$. We expect the narrower beamwidth was achieved since the notched sphere is used in~\cite{krasnok2014experimental}.

\subsection{Comparison with previous works}
%

We compare our work with the existing works~\cite{poon2005degrees,hanlen2006wireless,xu2006electromagnetic,migliore2006role,gruber2008new,jensen2008capacity,poon2011degree}
on the continuous-space electromagnetic channels.
The previous works~\cite{poon2005degrees,hanlen2006wireless,xu2006electromagnetic,migliore2006role,poon2011degree} dealt with the spatial DoF of the channels.
Poon \emph{et al.}~\cite{poon2005degrees} considered the relationship between the size of the source region and the DoF of the channels by assuming the free-space source regions. 
Later, Poon and Tse~\cite{poon2011degree} extended \cite{poon2005degrees} to the polarimetric antenna arrays to seek the extra DoF from polarization diversity.
Hanlen and Fu~\cite{hanlen2006wireless} analyzed DoF of the channels by utilizing the scatter channel model they suggested. 
Xu and Janaswamy~\cite{xu2006electromagnetic} analyzed DoF when the two-dimensional scattering occurs.
Migliore~\cite{migliore2006role} considered how the DoF of electromagnetic channels and the effective DoF of multi-antenna channels are related. 
Note that the previous works~\cite{poon2005degrees,hanlen2006wireless,xu2006electromagnetic,migliore2006role,poon2011degree} restricted the current strength and the noise model used in~\cite{hanlen2006wireless,xu2006electromagnetic,migliore2006role} was the i.i.d. field fluctuation. 
On the other hand, we calculate the exact power consumption that considers the electromagnetic interaction between the field and the source inside the source region and use the exact noise model followed from FDT.
Then, we derive the spatial DoF by considering multiple parameters such as the efficiency, the quality factor and the loss tangent of the dielectric source region. 

Meanwhile, the existing works~\cite{gruber2008new,jensen2008capacity} used the radiation power constraint to analyze the channel capacity.
Gruber and Marengo~\cite{gruber2008new} used the constraint on both the radiation power and the current strength and analyzed the channel capacity when the free-space source region is assumed. 
Jensen and Wallace~\cite{jensen2008capacity} restricted the radiation power and used the background noise model to analyze the capacity of the channels with artificial loss.
If the source region is filled with the lossless medium, the radiation power is equal to the power consumption of the source. Therefore, in lossless case, the channel capacity in our work is equal to the capacity in~\cite{gruber2008new,jensen2008capacity}. 
However, if the source region is filled with lossy medium, the actual power consumption can be differerent from the radiation power, and thus, the capacity in our work is not the same as the capacity in~\cite{gruber2008new,jensen2008capacity}.

\section{Conclusion}
In this paper, the capacity of continuous-space electromagnetic channels is analyzed, where the transceivers are confined in dielectric spheres. 
As a result,
we characterized the channel capacity as a function of the size and the physical property of the dielectric and show how the capacity is affected by the radiation efficiency. 
Also, we derived the Q factor and showed the relationship between the DoF of the electromagnetic channels and the Q factor.
In addition, we considered how the backscattered wave at the transmitter affects the power consumption.
Besides, we compared our results with the recent experimental work~\cite{krasnok2014experimental} by solving the gain-optimization problem.

Recently, a major problem in the multi-antenna communication is how to improve the efficiency of the communication when the communication devices have a large number of antennas while the physical size of such devices is limited. 
Our work can provide a useful guideline for such a scenario. There are some experimental results that can solve some of the practical problems mentioned in our paper. For example, the experimental demonstration~\cite{krasnok2014experimental} uses dielectric resonator sufficiently smaller than the wavelength and achieves high efficiency and non-Foster impedance matching in~\cite{sussman2009non} increases the bandwidth by using active circuit elements in antenna impedance matching.

\appendices

\section{Spherical Vector Waves}\label{sec_appendix_definitions}
We explicitly derive the spherical vector waves and their properties in this section. If there is no confusion, the argument $kr$ of the spherical Bessel functions will be omitted. From~\cite{poon2011degree}, we have
\begin{align}
\mathbf{V}_{nml}(k,\mathbf{r})
&=
\begin{cases}
\sqrt{n(n+1)}\mathbf{A}_{nm}^{(1)}(\theta,\phi)j_n^{(1)}(kr),
&\text{~if $l=1$},\\
\dfrac{n(n+1)}{2n+1}j_{n}^{(2)}(kr)\mathbf{A}_{nm}^{(2)}(\theta,\phi)
+
\sqrt{n(n+1)}j_{n}^{(3)}(kr)\mathbf{A}_{nm}^{(3)}(\theta,\phi),
&\text{~if $l=2$},
\end{cases}\\
\mathbf{V}_{nml}^\star(k,\mathbf{r})
&=
\begin{cases}
\sqrt{n(n+1)}\mathbf{A}_{nm}^{(1)}(\theta,\phi)j_n^{(1)}(kr)^*,
&\text{~if $l=1$},\\
\dfrac{n(n+1)}{2n+1}j_{n}^{(2)}(kr)^*\mathbf{A}_{nm}^{(2)}(\theta,\phi)
+
\sqrt{n(n+1)}j_{n}^{(3)}(kr)^*\mathbf{A}_{nm}^{(3)}(\theta,\phi),
&\text{~if $l=2$},
\end{cases}\\
\mathbf{W}_{nml}(k,\mathbf{r})
&=
\begin{cases}
\sqrt{n(n+1)}\mathbf{A}_{nm}^{(1)}(\theta,\phi)y_n^{(1)}(kr),
&\text{~if $l=1$},\\
\dfrac{n(n+1)}{2n+1}y_{n}^{(2)}(kr)\mathbf{A}_{nm}^{(2)}(\theta,\phi)
+
\sqrt{n(n+1)}y_{n}^{(3)}(kr)\mathbf{A}_{nm}^{(3)}(\theta,\phi),
&\text{~if $l=2$},
\end{cases}\\
\mathbf{W}_{nml}^\star(k,\mathbf{r})
&=
\begin{cases}
\sqrt{n(n+1)}\mathbf{A}_{nm}^{(1)}(\theta,\phi)y_n^{(1)}(kr)^*,
&\text{~if $l=1$},\\
\dfrac{n(n+1)}{2n+1}y_{n}^{(2)}(kr)^*\mathbf{A}_{nm}^{(2)}(\theta,\phi)
+
\sqrt{n(n+1)}y_{n}^{(3)}(kr)^*\mathbf{A}_{nm}^{(3)}(\theta,\phi),
&\text{~if $l=2$},
\end{cases}\\
\mathbf{U}_{nml}(k,\mathbf{r})
&=
\mathbf{V}_{nml}(k,\mathbf{r})
+
i\mathbf{W}_{nml}(k,\mathbf{r}),\\
\mathbf{U}_{nml}^\star(k,\mathbf{r})
&=
\mathbf{V}_{nml}^\star(k,\mathbf{r})
-
i\mathbf{W}_{nml}^\star(k,\mathbf{r}),
\end{align}
where for all $n,m$,
\begin{align}
\mathbf{A}_{nm}^{(i)}(\theta,\phi)
&\triangleq
\begin{cases}
\dfrac{\nabla Y_{nm}(\theta,\phi)\times \mathbf{r}}{\sqrt{n(n+1)}},
&\text{~if $i=1$},\\
\hat{\mathbf{r}}Y_{nm}(\theta,\phi),
&\text{~if $i=2$},\\
\dfrac{r\nabla Y_{nm}(\theta,\phi)}{\sqrt{n(n+1)}},
&\text{~if $i=3$},
\end{cases}
\end{align}
and for all $n$,
\begin{align}
j_n^{(i)}(z)
&\triangleq
\begin{cases}
j_n(z),
&\text{~if $i=1$},\\
j_{n-1}(z)+j_{n+1}(z),
&\text{~if $i=2$},\\
\dfrac{n+1}{2n+1}j_{n-1}(z)-\dfrac{n}{2n+1}j_{n+1}(z),
&\text{~if $i=3$}.
\end{cases}\\
y_n^{(i)}(z)
&\triangleq
\begin{cases}
y_n(z),
&\text{~if $i=1$},\\
y_{n-1}(z)+y_{n+1}(z),
&\text{~if $i=2$},\\
\dfrac{n+1}{2n+1}y_{n-1}(z)-\dfrac{n}{2n+1}y_{n+1}(z),
&\text{~if $i=3$}.
\end{cases}
\end{align}
Here, note that for all $n,m,n',m'$,
\begin{align}
\int \mathbf{A}_{nm}^{(i_1)}(\theta,\phi)^H \mathbf{A}_{n'm'}^{(i_2)}(\theta,\phi)d\Omega
=
\delta_{n,n'}\delta_{m,m'}\delta_{i_1,i_2},\forall i_1,i_2=1,2,3.\label{eq:appendix:ortho}
\end{align}
For the spherical waves $\mathbf{V}_{nml},\mathbf{V}_{nml}^\star,\mathbf{W}_{nml},\mathbf{W}_{nml}^\star$ with the wave number $k$, the inner products over the sphere with radius $r$ are derived as follows. Let the arguments $(k,\cdot)$ of the vector waves and $(k,r)$ of integrals be omitted and $V$ be the sphere with radius $r$ centered at origin. Then, we have
\begin{align}
\left\langle
\mathbf{V}_{nml},\mathbf{V}_{n'm'l'}
\right\rangle_{V}
&=
\left\langle
\mathbf{V}_{nml}^\star,\mathbf{V}_{n'm'l'}^\star
\right\rangle_{V}
=
n(n+1)
I_{n,l}^{jj^*}
\delta_{nml,n'm'l'}\\
\left\langle
\mathbf{V}_{nml}^\star,\mathbf{V}_{n'm'l'}
\right\rangle_{V}
&=
n(n+1)
I_{n,l}^{jj}
\delta_{nml,n'm'l'},\\
\left\langle
\mathbf{W}_{nml}^\star,\mathbf{V}_{n'm'l'}
\right\rangle_{V}
&=
n(n+1)
I_{n,l}^{yj}
\delta_{nml,n'm'l'},\\
\left\langle
\mathbf{W}_{nml}^\star,\mathbf{V}_{n'm'l'}^\star
\right\rangle_{V}
&=
n(n+1)
I_{n,l}^{yj^*}
\delta_{nml,n'm'l'}.
\end{align}
for all $n,m,l,n',m',l'$, where 
\begin{align}
I_{n,1}^{jj}(k,r)
&\triangleq
\int_0^r j_n(kr')^2r'^2dr'=
\frac{r^3}{2}(j_n^2-j_{n-1}j_{n+1}),\label{appendixA:eq:i1}\\
I_{n,1}^{yj}(k,r)
&\triangleq
\int_0^r y_n(kr')j_n(kr')r'^2dr'
=\frac{r^3}{2}\left(j_ny_n-\frac{j_{n-1}y_{n+1}+y_{n-1}j_{n+1}}{2}\right)-\frac{2n+1}{4k^3},\label{appendixA:eq:i2}\\
I_{n,1}^{jj^*}(k,r)
&\triangleq
\int_0^r
\left\lvert j_n(kr')\right\rvert^2r'^2 dr'
\begin{cases}
=
\dfrac{r^2}{k^2-(k^*)^2}(k^*j_{n-1}^*j_n-kj_{n-1}j_n^*),
&~\text{if $k''\ne0$},\\
\rightarrow
I_{n,1}^{jj}(k',r),
&~\text{as $k''\rightarrow0$}.
\end{cases}\label{appendixA:eq:i3}\\
I_{n,1}^{yj^*}(k,r)
&\triangleq
\int_0^r y_n(kr')j_n(kr')^*r'^2dr'
\begin{cases}
=
\dfrac{r^2}{k^2-(k^*)^2}(k^*j_{n-1}^*y_n-ky_{n-1}j_n^*)+\dfrac{k^{-n}(k^*)^n}{k(k^2-(k^*)^2)},
&~\text{if $k''\ne0$},\\
\rightarrow
I_{n,1}^{yj}(k',r), 
&~\text{as $k''\rightarrow0$},
\end{cases}\label{appendixA:eq:i4}
\end{align}
and
\begin{align}
I_{n,2}(k,r)\triangleq\frac{n+1}{2n+1}I_{n-1,1}(k,r)+\frac{n}{2n+1}I_{n+1,1}(k,r)\label{appendixA:eq:i5}, .
\end{align}
Note that in \eqref{appendixA:eq:i1}, \eqref{appendixA:eq:i2}, \eqref{appendixA:eq:i3}, \eqref{appendixA:eq:i4} and \eqref{appendixA:eq:i5}, the supersciprt $jj, yj, jj^*, yj^*$ on $I_{n,l}$ and the argument $kr$ of the spherical Bessel functions are omitted. 
Also, \eqref{appendixA:eq:i5} is followed from the
properties on $j_n^{(i)}$'s and  $y_n^{(i)}$'s, $i=2,3$ such that
\begin{align}
\frac{n(n+1)}{(2n+1)^2}j_n^{(2)}(z)^2+j_n^{(3)}(z)^2
&=
\frac{n+1}{2n+1}j_{n-1}(z)^2
+
\frac{n}{2n+1}j_{n+1}(z)^2,\\
\frac{n(n+1)}{(2n+1)^2}|j_n^{(2)}(z)|^2+|j_n^{(3)}(z)|^2
&=
\frac{n+1}{2n+1}|j_{n-1}(z)|^2
+
\frac{n}{2n+1}|j_{n+1}(z)|^2,\\
\frac{n(n+1)}{(2n+1)^2}y_n^{(2)}(z)^2+y_n^{(3)}(z)^2
&=
\frac{n+1}{2n+1}y_{n-1}(z)^2
+
\frac{n}{2n+1}y_{n+1}(z)^2,\\
\frac{n(n+1)}{(2n+1)^2}|y_n^{(2)}(z)|^2+|y_n^{(3)}(z)|^2
&=
\frac{n+1}{2n+1}|y_{n-1}(z)|^2
+
\frac{n}{2n+1}|y_{n+1}(z)|^2,\\
\frac{n(n+1)}{(2n+1)^2}y_n^{(2)}(z)j_n^{(2)}(z)+y_n^{(3)}(z)j_n^{(3)}(z)
&=
\frac{n+1}{2n+1}y_{n-1}(z)j_{n-1}(z)
+
\frac{n}{2n+1}y_{n+1}(z)j_{n+1}(z),\\
\frac{n(n+1)}{(2n+1)^2}y_n^{(2)}(z)j_n^{(2)}(z)^*+y_n^{(3)}(z)j_n^{(3)}(z)^*
&=
\frac{n+1}{2n+1}y_{n-1}(z)j_{n-1}(z)^*
+
\frac{n}{2n+1}y_{n+1}(z)j_{n+1}(z)^*.
\end{align}
By using the definition of the inner product, the normalization coefficients, where $V$ is the sphere with radius $R_1$ and $S$ is the sphere with radius $R_2$, are defined as follows:
\begin{align}
\mathcal{N}_{V,\mathbf{V}_{nml}}(k)^2
&=
\mathcal{N}_{V,\mathbf{V}_{nml}^\star}(k)^2
=
\left\langle
\mathbf{V}_{nml}(k,\cdot),\mathbf{V}_{nml}(k,\cdot)
\right\rangle_V
=
\left\langle
\mathbf{V}_{nml}^\star(k,\cdot),\mathbf{V}_{nml}^\star(k,\cdot)
\right\rangle_V
=
n(n+1)I_{n,l}^{jj^*}(k,R_1),\\
\mathcal{N}_{S,\mathbf{U}_{nml}}(k)^2
&=
\mathcal{N}_{S,\mathbf{U}_{nml}^\star}(k)^2=
\begin{cases}
n(n+1)\left\lvert h_n^{(1)}(kR_2)\right\rvert^2
,&\text{~if~}l=1,\\
n(n+1)
\left[
\dfrac{n+1}{2n+1}\left\lvert h_{n-1}^{(1)}(kR_2)\right\rvert^2
+
\dfrac{n}{2n+1}\left\lvert h_{n+1}^{(1)}(kR_2)\right\rvert^2
\right]
,&\text{~if~}l=2.
\end{cases}
\end{align}

\section{Fields inside the dielectric sphere}\label{sec_appendix_field}

By using the decomposition of DGF in Section. \ref{sec_preliminaries_dyadic}, DGF for $\mathbf{r},\mathbf{r}'\in V$ can be represented as
\begin{align}
\obf{G}(\mathbf{r},\mathbf{r}')
&=
ik_1
\sum_{nl}\frac{1}{n(n+1)}\obf{g}_{nl}(\mathbf{r},\mathbf{r}')-\frac{\hat{\mathbf{r}}\hat{\mathbf{r}}^T}{ k_1^2}\delta(\mathbf{r}-\mathbf{r}'),
\end{align}
where for $\mathbf{r},\mathbf{r}'\in V$,
\begin{align}
\obf{g}_{nl}(\mathbf{r},\mathbf{r}')=
\begin{cases}
\sum_{m}
\left[i\mathbf{W}_{nml}(k_1,\mathbf{r})+(1+\mathcal{R}_{n,l})\mathbf{V}_{nml}(k_1,\mathbf{r})\right]
\mathbf{V}_{nml}^\star(k_1,\mathbf{r}')^H, \text{~if~}r\ge r',\\
\sum_{m}
\mathbf{V}_{nml}(k_1,\mathbf{r})
\left[-i\mathbf{W}_{nml}^\star(k_1,\mathbf{r}')+(1+\mathcal{R}_{n,l})^*\mathbf{V}_{nml}^\star(k_1,\mathbf{r}')\right]^H,\text{~if~}r\le r'.
\end{cases}
\end{align}
Let us assume the electric field $\mathbf{E}_{nml}$ is generated due to the source $\mathbf{J}_{nml}(\mathbf{r})\triangleq \mathbf{v}_{nml}^\star(k_1,\mathbf{r})$ in $V$, i.e.,
\begin{align}
\mathbf{E}_{nml}(\mathbf{r})
&=
-\frac{\omega\mu_0}{\sqrt{n(n+1)I_{n,l}^{jj^*}}}
\bigg\{
\frac{k_1}{n(n+1)}
\mathbf{C}_{nml}(\mathbf{r})
+
\frac{i}{k_1^2}[\hat{\mathbf{r}}^T \mathbf{V}_{nml}^\star(k_1,\mathbf{r})]\hat{\mathbf{r}}
\bigg\},
\end{align}
where $\mathbf{C}_{nml}(\mathbf{r})\triangleq\mathbf{C}_{1,nml}(\mathbf{r})+i\mathbf{C}_{2,nml}(\mathbf{r})$ for
\begin{align}
\mathbf{C}_{1,nml}(\mathbf{r})
&\triangleq
\mathbf{V}_{nml}(k_1,\mathbf{r})(1+\mathcal{R}_{n,l})\left\langle\mathbf{V}_{nml}^\star,\mathbf{V}_{nml}^\star\right\rangle_V
\\
\mathbf{C}_{2,nml}(\mathbf{r})
&\triangleq
\mathbf{W}_{nml}(k_1,\mathbf{r})
\left\langle\mathbf{V}_{nml}^\star,\mathbf{V}_{nml}^\star\right\rangle_{V(r)}+
\mathbf{V}_{nml}(k_1,\mathbf{r})
[
\left\langle\mathbf{W}_{nml}^\star,\mathbf{V}_{nml}^\star\right\rangle_V
-
\left\langle\mathbf{W}_{nml}^\star,\mathbf{V}_{nml}^\star\right\rangle_{V(r)}
]
.
\end{align}
Note that the argument $(k_1,\cdot)$ of the spherical vector waves are omitted in this section if there is no confusion and $V(r)$ is a sphere with radius $r$ that is centered at origin. 
Also, the argument of inner product is omitted if it is equal to $(k_1,R_1)$. 
By using the definitions and properties in Appendix~\ref{sec_appendix_definitions}, we have
\begin{align}
\mathbf{C}_{1,nml}(\mathbf{r})
&=
\begin{cases}
\left(n(n+1)\right)^{3/2}
\mathbf{A}_{nm}^{(1)}(\theta,\phi)
\left(
(1+\mathcal{R}_{n,1})j_n^{(1)}I_{n,1}^{jj^*}(k_1,R_1)
\right),
&\text{~if $l=1$},\\
\dfrac{\left(n(n+1)\right)^2}{2n+1}
\mathbf{A}_{nm}^{(2)}(\theta,\phi)
\left(
(1+\mathcal{R}_{n,2})j_n^{(2)}I_{n,2}^{jj^*}(k_1,R_1)
\right)\\
+
\left(n(n+1)\right)^{3/2}
\mathbf{A}_{nm}^{(3)}(\theta,\phi)
\left(
(1+\mathcal{R}_{n,2})j_n^{(3)}I_{n,2}^{jj^*}(k_1,R_1)
\right),
&\text{~if $l=2$}.
\end{cases}\label{eq:appendix:ccc1}
\end{align}
and
\begin{align}
\mathbf{C}_{2,nml}(\mathbf{r})
&=
\begin{cases}
\left(n(n+1)\right)^{3/2}
\mathbf{A}_{nm}^{(1)}(\theta,\phi)
\left(
y_n^{(1)}
I_{n,1}^{jj^*}
+
j_n^{(1)}
[I_{n,1}^{yj^*}(k_1,R_1)-I_{n,1}^{yj^*}]
\right),
&\text{~if $l=1$},\\
\dfrac{\left(n(n+1)\right)^2}{2n+1}
\mathbf{A}_{nm}^{(2)}(\theta,\phi)
\left(
y_n^{(2)}
I_{n,2}^{jj^*}
+
j_n^{(2)}
[I_{n,2}^{yj^*}(k_1,R_1)-I_{n,2}^{yj^*}]
\right)\\
+
\left(n(n+1)\right)^{3/2}
\mathbf{A}_{nm}^{(3)}(\theta,\phi)
\left(
y_n^{(3)}
I_{n,2}^{jj^*}
+
j_n^{(3)}
[I_{n,2}^{yj^*}(k_1,R_1)-I_{n,2}^{yj^*}]
\right),
&\text{~if $l=2$}.
\end{cases}\\
&=
\begin{cases}
\left(n(n+1)\right)^{3/2}
\mathbf{A}_{nm}^{(1)}(\theta,\phi)
\left(
j_n^{(1)}I_{n,1}^{yj^*}(k_1,R_1)
+
[y_n^{(1)}I_{n,1}^{jj^*}-j_n^{(1)}I_{n,1}^{yj^*}]
\right),
&\text{~if $l=1$},\\
\dfrac{\left(n(n+1)\right)^2}{2n+1}
\mathbf{A}_{nm}^{(2)}(\theta,\phi)
\left(
j_n^{(2)}I_{n,2}^{yj^*}(k_1,R_1)
+
[y_n^{(2)}I_{n,2}^{jj^*}-j_n^{(2)}I_{n,2}^{yj^*}]
\right)\\
+
\left(n(n+1)\right)^{3/2}
\mathbf{A}_{nm}^{(3)}(\theta,\phi)
\left(
j_n^{(3)}I_{n,2}^{yj^*}(k_1,R_1)
+
[y_n^{(3)}I_{n,2}^{jj^*}-j_n^{(3)}I_{n,2}^{yj^*}]
\right),
&\text{~if $l=2$},
\end{cases}\label{eq:appendix:ccc2}
\end{align}
where the argument $(k_1,r)$ of $I_{n,l}$'s and the argument $k_1r$ of the spherical Bessel functions are omitted for simplicity. 
Here, we derive
\begin{align}
y_n^{(1)}I_{n,1}^{jj^*}-j_n^{(1)}I_{n,1}^{yj^*}
&=
\frac{(j_n^{(1)})^*-E_{n,1}(k_1)j_n^{(1)}}{4ik_1k_1'k_1''}\label{eq:appendix:minus1}\\
y_n^{(2)}I_{n,2}^{jj^*}-j_n^{(2)}I_{n,2}^{yj^*}
&=
\frac{k_1^{-2}(k_1^*)^2(j_n^{(2)})^*-E_{n,2}(k_1)j_n^{(2)}}{4ik_1k_1'k_1''},\label{eq:appendix:minus2}\\
y_n^{(3)}I_{n,2}^{jj^*}-j_n^{(3)}I_{n,2}^{yj^*}
&=
\frac{(j_n^{(3)})^*-E_{n,2}(k_1)j_{n}^{(3)}}{4ik_1k_1'k_1''}.\label{eq:appendix:minus3}
\end{align}
for 
\begin{align}
E_{n,l}(k)
\triangleq
\begin{cases}
k^{-n}(k^*)^n,
&\text{~if~}l=1,\\
\dfrac{n+1}{2n+1}E_{n-1,1}(k)+\dfrac{n}{2n+1}E_{n+1,1}(k),
&\text{~if~}l=2,
\end{cases}
\end{align}
by using differentiation, recurrence formula and Wronskian properties of spherical Bessel functions such that 
\begin{align}
\frac{\partial j_n(z)}{\partial z}
&=
\frac{j_{n-1}(z)-j_{n+1}(z)}{2}-\frac{j_n(z)}{2z}\\
j_{n-1}(z)+j_{n+1}(z)
&=
(2n+1)z^{-1}j_n(z),\\
j_{n+1}(z)y_n(z)-j_n(z)y_{n+1}(z)
&
=z^{-2},\\
j_{n+2}(z)y_n(z)-j_n(z)y_{n+2}(z)
&
=(2n+3)z^{-3},\\
j_{n+3}(z)y_n(z)-j_n(z)y_{n+3}(z)
&
=(2n+3)(2n+5)z^{-4}-z^{-2}.
\end{align}
By using \eqref{eq:appendix:ccc1}, \eqref{eq:appendix:ccc2}, \eqref{eq:appendix:minus1}, \eqref{eq:appendix:minus2} and \eqref{eq:appendix:minus3},  the electric field is derived as
\begin{align}
\mathbf{E}_{nml}(\mathbf{r})
&=
\begin{cases}
-\dfrac{\omega\mu_0}{\sqrt{I_{n,1}^{jj^*}(k_1,R_1)}}
\mathbf{A}_{nm}^{(1)}(\theta,\phi)
\bigg(
\mathcal{D}_{n,1}j_n^{(1)}+\dfrac{(j_n^{(1)})^*-\mathcal{E}_{n,1}j_n^{(1)}}{4k_1'k_1''}
\bigg)
,
&\text{~if $l=1$},\\
-\dfrac{\omega\mu_0}{\sqrt{I_{n,2}^{jj^*}(k_1,R_1)}}
\bigg\{
\dfrac{\sqrt{n(n+1)}}{2n+1}\mathbf{A}_{nm}^{(2)}(\theta,\phi)
\bigg(
\mathcal{D}_{n,2}j_n^{(2)}+\dfrac{(j_n^{(2)})^*-\mathcal{E}_{n,2}j_n^{(2)}}{4k_1'k_1''}
\bigg)\\
+
\mathbf{A}_{nm}^{(3)}(\theta,\phi)
\bigg(
\mathcal{D}_{n,2}j_n^{(3)}+\dfrac{(j_n^{(3)})^*-\mathcal{E}_{n,2}j_n^{(3)}}{4k_1'k_1''}
\bigg)
\bigg\}
,
&\text{~if $l=2$},
\end{cases}\\
&=
\begin{cases}
-\dfrac{\omega\mu_0}{\sqrt{I_{n,1}^{jj^*}(k_1,R_1)}}
\mathbf{A}_{nm}^{(1)}(\theta,\phi)
\bigg(
\mathcal{F}_{n,1}j_n^{(1)}+\dfrac{(j_n^{(1)})^*}{4k_1'k_1''}
\bigg)
,
&\text{~if $l=1$},\\
-\dfrac{\omega\mu_0}{\sqrt{I_{n,2}^{jj^*}(k_1,R_1)}}
\bigg\{
\dfrac{\sqrt{n(n+1)}}{2n+1}\mathbf{A}_{nm}^{(2)}(\theta,\phi)
\bigg(
\mathcal{F}_{n,2}j_n^{(2)}+\dfrac{(j_n^{(2)})^*}{4k_1'k_1''}
\bigg)\\
+
\mathbf{A}_{nm}^{(3)}(\theta,\phi)
\bigg(
\mathcal{F}_{n,2}j_n^{(3)}+\dfrac{(j_n^{(3)})^*}{4k_1'k_1''}
\bigg)
\bigg\}
,
&\text{~if $l=2$},
\end{cases}\\
&=
-\dfrac{\omega\mu_0}{\mathcal{N}_{V,\mathbf{V}_{nml}}(k_1)}
\bigg(
\mathcal{F}_{n,l}\mathbf{V}_{nml}(k_1,\mathbf{r})+\dfrac{1}{4k_1'k_1''}\mathbf{V}_{nml}(k_1^*,\mathbf{r})
\bigg),
\end{align}
where $\mathcal{D}_{n,l}, \mathcal{E}_{n,l}$ and $\mathcal{F}_{n,l}$ are defined in \textbf{Theorem~\ref{sec_single_equivalent_theorem}} 
and the arguments $k_1$ of $\mathcal{D}_{n,l}, \mathcal{E}_{n,l}$ and $\mathcal{F}_{n,l}$ are omitted for simplicity. Also, note that $\mathbf{V}_{nml}^\star(k_1,\mathbf{r})=\mathbf{V}_{nml}(k_1^*,\mathbf{r})$ since $j_{n}^*(k_1r)=j_n(k_1^*r)$~\cite{abramowitz1966handbook}. 
In addition, by using $\mathbf{H}_{nml}=\nabla\times\mathbf{E}_{nml}/(i\omega\mu_0)$ and $\nabla\times\mathbf{V}_{nml}(k,\mathbf{r})=k\mathbf{V}_{nm,3-l}(k,\mathbf{r})$ for any $k$~\cite[p.36]{kristensson2014spherical}, the magnetic field $\mathbf{H}_{nml}$ in $V$ due to the defined source is 
\begin{align}
\mathbf{H}_{nml}(\mathbf{r})
&=
\dfrac{i}{\mathcal{N}_{V,\mathbf{V}_{nml}}(k_1)}
\bigg(
k_1\mathcal{F}_{n,l}\mathbf{V}_{nm,3-l}(k_1,\mathbf{r})+\dfrac{k_1^*}{4k_1'k_1''}\mathbf{V}_{nm,3-l}(k_1^*,\mathbf{r})
\bigg),\forall n,m,l. 
\end{align}

\section{Noise statistics}\label{sec_appendix_noise}
For the noise statistics at the reverse channel of the single-user case, we have to derive
\begin{align}
-i
\int_V\int_V \mathbf{v}_{nml}(k_1,\mathbf{r})^H\obf{G}(\mathbf{r},\mathbf{r}')\mathbf{v}_{nml}(k_1,\mathbf{r}')d\mathbf{r}d\mathbf{r}',\label{sec_appendix_noise_eq_1}
\end{align}
where the decomposition of DGF is given in Appendix~\ref{sec_appendix_field}.
Note that we calculate the double integral by first doing the integral over $\mathbf{r}$ and then over $\mathbf{r}'$. 
Let us first calculate
\begin{align}
-i
\int_V
\mathbf{v}_{nml}(k_1,\mathbf{r})^H\obf{G}(\mathbf{r},\mathbf{r}')d \mathbf{r}
=
\frac{1}{\sqrt{\mathcal{N}_{V,\mathbf{V}_{nml}}(k_1)}}
\left\{
\frac{k_1}{n(n+1)}\mathbf{R}_{nml}(\mathbf{r}')
+
\frac{i}{k_1^2}[\mathbf{V}_{nml}(k_1,\mathbf{r}')^H\hat{\mathbf{r}}]\hat{\mathbf{r}}^T
\right\},
\end{align}
where
\begin{align}
\mathbf{R}_{nml}(\mathbf{r}')
&\triangleq
\mathbf{R}_{1,nml}(\mathbf{r}')
+
i\mathbf{R}_{2,nml}(\mathbf{r}')
\end{align}
for
\begin{align}
\mathbf{R}_{1,nml}(\mathbf{r}')
&\triangleq
(1+\mathcal{R}_{n,l})\left\langle\mathbf{V}_{nml},\mathbf{V}_{nml}\right\rangle_V\mathbf{V}_{nml}^\star(k_1,\mathbf{r}')^H,
\\
\mathbf{R}_{2,nml}(\mathbf{r}')
&\triangleq
\left\langle\mathbf{V}_{nml},\mathbf{V}_{nml}\right\rangle_{V(r')}
\mathbf{W}_{nml}^\star(k_1,\mathbf{r}')^H+
[
\left\langle\mathbf{V}_{nml},\mathbf{W}_{nml}\right\rangle_V
-
\left\langle\mathbf{V}_{nml},\mathbf{W}_{nml}\right\rangle_{V(r')}
]
\mathbf{V}_{nml}^\star(k_1,\mathbf{r}')^H,
\end{align}
where $V(r)$ is a sphere with radius $r$ that is centered at origin. 
By using the definitions and properties in Appendix~\ref{sec_appendix_definitions}, we have
\begin{align}
\mathbf{R}_{1,nml}(\mathbf{r})
&=
\begin{cases}
\left(n(n+1)\right)^{3/2}
\left(
(1+\mathcal{R}_{n,1})j_n^{(1)}I_{n,1}^{jj^*}(k_1,R_1)
\right)
\mathbf{A}_{nm}^{(1)}(\theta,\phi)^H,
&\text{~if $l=1$},\\
\dfrac{\left(n(n+1)\right)^2}{2n+1}
\left(
(1+\mathcal{R}_{n,2})j_n^{(2)}I_{n,2}^{jj^*}(k_1,R_1)
\right)
\mathbf{A}_{nm}^{(2)}(\theta,\phi)^H\\
+
\left(n(n+1)\right)^{3/2}
\left(
(1+\mathcal{R}_{n,2})j_n^{(3)}I_{n,2}^{jj^*}(k_1,R_1)
\right)
\mathbf{A}_{nm}^{(3)}(\theta,\phi)^H,
&\text{~if $l=2$}.
\end{cases}
\end{align}
and
\begin{align}
\mathbf{R}_{2,nml}(\mathbf{r})
&=
\begin{cases}
\left(n(n+1)\right)^{3/2}
\left(
j_n^{(1)}I_{n,1}^{yj^*}(k_1,R_1)
+
[y_n^{(1)}I_{n,1}^{jj^*}-j_n^{(1)}I_{n,1}^{yj^*}]
\right)
\mathbf{A}_{nm}^{(1)}(\theta,\phi)^H,
&\text{~if $l=1$},\\
\dfrac{\left(n(n+1)\right)^2}{2n+1}
\left(
j_n^{(2)}I_{n,2}^{yj^*}(k_1,R_1)
+
[y_n^{(2)}I_{n,2}^{jj^*}-j_n^{(2)}I_{n,2}^{yj^*}]
\right)
\mathbf{A}_{nm}^{(2)}(\theta,\phi)^H\\
+
\left(n(n+1)\right)^{3/2}
\left(j_n^{(3)}I_{n,2}^{yj^*}(k_1,R_1)+[y_n^{(3)}I_{n,2}^{jj^*}-j_n^{(3)}I_{n,2}^{yj^*}]\right)
\mathbf{A}_{nm}^{(3)}(\theta,\phi)^H,
&\text{~if $l=2$},
\end{cases}
\end{align}
where the argument $(k_1,r')$ of $I_{n,l}$'s and the argument $k_1r'$ of the spherical Bessel functions are omitted for simplicity. As a result, by using \eqref{eq:appendix:minus1}, \eqref{eq:appendix:minus2}, \eqref{eq:appendix:minus3} and the orthogonality of $\mathbf{A}_{nm}^{(1)}, \mathbf{A}_{nm}^{(2)}, \mathbf{A}_{nm}^{(3)}$ in \eqref{eq:appendix:ortho}, \eqref{sec_appendix_noise_eq_1} is equal to 
\begin{align}
\frac{\mathcal{F}_{n,l}I_{n,l}^{jj}(k_1,R_1)}{I_{n,l}^{jj^*}}
+
\frac{1}{4k_1'k_1''}.
\end{align}

\section{Proof on the efficiency for the lossless case}\label{sec_appendix_lossless_efficiency}

Let us assume that $l=1$.
From the definition of the scattering coefficients, we have
\begin{align}
1+\mathcal{R}_{n,1}
=
-i
\frac{\mathcal{A}_n}{\mathcal{B}_n},
\mathcal{T}_{n,1}
=
\frac{i}{\mathcal{B}_n},
\end{align}
where $\mathcal{A}_n\triangleq\hat{Y}_n(\mathcal{C}z)\hat{H}_n^{(1)'}(z)-\mathcal{C}\hat{Y}_n'(\mathcal{C}z)\hat{H}_n^{(1)}(z), 
\mathcal{B}_n\triangleq\hat{J}_n(\mathcal{C}z)\hat{H}_n^{(1)'}(z)-\mathcal{C}\hat{J}_n'(\mathcal{C}z)\hat{H}_n^{(1)}(z)$
for $\hat{Y}_n^{(1)}(\rho)\triangleq \rho y_n(\rho)$, $\hat{Y}_n'(\rho)\triangleq \frac{d}{d\rho}\hat{Y}_n(\rho)$.
Then,
\begin{align}
\frac{\mathcal{C}|\mathcal{T}_{n,1}|^2}{\Re\{1+\mathcal{R}_{n,1}\}}
=
\frac{2i\mathcal{C}}{\mathcal{A}_n\mathcal{B}_n^*-\mathcal{A}_n^*\mathcal{B}_n}
=
\frac{\mathcal{C}}{\Im\{\mathcal{A}_n\mathcal{B}_n^*\}}.
\end{align}
Here, $\mathcal{A}_n\mathcal{B}_n^*$ is equal to
\begin{align}
&\hat{Y}_n(\mathcal{C}z)\hat{J}_n(\mathcal{C}z)|\hat{H}_n^{(1)'}(z)|^2+|\mathcal{C}|^2\hat{Y}_n'(\mathcal{C}z)\hat{J}_n'(\mathcal{C}z)|\hat{H}_n^{(1)}(z)|^2\\
&-
\mathcal{C}\hat{J}_n(\mathcal{C}z)\hat{Y}_n'(\mathcal{C}z)\hat{H}_n^{(1)}(z)(\hat{H}_n^{(1)'}(z))^*-\mathcal{C}\hat{J}_n'(\mathcal{C}z)\hat{Y}_n(\mathcal{C}z)\hat{H}_n^{(1)'}(z)(\hat{H}_n^{(1)}(z))^*.
\end{align}
Since $z$ is assumed to be a real number, we have
\begin{align}
\hat{H}_n^{(1)}(z) (\hat{H}_n^{(1)'}(z))^*
&=(\hat{J}_n(z)\hat{J}_n'(z)+\hat{Y}_n(z)\hat{Y}_n'(z))+i,
\end{align}
which follows by using
$\hat{J}_n'(z)\hat{Y}_n(z)-\hat{J}_n(z)\hat{Y}_n'(z)=1$.
Also, since $\mathcal{C}\triangleq k_1/k_0$ is a real number, the imaginary part of $\mathcal{A}_n\mathcal{B}_n^*$ is equal to
\begin{align}
\Im\{-\mathcal{C}\hat{J}_n(\mathcal{C}z)\hat{Y}_n'(\mathcal{C}z)(\hat{J}_n\hat{J}_n'+\hat{Y}_n\hat{Y}_n'+i)
-\mathcal{C}\hat{J}_n'(\mathcal{C}z)\hat{Y}_n(\mathcal{C}z)(\hat{J}_n\hat{J}_n'+\hat{Y}_n\hat{Y}_n'+i)^*\}=\mathcal{C},
\end{align}
where $\hat{J}_n, \hat{Y}_n$ without arguement have their argument as $z$ and the Wronskian property is used. 
Therefore
\begin{align}
\frac{\mathcal{C}|\mathcal{T}_{n,1}|^2}{\Re\{1+\mathcal{R}_{n,1}\}}
=1.
\end{align}
The proof on the case $l=2$ can be done similarly.

\section{}\label{Appendix:scattering}\label{appendix:scattering}
The integral representation of the spherical Bessel functions are given as follows:
\begin{align}
h_n^{(1)}(\rho)
&=
-\frac{(\rho/2)^n}{n!}\int_{1}^{i\infty}e^{i\rho t}(1-t^2)^n dt,\\
h_n^{(2)}(\rho)
&=
\frac{(\rho/2)^n}{n!}\int_{-1}^{i\infty}e^{i\rho t}(1-t^2)^n dt,\\
j_n(\rho)
&=\frac{1}{2}\left\{h_n^{(1)}(\rho)+h_n^{(2)}(\rho)\right\}
=
\frac{1}{2}\frac{(\rho/2)^n}{n!}\int_{-1}^{1}e^{i\rho t}(1-t^2)^n dt.
\end{align}
For $n\gg1$ and $\rho\le n$, the approximation for the spherical Bessel functions is given as follows by using the method of steepest descent for integrals in~\cite{olver2014asymptotics}:
\begin{align}
j_n(\rho)
&\simeq
\frac{1}{2\rho}e^{\sqrt{n^2-\rho^2}}\left(\frac{n-\sqrt{n^2-\rho^2}}{\rho}\right)^n
\left(\frac{n-\sqrt{n^2-\rho^2}}{\sqrt{n^2-\rho^2}}\right)^{1/2},\\
n_n(\rho)
&
\simeq
-\frac{1}{\rho}e^{-\sqrt{n^2-\rho^2}}\left(\frac{n+\sqrt{n^2-\rho^2}}{\rho}\right)^n\left(\frac{n+\sqrt{n^2-\rho^2}}{\sqrt{n^2-\rho^2}}\right)^{1/2}.
\end{align}

First, consider the case when $\beta\triangleq k_0R_2/n<1$, which is in the reactive near-field region.
Since $k_0R_2=n\beta\le n$, for $n\gg1$, 
\begin{align}
j_n(k_0R_2)
&\simeq
C_1
\left[\frac{1}{n}\left(\frac{e^{\sqrt{1-\beta^2}}(1-\sqrt{1-\beta^2})}{\beta}\right)^n\right]
\triangleq
C_1
\left[\frac{1}{n}	(f_{1}(\beta))^n\right]
,\\
n_n(k_0R_2)
&\simeq
C_2
\left[\frac{1}{n}\left(\frac{1+\sqrt{1-\beta^2}}{\beta e^{\sqrt{1-\beta^2}}}\right)^n\right]
\triangleq
C_2
\left[\frac{1}{n}	(f_{2}(\beta))^n\right],
\end{align}
where $C_1\triangleq\frac{1}{2\beta}
\left(\frac{1-\sqrt{1-\beta^2}}{\sqrt{1-\beta^2}}\right)^{1/2}$ and $C_2\triangleq-\frac{1}{\beta}\left(\frac{1+\sqrt{1-\beta^2}}{\sqrt{1-\beta^2}}\right)^{1/2}$.
If $\beta=1$, $f_{1}(\beta)=f_{2}(\beta)=1$.
In addition, for $0<\beta<1$, both are positive and satisfy
\begin{align}
\frac{\partial f_{1}(\beta)}{\partial \beta}
=
\frac{(\beta^2-1+\sqrt{1-\beta^2})e^{\sqrt{1-\beta^2}}}{\beta}>0,
\frac{\partial f_{2}(\beta)}{\partial \beta}
=
\frac{(\beta^2-1+\sqrt{1-\beta^2})e^{\sqrt{1-\beta^2}}}{\beta}<0,
\end{align}
which shows that $0<f_1(\beta)<1, f_2(\beta)>1$.
Therefore, in the reactive near-field region, i.e., $0<\beta<1$, $|n_n(k_0R_2)|\rightarrow\infty, |j_n(k_0R_2)|\rightarrow 0$
as $n\rightarrow\infty$. By using such behavior, we have
\begin{align}
\frac{P_L}{1+P_S}
=
\frac{3(2n+1)\left(n_n^2(k_0R_2)+j_n^2(k_0R_2)\right)}{32+6(2n+1)\left(n_n^2(k_0R_2)-j_n^2(k_0R_2)\right)}
\rightarrow
\frac{1}{2}
\end{align}
as $n$ goes to infinity.

Second, consider the case when $\beta>1$, which is the outside of the reactive region. For this regime, the spherical Hankel function can be approximated as
\begin{align}
h_n^{(1)}(\rho)
\simeq
\frac{1}{\rho}
e^{i\sqrt{\rho^2-n^2}}
\left(\frac{n-i\sqrt{\rho^2-n^2}}{\rho}\right)^n
\left(\frac{n-i\sqrt{\rho^2-n^2}}{i\sqrt{\rho^2-n^2}}\right)^{1/2}
\end{align}
for $\rho\ge n\gg 1$, and thus, its absolute value satisfies
\begin{align}
|h_n^{(1)}(\rho)|
&\simeq
\frac{1}{\rho}
\left|\frac{n-i\sqrt{\rho^2-n^2}}{\rho}\right|^n
\left|\frac{n-i\sqrt{\rho^2-n^2}}{i\sqrt{\rho^2-n^2}}\right|^{1/2}
=\left(\frac{1}{\rho\sqrt{\rho^2-n^2}}\right)^{1/2}
\end{align}
for $\rho\ge n\gg 1$. For $k_0R_2=n\beta\ge n$, 
\begin{align}
|h_n^{(1)}(k_0R_2)|
&\simeq
\left(\frac{1}{n^2\beta\sqrt{\beta^2-1}}\right)^{1/2}
=
\frac{C_3}{n},
\end{align}
where $C_3\triangleq\left(\frac{1}{\beta\sqrt{\beta^2-1}}\right)^{1/2}$.
Therefore,
\begin{align}
\frac{P_L}{1+P_S}
=
\frac{3(2n+1)|h_n^{(1)}(k_0R_2)|^2}{32+6(2n+1)\left(n_n^2(k_0R_2)-j_n^2(k_0R_2)\right)}
\rightarrow
0
\end{align}
as $n\rightarrow\infty$.

\bibliographystyle{IEEEtran}
\bibliography{IEEEabrv,References_J1}

\end{document}